\newtheorem{definition}{Definition}
\newtheorem{corollary}{Corollary}
\newtheorem{proposition}{Proposition}
\newtheorem*{remark*}{Remark}
\tikzset{
noisy_1/.style={fill=red!50,line
width=1pt,inner xsep=0pt,inner ysep=0pt}
}
\tikzset{
noisy_2/.style={fill=orange!50,line
width=1pt, inner xsep=0pt,inner ysep=0pt}
}
\newcommand\k@t[1]{{|{#1}\rangle}}
\renewcommand{\ket}[1]{\left| #1 \right\rangle}
\begin{document}

\title{Efficient simulation of logical magic state preparation protocols}

\author{Samyak Surti}
\email{spsurti@ucdavis.edu}
\affiliation{Department of Computer Science, University of California, Davis, CA, 95616, USA }
\author{Lucas Daguerre}
\affiliation{Department of Physics and Astronomy, University of California, Davis, CA, 95616, USA}
\author{Isaac H. Kim}
\affiliation{Department of Computer Science, University of California, Davis, CA, 95616, USA }
\date{\today}

\begin{abstract}
 
Developing space- and time-efficient logical magic state preparation protocols will likely be an essential step toward building a large-scale fault-tolerant quantum computer. Motivated by this need, we introduce a scalable method for simulating logical magic state preparation protocols under the standard circuit-level noise model. When applied to protocols based on code-switching, magic state cultivation, and magic state distillation, our method yields a complexity polynomial in (i) the number of qubits and (ii) the nonstabilizerness, e.g., stabilizer rank or Pauli rank, of the target encoded magic state. The efficiency of our simulation method is rooted in a curious fact: every circuit-level Pauli error in these protocols propagates to a Clifford error at the end. This property is satisfied by a large family of protocols, including those that repeatedly measure a transversal Clifford that squares to a Pauli. We provide a proof-of-principle numerical simulation that prepares a magic state using such logical Clifford measurements. Our work enables practical simulation of logical magic state preparation protocols without resorting to approximations or resource-intensive state-vector simulations.

\end{abstract}

\maketitle


\tableofcontents

\section{Introduction}
\label{sec:intro}

\begin{figure*}[t!]
    \centering
    \includegraphics[width=\linewidth]{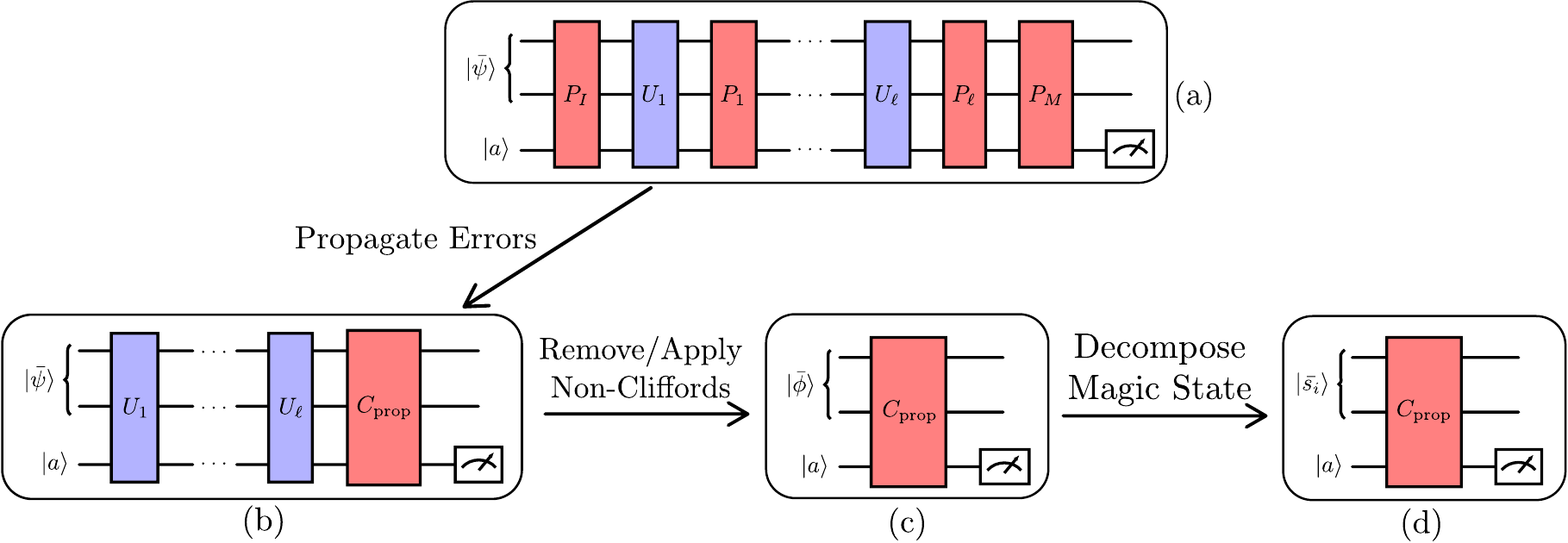}
    \caption{\textbf{Schematic representation of the simulation technique for a magic state preparation (MSP) protocol.} (a) This is the general form of the MSP protocol considered in this paper. Here $U_1,\ldots, U_\ell$ are layers of unitary gates that comprise the protocol. Circuit-level Pauli noise is denoted $P_1,\ldots, P_{\ell}$, $P_I$, and $P_M$, representing the errors associated with $U_1,\ldots, U_\ell$, initialization, and measurement, respectively. Upon measuring the ancilla register (denoted by $|a\rangle$), one either postselects or error-corrects the logical magic state. (b) For our circuit family, all circuit-level Pauli errors can be propagated to an end-of-circuit Clifford error $C_\mathrm{prop}$. (c) Because a noiseless circuit would have prepared a noiseless magic state, the original noisy protocol circuit can be recast as a Clifford error $C_\mathrm{prop}$ acting on a noiseless logical magic state $|\bar{\phi}\rangle \otimes |a\rangle= \Pi_i U_i(|\bar{\psi}\rangle \otimes |a\rangle)$. (d) To enable efficient simulation, the logical magic state is decomposed into a linear combination of stabilizer states $|\bar{s}_i\rangle$. The fidelity of the output noisy logical magic state can be obtained by simulating an ensemble of logical stabilizer states subjected to stochastic Clifford errors and measurements in the Pauli basis; see Sec.~\ref{sec:sim_technique} for more details.}
    \label{fig:simulation-steps}
\end{figure*}

In the textbook model of quantum computation, one often considers a continuous gate set consisting of arbitrary one- and two-qubit gates. However, to combat noise present in quantum computers, one must use quantum error correction~\cite{Shor1995} to encode quantum information in a logical subspace. This logical subspace must not only protect quantum information, but also support quantum computation with a set of fault-tolerant logical gates, i.e., logical gates that are robust to noise~\cite{aharonov1997fault}.

Such fault-tolerant gates are often classified into two categories: Clifford and non-Clifford gates. While there are many methods for implementing Clifford gates~\cite{dennis2002,Horsman_2012,Bombin2006,Cohen2022,Quintavalle2023partitioningqubits}, non-Clifford gates are more difficult to realize. To this end, non-Clifford gates are often carried out by preparing a special resource state called a \emph{magic state}~\cite{Bravyi_2005}. For many existing fault-tolerant quantum computing architectures, it is commonly assumed that preparing magic states is significantly more costly than Clifford gates \cite{Bravyi_2005,meier2012magic,Bravyi_2012,Jones_2013a,fowler_surface_2013,Jones_2013b,duclos_cianci_distillation_2013,Duclos_Cianci_2015,Campbell_2017,O_Gorman_2017,Haah2018codesprotocols,Gidney2019efficientmagicstate}. For this reason, quantum algorithm development had been rooted in a model in which non-Clifford gates are significantly more costly than their Clifford counterparts~\cite{Reiher2017}.

However, in recent years there has been significant progress in improving the efficiency of magic state preparation (MSP) protocols~\cite{Li_2015,Yoder_2017,Chao_2018a,Chao_2018b,Chamberland:2020axi,Itogawa_2025,Gidney:2024alh,Daguerre:2024gjd,sahay2025foldtransversalsurfacecodecultivation,claes2025cultivatingtstatessurface,Vaknin:2025pbp,Chen:2025imz}. A common feature of these protocols is carefully designed circuits that can efficiently detect errors during the preparation of a magic state. With this progress, the cost of preparing logical magic states is getting closer to that of preparing logical Clifford gates. Moreover, a series of experimental demonstrations for preparing logical magic states have been performed on various quantum devices~\cite{Lacroix:2024vls,Rodriguez:2024bhh,pogorelov2024experimentalfaulttolerantcodeswitching,Kim:2024vmw,Ye:2023hxg,Gupta:2023zei,Postler_2022,Anderson2021,dasu2025breakingmagicdemonstrationhighfidelity,Rosenfeld:2025xvf,Daguerre:2025boq}, some of which are based on these new protocols~\cite{Daguerre:2025boq}. These recent developments call for a careful study of more general MSP 
protocols, which will in turn inform future architectures for fault-tolerant quantum computers and algorithms. 

Unfortunately, understanding the exact performance of these MSP protocols is significantly more challenging than for protocols implementing Clifford gates. For the latter, classical simulation can be carried out using the standard Gottesman-Knill theorem~\cite{Gottesman:1998hu,Aaronson:2004xuh}. However, MSP protocols involve non-Clifford gates and, therefore, cannot be simulated this way. In principle, extended stabilizer simulation methods may be used to simulate these protocols~\cite{Bravyi:2016yhm,Bravyi:2016uqx,Bravyi:2018ugg,Bu:2019qed,Masot-Llima:2024doz,Zhang:2024jlz,Garner:2025xts,Aziz:2025cnu}. However, the complexity of these simulation methods generally scales exponentially with the number of non-Clifford gates. In conjunction, the number of non-Clifford gates used in these protocols scales with the code size, ultimately making the cost of simulation exponential in the number of qubits. To the best of our knowledge, there is no known efficient method for simulating these protocols.

In this paper, we propose an efficient and broadly applicable method to simulate MSP protocols. When applicable, our method yields a computational complexity scaling polynomially with the number of qubits and a quantity that quantifies the nonstabilizerness of the target encoded magic state, e.g., stabilizer rank~\cite{Bravyi:2018ugg} or Pauli rank~\cite{Bu:2019qed}. As a concrete example, our method can be used to efficiently simulate the magic state cultivation protocol~\cite{Gidney:2024alh}. Using a distance-$d$ color code~\cite{Bombin2006}, this protocol applies $\Omega(d^2)$ non-Clifford gates on $\Theta(d^2)$ qubits. As such, the simulation cost would naively scale exponentially in $d$. However, because the target encoded magic state has a stabilizer rank of $2$ independent of $d$, our method can enable the simulation of this protocol in time polynomial in $d$. 

More broadly, our method is applicable to (i) protocols based on a measurement of certain logical Clifford operator and (ii) protocols based on a transversal non-Clifford gate in the third level of the Clifford hierarchy. Note that there are many different ways of implementing these protocols in practice. They may be used at the logical level to perform magic state distillation~\cite{Bravyi_2005,meier2012magic}. Alternatively, one may opt to employ them at the physical level~\cite{Goto:2016gss,Chamberland_2018,Chamberland:2019ehl,Chamberland:2020axi,Itogawa_2025,Gidney:2024alh,Daguerre:2024gjd,Butt_2024}, using flag qubits to suppress errors~\cite{Yoder_2017,Chao_2018a,Chao_2018b}. Either way, the simulation cost ends up being polynomial in the number of qubits and the stabilizer/Pauli rank~\cite{Bravyi:2018ugg,Bu:2019qed} of the target encoded magic state.

The main insight that enables efficient simulation is the special structure of the propagated circuit-level error. More precisely, we observe that circuit-level Pauli errors in these protocols propagate to a Clifford error at the end of the protocol [Fig.~\ref{fig:simulation-steps}], despite the presence of many non-Clifford gates. In turn, noisy circuits containing non-Clifford gates can be effectively described by a noiseless logical magic state subjected to a Clifford error. We note that this fact was used in the simulation of protocols based on transversal non-Clifford gates~\cite{Daguerre:2024gjd}. However, for the protocols based on measurements of a logical Clifford, this fact is much less obvious.

To better formalize how errors propagate in these protocols, we develop a theory of \textit{Pauli-square-soot Cliffords} (PSCs). These are (non-Pauli) Clifford unitaries that square to a Pauli. Indeed, the majority of magic states considered in the literature can be viewed as $+1$-eigenstates of some PSC, hence motivating their study. We prove several mathematical facts concerning the structure of PSCs, which can be used to study error propagation through MSP protocols based on measuring logical PSCs in a systematic way. Additionally, we believe PSCs may be of interest independent of simulating MSP protocols.

Compared to existing simulation approaches, ours provide several advantages. Let us focus on cultivation-based protocols~\cite{Gidney:2024alh,sahay2025foldtransversalsurfacecodecultivation,claes2025cultivatingtstatessurface,Vaknin:2025pbp,Chen:2025imz}, which are relatively more nontrivial to simulate. In these works, the authors use the fidelity of preparing certain logical stabilizer states as a proxy for determining the fidelity of target logical magic states. For instance, Ref.~\cite{Gidney:2024alh} estimates fidelities for the logical $\ket{T} = T\ket{+}$ and $\ket{Y_+} = S\ket{+}$ state for the distance-$3$ codes using state-vector simulation. The ratio between the two is then used to deduce the fidelity of the logical $\ket{T}$ state at higher code distances. However, the assumption that this ratio remains approximately the same as a function of increasing code distance is a nontrivial one. By contrast, our method is free of such ambiguities. More recently, some works also proposed a simulation method for magic state cultivation protocols based on the ZX-calculus formalism~\cite{wan2025cuttingstabiliserdecompositionsmagic,wan2025simulatemagicstatecultivation}, whereby non-Clifford ZX-diagrams are decomposed into superpositions of a few Clifford ZX-diagrams \cite{Sutcliffe_2024}. However, the number of Clifford ZX-diagrams in this decomposition depend on the circuit-level error configuration, in contrast to our method which employs a superposition of two logical stabilizers states irrespective of the error configuration.

The rest of this paper is organized as follows. In Sec.~\ref{sec:preliminaries}, we introduce our notations and definitions. In Sec.~\ref{sec:example}, we describe a simple example that demonstrates the key ideas behind our approach. In Sec.~\ref{sec:controlled_clifford}, we introduce the notion of PSCs and describe their salient properties. In Sec.~\ref{sec:canonical_family}, we show that many existing MSP protocols in the literature, such as cultivation-like methods~\cite{Goto:2016gss,Chamberland:2019ehl,Chamberland:2020axi,Itogawa_2025,Gidney:2024alh,sahay2025foldtransversalsurfacecodecultivation,claes2025cultivatingtstatessurface,Vaknin:2025pbp}, logical magic state distillation~\cite{Bravyi_2005,meier2012magic,Bravyi_2012,Jones_2013a,fowler_surface_2013,Jones_2013b,duclos_cianci_distillation_2013,Duclos_Cianci_2015,Campbell_2017,O_Gorman_2017,Haah2018codesprotocols} and code-switching~\cite{Anderson:2014jvy,bombin2016dimensionaljumpquantumerror,Beverland2021,Butt_2024,Daguerre:2024gjd},  propagate circuit-level Pauli errors to an end-of-circuit Clifford error. In Sec.~\ref{sec:fid_method}, we describe our phase-insensitive method to estimate the fidelity of a target magic state. In Sec.~\ref{sec:sim_technique}, we provide a final overview of the simulation technique. In particular, as a proof-of-concept, we numerically simulate the protocol from Ref.~\cite{Goto:2016gss,Chamberland:2019ehl,Postler_2022} in Sec.~\ref{sub:sim-application-example}. We end with a discussion in Sec.~\ref{sec:discussion}.

\section{Preliminaries}
\label{sec:preliminaries}
We begin by introducing some useful definitions and formalisms. The $n$-qubit \textit{Pauli group}, denoted by $\mathcal{P}_n$, is generated by the $n$-fold tensor product of single-qubit Pauli matrices (including a global phase) $\mathcal{P}_n=\{\pm1,i\}\times\{I, X, Y,Z\}^{\otimes n}$. 

The \textit{Clifford hierarchy}~\cite{Gottesman_1999} is defined recursively for $k\geq 2$ as
\begin{equation}
    \mathcal{C}^{(k)}=\{V \in U(2^n): VPV^{\dagger} \in  \mathcal{C}^{(k-1)}\:,\:\forall P \in \mathcal{P}_n\}\:,
\end{equation}
where $\mathcal{C}^{(1)}=\mathcal{P}_n$ is the Pauli group, and $\mathcal{C}^{(2)}$ is the \textit{Clifford group}. Note that for $k\geq 3$, $\mathcal{C}^{(k)}$ does not form a group. Unitary gates that do not belong to $\mathcal{C}^{(2)}$ are \textit{non-Clifford gates}.

A \textit{Clifford circuit} is defined as a circuit of the following form. Without loss of generality, we consider an initial state prepared in a computational basis state $|0\rangle^{\otimes n}$. The gates in the circuit belong to the Clifford group $\mathcal{C}^{(2)}$ and measurements are performed in the Pauli basis. Moreover, \textit{Pauli stabilizer states} are states of the form $|\psi \rangle=U|0\rangle^{\otimes n}$ for $U\in \mathcal{C}^{(2)}$. These states are $+1$-eigenvectors of $UZ_iU^{\dagger} \in \mathcal{P}_n$ for $i=1,\dots,n$, where $Z_i$ acts nontrivially on the $i$-th qubit. The single-qubit Pauli stabilizer states are $|P \rangle \in \{|+\rangle,|-\rangle\}\cup \{|Y_{+}\rangle,|Y_{-}\rangle\}\cup \{|0\rangle,|1\rangle\}$, which are $\pm 1$ eigenvectors of $X$, $Y$ and $Z$, respectively. \textit{Controlled-gates} are unitary gates $C(U)$ (or $CU$) constructed from another unitary gate $U$ such that $C(U)=|0\rangle \langle 0| \otimes I + |1\rangle \langle 1| \otimes U$.

\textit{Magic states} are resource states that can be consumed to implement a non-Clifford gate~\cite{Bravyi_2005}, up to a Clifford correction. Importantly, a common approach for realizing a universal gate set is to augment the Clifford gate set with the preparation of such magic states. We note that a large class of magic states studied in the literature are $+1$-eigenstates of Clifford unitaries that we call PSC [Definition~\ref{def:square_root_pauli}]. These are Cliffords that square to a Pauli, whose salient features relevant to our work shall be described in Sec.~\ref{sec:controlled_clifford}.

For single-qubit magic states, the following examples are well-known:
\begin{equation}
\begin{aligned}
    |T\rangle &=T|+\rangle=\frac{1}{\sqrt{2}}(|0\rangle + e^{i\frac{\pi}{4}}|1\rangle), \\
    |H\rangle &=e^{-i\pi/8}SH|T\rangle\\
    &=\cos(\pi/8)|0\rangle+\sin(\pi/8)|1\rangle
    ,
    \label{eq:magic_states}
\end{aligned}
\end{equation}
which are +1-eigenstates of $TXT^{\dagger} = e^{-\frac{\pi}{4}i}SX$ and $H$, respectively. (We use the convention of $T=\text{diag}(1,e^{i\frac{\pi}{4}})$ and $S=\text{diag}(1,i)$).
For multi-qubit magic states, the following are standard examples
\begin{equation}
    |CCZ\rangle =CCZ|+\rangle^{\otimes 3}\quad,\quad |CS\rangle =CS|+\rangle^{\otimes 2}\:.
\end{equation}
Note that these are $+1$ eigenstates of $\{X_1CZ_{2,3}, X_2CZ_{1,3}, X_3CZ_{1,2} \}$ as well as $\{X_1S_2 CZ_{1,2}, X_2S_1 CZ_{1,2}\}$, respectively. These examples (except for $|H\rangle$) all have the form of $|V\rangle = V|+\rangle^{\otimes l}$, where $V$ is a diagonal gate in $V\in \mathcal{C}^{(3)}$~\cite{Bravyi:2018ugg}. These states are stabilized by the set of $C_i=VX_iV^{\dagger} \in \mathcal{C}^{(2)}$. Another example is $|CZ\rangle\propto\frac{I+CZ}{2}|+\rangle^{\otimes 2}$, which is a $+1$ eigenstate ~\cite{Gupta:2023zei,Davydova:2025ylx} of $CZ$.

All these examples can be written in the following form. Let $S_C=\langle U_i: U_i\text{ is PSC for }1\leq i\leq \ell \text{ and } [U_i,U_j]=0 \:\forall i,j\rangle$ be an abelian group generated by a commuting set of PSCs. These magic states can then be expressed as 
\begin{equation}
    |\phi\rangle \propto \sum_{c\in S_C} c|\Omega\rangle,\label{eq:eigen_PSC}
\end{equation}
where $|\Omega\rangle$ is a stabilizer state. It follows from the definition that $|\phi\rangle$ is a joint $+1$ eigenstate of all the elements of $S_C$.

The \textit{stabilizer rank}~\cite{Bravyi:2016uqx,Bravyi:2018ugg} of an $n$-qubit state $\ket{\phi}$ is the minimum number $q$ of stabilizer states $\ket{s_i}$ whose linear combination yields $\ket{\phi}$, with $\alpha_i \in \mathbb{C}$,
\begin{equation}
    \ket{\phi} = \sum_{i=1}^q \alpha_i \ket{s_i}\:.
\label{eq:stab_rak}
\end{equation}
From Eq.~\eqref{eq:eigen_PSC}, it follows that the stabilizer rank of a magic state is at most $|S_C|\leq 8^\ell$. For (possibly) mixed quantum states, a natural magic monotone to consider is \textit{Pauli rank}~\cite{Bu:2019qed}, which for an $n$-qubit state $\rho$ is the minimum number $p$ of nontrivial Paulis $P_i$ whose linear combination yields $\rho$,
\begin{equation}
    \rho = \frac{1}{2^n}I^{\otimes n}+\frac{1}{2^n}\sum_{i = 1}^{p} \beta_i P_i \:,
    \label{eq:rep_rho_Paulis}
\end{equation}
where $P_i \in \{ I,X, Y,Z\}^{\otimes n} \backslash \{I^{\otimes n}\}$ and $\beta_i=\text{Tr}(\rho P_i)$. Our definition of Pauli rank differs from~\cite{Bu:2019qed} by an additive factor of one since we only account for nontrivial Paulis in the expansion.

Lastly, we comment on the error model we consider. We will focus on arbitrary \textit{circuit-level} Pauli noise in which each quantum operation is modeled as a noiseless operation, preceded or followed by some stochastic Pauli error occurring with a given probability. For our simulations, we utilize the error model described in Appendix~\ref{app:error_model}.

\section{A Toy Example}
\label{sec:example}

In this section, we describe a toy example that demonstrates the key ideas behind our simulation method. This example is based on the $[[4,2,2]]$ code~\cite{meier2012magic} (also known as the $C_4$ code). Its stabilizer group is generated by $\{X_1X_2X_3X_4,Z_1Z_2Z_3Z_4\}$ and the logical Pauli operators are generated by $\{X_1X_2, X_1X_3, Z_1Z_2, Z_1Z_3\}$ [Fig.~\ref{fig:code422}].
\begin{figure}[!t]
    \centering
    \includegraphics[width=0.55\linewidth]{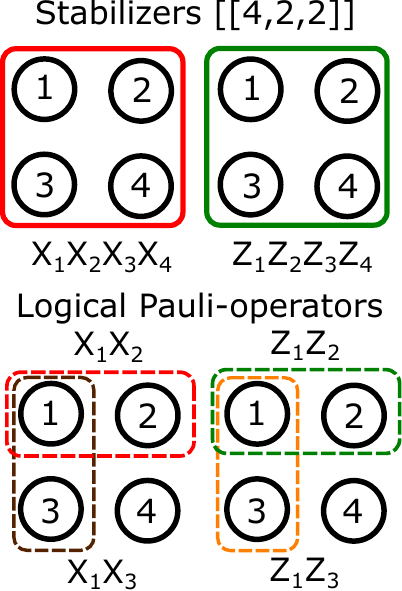}
    \caption{\textbf{Pictorial representation of the $\mathbf{[[4,2,2]]}$ code.} Each circle corresponds to a physical qubit with their respective labels. The stabilizer group generators (solid lines) and logical Pauli operators (dashed lines) are highlighted. }
    \label{fig:code422}
\end{figure}

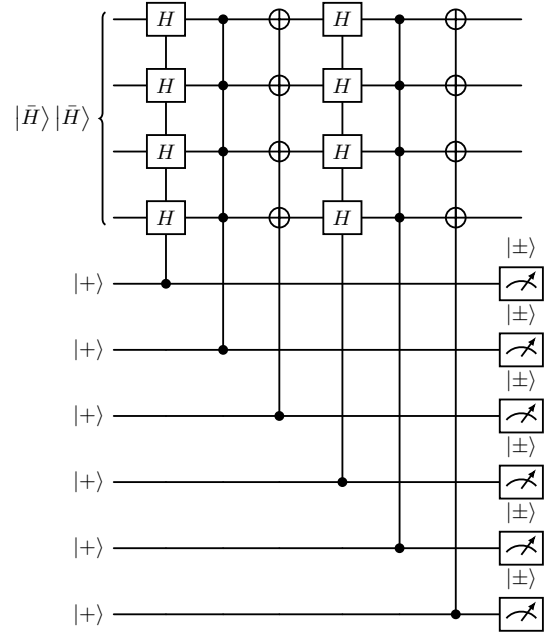
\begin{figure*}[!t]
    \centering
    \begin{adjustbox}{width=0.85\textwidth}
    \begin{quantikz}[column sep=0.25cm]
        \lstick[4]{$\ket{\bar{H}}\ket{\bar{H}}$} & \gate{H} & & & & \ctrl{0} &&&& \targ{} & & & & \gate{H} &&&& \ctrl{0} &&&& \targ{} &&&&\\
        & & \gate{H} & & & & \ctrl{0} &&&& \targ{} &&&& \gate{H} &&&& \ctrl{0} &&&& \targ{} &&& \\
        & & & \gate{H} & & & & \ctrl{0} &&&& \targ{} &&&& \gate{H} &&&& \ctrl{0} &&&& \targ{} &&\\
        & & & & \gate{H}  & & & & \ctrl{0} &&&& \targ{} &&&& \gate{H} &&&& \ctrl{0} &&&& \targ{} &\\ 
        \lstick{$\ket{+}$} & \ctrl{-4} & \ctrl{-3} & \ctrl{-2} & \ctrl{-1} &&&&&&&&&&&&&&&&&&&&& \meter{\ket{\pm}}\\
        \lstick{$\ket{+}$} &&&&& \ctrl{-5} & \ctrl{-4} & \ctrl{-3} & \ctrl{-2} &&&&&&&&&&&&&&&&& \meter{\ket{\pm}}\\
        \lstick{$\ket{+}$} &&&&&&&&& \ctrl{-6} & \ctrl{-5} & \ctrl{-4} & \ctrl{-3} &&&&&&&&&&&&& \meter{\ket{\pm}}\\
        \lstick{$\ket{+}$} &&&&&&&&&&&&& \ctrl{-7} & \ctrl{-6} & \ctrl{-5} & \ctrl{-4} &&&&&&&&& \meter{\ket{\pm}}\\
        \lstick{$\ket{+}$} &&&&&&&&&&&&&&&&& \ctrl{-8} & \ctrl{-7} & \ctrl{-6} & \ctrl{-5} &&&&& \meter{\ket{\pm}}\\
        \lstick{$\ket{+}$} &&&&&&&&&&&&&&&&&&&&& \ctrl{-9} & \ctrl{-8} & \ctrl{-7} & \ctrl{-6} & \meter{\ket{\pm}}\\
    \end{quantikz}
    \end{adjustbox}
    \caption{\textbf{A toy example protocol for preparing an encoded magic state $|\bar{H}\rangle |\bar{H}\rangle$ in the $[[4,2,2]]$ code.} The logical magic state is prepared, followed by two rounds of logical Clifford and stabilizer measurements.}
    \label{fig:422_example}
\end{figure*}
This code has two transversal Clifford gates, namely $H_1H_2H_3H_4$ and $S_1S_2S_3S_4$. We will focus on the former, whose action on the logical subspace is $(\bar{{H}}_1\otimes \bar{{H}}_2) \overline{{SWAP}}_{1,2}$, where $\bar{{H}}_k$, for $k\in \{1,2 \}$, is the logical Hadamard on the $k$'th qubit and $\overline{{SWAP}}_{1,2}$ is the logical swap gate. 

The $[[4,2,2]]$ code is a useful guiding example for understanding simulations of the protocols in Refs.~\cite{Itogawa_2025,Gidney:2024alh,sahay2025foldtransversalsurfacecodecultivation,claes2025cultivatingtstatessurface,Vaknin:2025pbp} under circuit-level noise. In these works, the authors studied a protocol to prepare a logical magic state using the following approach:
\begin{enumerate}
    \item Prepare an encoded magic state (in a manner that is not necessarily fault-tolerant).
    \item Measure a logical Clifford whose eigenstate is the encoded magic state. 
    \item Measure the stabilizers.
    \item Repeat Step 2 and 3 judiciously, and postselect on detecting no error.
\end{enumerate}
Because the $[[4,2,2]]$ code has a nontrivial transversal logical Clifford gate, we can envision a similar protocol using the same template. 

One toy example of such a circuit is shown in Figure~\ref{fig:422_example}. In this protocol, an encoded magic state $|\bar{H}\rangle|\bar{H}\rangle$ is first prepared, where $|\bar{H}\rangle$ is the $+1$ eigenstate of the Hadamard operator. Then, this state is verified using two rounds of logical Clifford and stabilizer measurements. We note that one should not use this circuit as is to prepare a logical magic state. One reason is that a weight-$1$ circuit-level error may propagate to a weight-$2$ error.\footnote{If needed, this issue can be dealt with by using a $|\text{CAT}\rangle$ state ancilla $|\text{CAT}\rangle=\frac{1}{\sqrt{2}}(|0\rangle^{\otimes 4}+|1\rangle^{\otimes 4})$~\cite{shor1997faulttolerantquantumcomputation}.} Another reason is that the logical Clifford measurement would not distinguish between $|\bar{H}\rangle|\bar{H}\rangle$ and $|-\bar{H}\rangle|-\bar{H}\rangle$, where $|-\bar{H}\rangle = \bar{Y}|\bar{H}\rangle$. Nonetheless, the purpose of this example is to explicitly demonstrate the error propagation rule in a simplified setup.

\begin{figure*}[!t]
    \centering
    \begin{adjustbox}{height=3cm}
    \begin{quantikz}
        \lstick[4]{$\ket{\bar{H}}\ket{\bar{H}}$} & \gate[style=red!70!white]{X} &\gate{H} & \phase{} & \targ{} & \gate{H} & \phase{} & \targ{} & \\
        && \gate{H} & \phase{} & \targ{} & \gate{H} & \phase{} & \targ{} &\\
        && \gate{H} & \phase{} & \targ{} & \gate{H} & \phase{} & \targ{} &\\
        && \gate{H} & \phase{} & \targ{} & \gate{H} & \phase{} & \targ{} &\\
        \lstick{$\ket{+}$} && \ctrl{-4} & & & & & &\meter{\ket{\pm}} \\
        \lstick{$\ket{+}$} && & \ctrl{-5} & & & & &\meter{\ket{\pm}} \\
        \lstick{$\ket{+}$} && & & \ctrl{-6} & & & &\meter{\ket{\pm}}\\
        \lstick{$\ket{+}$} && & & & \ctrl{-7} & & &\meter{\ket{\pm}} \\
        \lstick{$\ket{+}$} && & & & & \ctrl{-8} & &\meter{\ket{\pm}} \\
        \lstick{$\ket{+}$} && & & & & & \ctrl{-9} &\meter{\ket{\pm}}\\
    \end{quantikz}
    \end{adjustbox}
    \hspace{1cm}
    \begin{adjustbox}{height=3cm}
    \begin{quantikz}
        \lstick[4]{$\ket{\bar{H}}\ket{\bar{H}}$}  &\gate{H} & \phase{} & \targ{} & \gate{H} & \phase{} & \targ{} & \gate[style=red!70!white]{X} & & & & & & \gate[style=red!70!white]{Y} & \gate[style=red!70!white]{Y} & \\
        & \gate{H} & \phase{} & \targ{} & \gate{H} & \phase{} & \targ{} & & & & & & & & & \\
        & \gate{H} & \phase{} & \targ{} & \gate{H} & \phase{} & \targ{} & & & & & & & & & \\
        & \gate{H} & \phase{} & \targ{} & \gate{H} & \phase{} & \targ{} & & & & & & & & & \\
        \lstick{$\ket{+}$} & \ctrl{-4} & & & & & & \gate[style=red!70!white]{S} & \ctrl[style=red!70!white]{1} & \ctrl[style=red!70!white]{2} & \ctrl[style=red!70!white]{3} & \ctrl[style=red!70!white]{4} & \ctrl[style=red!70!white]{5} & \ctrl[style=red!70!white]{-4}  & & \meter{\ket{\pm}} \\
        \lstick{$\ket{+}$} & & \ctrl{-5} & & & & & \gate[style=red!70!white]{Z} & \phase[style=red!70!white]{} & & & & & & & \meter{\ket{\pm}} \\
        \lstick{$\ket{+}$} & & & \ctrl{-6} & & & & & & \phase[style=red!70!white]{} & & & & & & \meter{\ket{\pm}}\\
        \lstick{$\ket{+}$} & & & & \ctrl{-7} & & & \gate[style=red!70!white]{S} & \ctrl[style=red!70!white]{1} & \ctrl[style=red!70!white]{2} & \phase[style=red!70!white]{} & & & & \ctrl[style=red!70!white]{-7} & \meter{\ket{\pm}} \\
        \lstick{$\ket{+}$} & & & & & \ctrl{-8} & & \gate[style=red!70!white]{Z} & \phase[style=red!70!white]{} & & & \phase[style=red!70!white]{} & & & & \meter{\ket{\pm}} \\
        \lstick{$\ket{+}$} & & & & & & \ctrl{-9} & & & \phase[style=red!70!white]{} & & & \phase[style=red!70!white]{} & & & \meter{\ket{\pm}}\\
    \end{quantikz}
    \end{adjustbox}
    \caption{\textbf{Example of $X$-error propagation in  $\ket{\bar{H}}\ket{\bar{H}}$-state measurement protocol on $[[4,2,2]]$ code.} Left: Injection of a Pauli $X$-error. Right: The $X$-error propagates to a Clifford error at the end of the circuit. To better visualize the propagated error, we compress the otherwise sequentially-applied controlled-gates from Fig.~\ref{fig:422_example}.}
    \label{fig:422_propagation}
\end{figure*}

The simulation of Fig.~\ref{fig:422_example} in the absence of circuit-level noise is trivial. Because the initial state is stabilized by the logical Clifford and the stabilizers, the state obtained at the end of the protocol will be $|\bar{H}\rangle|\bar{H}\rangle$. Thus, the main question is how to simulate the circuit in Fig.~\ref{fig:422_example} in the presence of circuit-level noise without using a state-vector simulator. 

Our central claim is that in the presence of arbitrary circuit-level Pauli errors, the state resulting from this protocol is $\mathcal{C}_m|\bar{H}\rangle|\bar{H}\rangle$, for some stabilizer operation $\mathcal{C}_m$ that can be readily computed from the circuit-level Pauli noise. Here, $\mathcal{C}_m$ may include a Clifford gate, but more generally, a projection onto a $\pm 1$ eigenspace of some Pauli operator. Because stabilizer operations do not increase the stabilizer rank,  the stabilizer rank of the noisy state is, for every possible circuit-level Pauli error, at most the stabilizer rank of the noiseless final state. This is the main observation that will generalize to the protocols in Refs.~\cite{Bombin_2007,bombin2016dimensionaljumpquantumerror,Beverland2021,Daguerre:2024gjd,Daguerre:2025boq,Goto:2016gss,Chamberland:2019ehl,Chamberland:2020axi,Itogawa_2025,Gidney:2024alh,sahay2025foldtransversalsurfacecodecultivation,claes2025cultivatingtstatessurface,Vaknin:2025pbp}.

Before we proceed, let us comment on why we are assuming only Pauli errors during the preparation of the $|\bar{H}\rangle|\bar{H}\rangle$ state, prior to the verification step. We envision preparing the logical state $|\bar{H}\rangle|\bar{H}\rangle$ by applying a Clifford encoding circuit to a noisy physical $|H\rangle|H\rangle$ state. The key point is that one can randomly apply $H$ at the physical level, decohering the state in the eigenbasis of $H$. Then, the only possible error acting on this state is the $Y$ error, which is a Pauli. This error, when propagated through a Clifford, remains as a Pauli error; see Appendix~\ref{sub:initialization} for a generalization of this fact to other physical magic states.

We now explain how circuit-level errors propagate in Fig.~\ref{fig:422_example}. Given we are considering circuits with many non-Clifford gates, the main complication we must deal with is how individual circuit-level Pauli errors propagate through such circuits. If such an error only encounters a \textit{single} non-Clifford gate in $\mathcal{C}^{(3)}$, it simply propagates to a Clifford error, which is manageable. However, if this Pauli error encounters two or more non-Clifford gates, it may propagate to a non-Clifford error, which is harder to deal with.

Surprisingly though, in the circuit shown in Fig.~\ref{fig:422_example}, any circuit-level Pauli error \textit{always} propagates to a Clifford error. Let us see why this is the case. As an example, consider a circuit-level $X$-error injected at one of the data qubits at the beginning, which is then propagated to an error at the end of the circuit [Fig.~\ref{fig:422_propagation}]. The propagated error can be computed by sequentially passing it through the gates in the circuit. To that end, the following circuit identities will be useful:
\begin{align}
\begin{adjustbox}{width=0.8\columnwidth}
    \begin{quantikz}
&  & \ctrl{1} & \\ &\gate{X} & \gate{H} &  \end{quantikz} =
\begin{quantikz}
 & \ctrl{1} &  & \ctrl{1} & \gate{S} & \\ & \gate{H} & \gate{X} & \gate{Y}& &
\end{quantikz},
\end{adjustbox} \label{eq:nonclifford_identity_1}\\
\begin{adjustbox}{width=0.65\columnwidth}
\begin{quantikz}
& \ctrl{2} &  & \\ & & \ctrl{1} &  \\
& \gate{U}&\gate{V} &
\end{quantikz}
=
\begin{quantikz}
&  & \ctrl{2} & \ctrl{1} & \\
&\ctrl{1} &  &  \phase{} &\\
& \gate{V} &\gate{U} & &
\end{quantikz}.
\end{adjustbox} \label{eq:nonclifford_identity_2}
\end{align}
if $U$ and $V$ anticommute, i.e., $\{ U, V\}=0$. Repeatedly applying Eq.~\eqref{eq:nonclifford_identity_1} and~\eqref{eq:nonclifford_identity_2}, 
in addition to using the following identity,
\begin{equation}
\begin{adjustbox}{width=0.6\columnwidth}
\begin{quantikz}
& \gate{S} & \ctrl{1} & \\ & & \gate{U} &  \end{quantikz} = \begin{quantikz}
& \ctrl{1} & \gate{S} & \\ &  \gate{U} &   &\end{quantikz},
\end{adjustbox} 
\label{eq:nonclifford_identity_7}
\end{equation}
one can verify that the error propagates to a Clifford error at the end [Fig.~\ref{fig:422_propagation}]. The previous error propagation identities (and those found later in this section) can be easily proven as follows: Suppose we want to propagate a gate $U$ through $C(V)$. Then, we can evaluate the identity $C(V)UC(V^{\dagger})(|b\rangle \otimes I)$, for computational basis states $|b\rangle\in\{|0\rangle,|1\rangle\}$, to compute the propagated error in a case-by-case manner. In particular, this method will indicate whether controlled-gates appear in the propagated error.

The $X$ errors injected on the data qubits at other times also propagate to a Clifford error for a similar reason. Furthermore, $Z$ errors injected on the data qubits also propagate to a Clifford error, a fact that follows from an analog of Eq.~\eqref{eq:nonclifford_identity_1},
\begin{equation}
    \begin{adjustbox}{width=0.8\columnwidth}
    \begin{quantikz}
&  & \ctrl{1} & \\ &\gate{Z} & \gate{H} &  \end{quantikz} =
\begin{quantikz}
 & \ctrl{1} &  & \ctrl{1} & \gate{S^{\dagger}} & \\ & \gate{H} & \gate{Z} & \gate{Y}& &
\end{quantikz}.
\end{adjustbox} 
\label{eq:nonclifford_identity_3}
\end{equation}
Because $X$ and $Z$ generate $Y$, any circuit-level Pauli error injected on the data qubit propagates to a Clifford error at the end of the circuit.\footnote{For multiple errors, one can simply propagate each error in reverse order. Because each propagated error is a Clifford, so is their composition.}

Now let us consider the errors injected on the syndrome qubits. By construction, $Z$ errors do not propagate. On the other hand, $X$ errors when propagated can introduce $H$ errors

\begin{equation}
\begin{adjustbox}{width=0.6\columnwidth}
\begin{quantikz}
& \gate{X} & \ctrl{1} & \\ & & \gate{H} &  \end{quantikz} = \begin{quantikz}
& \ctrl{1} & \gate{X} & \\ &  \gate{H} &  \gate{H} &\end{quantikz}.
\end{adjustbox} 
\label{eq:nonclifford_identity_6}
\end{equation}

Given the previous circuit identity introduces $H$ errors, we must also analyze how such errors propagate. Note the following: 
\begin{equation}
     \begin{adjustbox}{width=0.75\columnwidth}
    \begin{quantikz}
&  & \ctrl{1} & \\ &\gate{H} & \phase{} &  \end{quantikz} =
\begin{quantikz}
 & \ctrl{1} &  & \ctrl{1} & \gate{S} & \\ & \phase{} & \gate{H} & \gate{Y}& &
\end{quantikz},
\end{adjustbox} 
\label{eq:nonclifford_identity_4}
\end{equation}
\begin{equation}
     \begin{adjustbox}{width=0.75\columnwidth}
    \begin{quantikz}
&  & \ctrl{1} & \\ &\gate{H} & \targ{} &  \end{quantikz} =
\begin{quantikz}
 & \ctrl{1} &  & \ctrl{1} & \gate{S^\dag} & \\ & \targ{} & \gate{H} & \gate{Y}& &
\end{quantikz}.
\end{adjustbox} 
\label{eq:nonclifford_identity_5}
\end{equation}
Thus, propagating $H$ errors introduce additional gates, namely a controlled-$Y$ and phase gates. The phase gate remains in the ancilla register without affecting data qubits.  Remarkably, the controlled-$Y$ can be again propagated to the end as a Clifford error, using Eq.~\eqref{eq:nonclifford_identity_2}. Therefore, any circuit-level Pauli error injected on the syndrome qubits also propagates to a Clifford error at the end of the circuit. To summarize, \emph{any} circuit-level Pauli error propagates to a Clifford error at the end of the circuit.

Thus, prior to measurement, we have a state of the following form:
\begin{equation}
    C_\mathrm{prop}(E)(|\bar{H}\rangle|\bar{H}\rangle |+\rangle^{\otimes 6}),
\end{equation}
where $C_\mathrm{prop}(E)$ is a Clifford that depends on the error configuration $E$. Then, we can proceed by simulating each of the $6$ measurements. To that end, note that $|\bar{H}\rangle|\bar{H}\rangle$ is a linear combination of at most $q=4$ stabilizer states (as in Eq.(\ref{eq:stab_rak})):
\begin{equation}
    |\bar{H}\rangle|\bar{H}\rangle = \sum_{i=1}^4 \alpha_i |\bar{s}_i\rangle,
\end{equation}
where $\{ |\bar{s}_i\rangle: i=1, \ldots, 4\}$ is a set of stabilizer states and $\alpha_i$ complex coefficients. The protocol can be simulated by computing the probability of measuring $\pm 1$ for each individual measurement and calculating the postmeasurement state. This state can be obtained from a postselection of the stabilizer states in some fixed basis (say $|0\rangle, |1\rangle$), all of which can be done using the standard extended stabilizer simulation method~\cite{Bravyi:2018ugg}. Finally, one can apply a Pauli (or even Clifford, if necessary) correction based on these measurement outcomes and compute the overlap with the state $|\bar{H}\rangle|\bar{H}\rangle$. Crucially, the overall cost of the simulation depends only on the stabilizer rank of the final state, not the number of non-Clifford gates used in the protocol. We review this stabilizer rank-based simulation approach in Appendix~\ref{app:stab_rank_sim}.

One can also express the ideal logical magic state and its noisy counterparts as density matrices 
\begin{equation}
    \bar{\rho} = \left(|\bar{H}{\rangle \langle}\bar{H}|\right)^{\otimes 2}\:,
\end{equation}
and
\begin{equation}
    \hat{\rho}_m = \mathcal{C}_m(E)\left[\left(|\bar{H}{\rangle\langle}\bar{H}|\right)^{\otimes 2}\right]\:,
\end{equation}
respectively, where $\mathcal{C}_m(E)$ is a stabilizer operation that depends on the error $E$ and the measurement outcome $m$. After applying a decoding operation $\mathcal{D}$ that maps $\hat{\rho}_m$ to a state in the code subspace $\bar{\rho}_N=\mathcal{D}(\hat{\rho}_m )$ (up to a potential logical error), we can compute the fidelity $F = \text{Tr}(\bar{\rho}\bar{\rho}_N)$ in the following way. Note the following decomposition (as in~Eq.(\ref{eq:rep_rho_Paulis})): 
\begin{equation}
    \left(|\bar{H}{\rangle \langle}\bar{H}|\right)^{\otimes 2} = \frac{1}{4}\bar{I}^{\otimes 2}+ \frac{1}{4}\sum_{i = 1}^8 \beta_i \bar{P}_i\:,
\end{equation}
where $\bar{P}_i$ are nontrivial 2-qubit Pauli operators and $\beta_i$ are $p=8$ real non-negative coefficients. Thus, the fidelity can be computed from expectation values of Pauli logical operators since $\langle \bar{P}_i\rangle_{\bar{\rho}_N}=\text{Tr}(\bar{P}_i \bar{\rho}_N)$. 
In Sec.~\ref{sec:fid_method}, we provide further detail about how these expectation values can be estimated by running $8\times 2^2$ stabilizer circuits for each $\bar{P}_i$ (and more generally, $p\times 2^k$ circuits for $k$ the number of logical qubits and $p$ the Pauli rank).

This example may raise more questions than answers as the way in which a circuit-level Pauli error propagates to an end-of-circuit Clifford error may seem mysterious. A natural question is whether there is a more general theory that underlies this observation. This is the topic of Sec.~\ref{sec:controlled_clifford}.

\clearpage

\section{Structure of Controlled-Clifford Unitaries}
\label{sec:controlled_clifford}

In this section, we provide a more general account for the circuit identities used in Sec.~\ref{sec:example}. This analysis will provide a more systematic understanding for why circuit-level Pauli noise propagates to a Clifford error in MSP protocols.

The following are a few key circuit identities that drive our discussion
\begin{equation}
\begin{adjustbox}{height=0.5cm}
    \begin{quantikz}
& &  & \ctrl{1} & \\ & \qwbundle{n} &\gate{V} & \gate{U} &  &\end{quantikz}
\end{adjustbox}
=
\begin{adjustbox}{height=0.5cm}
\begin{quantikz}
 & & \ctrl{1} & & \ctrl{1} & \\ 
 & \qwbundle{n} & \gate{U}  & \gate{V} & \gate{UVU^{\dagger}V^{\dagger}}& 
\end{quantikz},
\end{adjustbox} 
\label{eq:identity_general_1}
\end{equation}

\begin{equation}
\begin{adjustbox}{height=0.6cm}
\begin{quantikz}
    & \gate{X} & \ctrl{1} & \\
 &\qwbundle{n} & \gate{U} &
\end{quantikz}
\end{adjustbox}
=
\begin{adjustbox}{height=0.6cm}
\begin{quantikz}
   & & \ctrl{1} & \ctrl{1} & \gate{X} & \\
   &\qwbundle{n} & \gate{U} & \gate{U^{\dagger}{}^2} & \gate{U} &
\end{quantikz}
\end{adjustbox},
\label{eq:identity_general_2}
\end{equation}
where $U$ and $V$ are arbitrary $n$-qubit unitaries.

We will also use Eq.~\eqref{eq:nonclifford_identity_2}, noting that it applies to any $n$-qubit unitaries $U$ and $V$ insofar as $\{U,V\}=0$. (If $[U,V]=0$, the two controlled unitaries commute.) Equations~\eqref{eq:identity_general_1} and~\eqref{eq:identity_general_2} aim to model propagation of errors in the circuit. We envision the unitary $V$ in Eq.~\eqref{eq:identity_general_1} being a Pauli or a Clifford for the following reasons. A Pauli error may occur at that exact location or may have been propagated from a previous step. A Clifford error may emerge from a Pauli error if propagated through a non-Clifford gate from a previous step. We only consider Pauli $X$ errors on the control [Eq.~\eqref{eq:identity_general_2}] as Pauli $Z$ errors on the control commute past $C(U)$. Thus, the propagation of Pauli $Y$ errors can be inferred from the propagation behavior of Pauli $X$ errors. One may wonder if there is any need to consider a Clifford error on the control qubit. We will see that this case never occurs (up to innocuous diagonal gates) insofar as the circuit-level noise model is Pauli; see Appendix~\ref{app:error_model} for more details.

For reasons we make clear in Sec.~\ref{subsec:pauli-root_clifford}, we will choose $U$ to be a Clifford which is a \emph{square root} of a Pauli [Definition~\ref{def:square_root_pauli}]. 

\subsection{Cliffords that square to Paulis (PSC)}
\label{subsec:pauli-root_clifford}
We study a special class of Cliffords that square to a Pauli and elucidate its properties. The main results of this section concern complete characterization of these Cliffords from two different perspectives: controlled-unitary [Lemma~\ref{lemma:psc_hierarchy}] and its canonical form [Proposition~\ref{prop:psc_normal_form}]. We begin by introducing a formal definition.
\begin{definition}
A unitary is Pauli-Square-Root Clifford (PSC) if it is a (non-Pauli) Clifford that squares to a Pauli.
    \label{def:square_root_pauli}
\end{definition}

PSCs are special in that they satisfy a number of extra properties that are not universally true for all Cliffords. For instance, controlled versions of Cliffords are not necessarily in the third level of the Clifford hierarchy, with the Clifford $HS$ being one such counterexample~\cite{anderson2025controlledgatescliffordhierarchy}. However, PSCs are precisely those unitaries whose controlled version lie in the third level of the Clifford hierarchy.
\begin{restatable}{lemma}{LemmaPSCHierarchy}
    Let $U$ be a unitary. $U$ is a PSC if and only if 
    controlled-$U$ is in the third level of the Clifford hierarchy $\mathcal{C}^{(3)}$.
    \label{lemma:psc_hierarchy}
\end{restatable}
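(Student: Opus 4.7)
My plan is to prove the biconditional by computing $C(U)QC(U)^\dagger$ for $Q$ ranging over a generating set of $\mathcal{P}_{n+1}$, namely $\{Z\otimes I,\,X\otimes I\}\cup\{I\otimes R : R\in\mathcal{P}_n\}$. Using the block-matrix representation $C(U)=\mathrm{diag}(I,U)$ in the computational basis of the control qubit, three elementary block-matrix multiplications give the key identities
\begin{align*}
C(U)(Z\otimes I)C(U)^\dagger &= Z\otimes I,\\
C(U)(I\otimes R)C(U)^\dagger &= C(URU^\dagger R^{-1})\,(I\otimes R),\\
C(U)(X\otimes I)C(U)^\dagger &= (I\otimes U^\dagger)\,C(U^2)\,(X\otimes I),
\end{align*}
valid for any unitary $U$ and any Pauli $R$. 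Since the chosen set generates $\mathcal{P}_{n+1}$, the condition $C(U)\in\mathcal{C}^{(3)}$ is equivalent to each right-hand side lying in $\mathcal{C}^{(2)}$.

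For the forward direction, assume $U\in\mathcal{C}^{(2)}$ and $U^2\in\mathcal{P}_n$. Then $URU^\dagger R^{-1}$ is a Pauli (since $U$ is Clifford), so $C(URU^\dagger R^{-1})$ is a controlled Pauli, hence Clifford. Likewise $C(U^2)$ is a controlled Pauli. The remaining factors $I\otimes R$, $I\otimes U^\dagger$, and $X\otimes I$ are manifestly Clifford. Therefore all three right-hand sides lie in $\mathcal{C}^{(2)}$, giving $C(U)\in\mathcal{C}^{(3)}$.

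For the reverse direction, the engine is an auxiliary claim: $C(V)\in\mathcal{C}^{(2)}$ if and only if $V\in\mathcal{P}_n$. The ``if'' direction is standard. For the ``only if'', conjugating $I\otimes R$ by $C(V)$ yields $\ket{0}\bra{0}\otimes R+\ket{1}\bra{1}\otimes VRV^\dagger$, and membership in $\mathcal{P}_{n+1}$ forces $VRV^\dagger=\pm R$ for every Pauli $R$. A short character-theoretic argument on $\mathcal{P}_n/\mathrm{center}\cong(\mathbb{Z}/2)^{2n}$ (a unitary commuting with every Pauli up to a sign defines a character of this group, which is realized by conjugation by a Pauli via the symplectic pairing) then shows $V=cP$ for some Pauli $P$ and phase $c$ with $|c|=1$. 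Requiring $C(V)(X\otimes I)C(V)^\dagger$ to also be a Pauli pins $c\in\{\pm 1,\pm i\}$, so $V\in\mathcal{P}_n$. Given this sublemma, the middle identity above gives $URU^\dagger R^{-1}\in\mathcal{P}_n$ for every Pauli $R$, hence $U$ is Clifford; the $X\otimes I$ identity, after peeling off the Clifford factors $I\otimes U^\dagger$ and $X\otimes I$, gives $C(U^2)\in\mathcal{C}^{(2)}$ and hence $U^2\in\mathcal{P}_n$. Thus $U$ is a Clifford that squares to a Pauli.

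The main obstacle is the auxiliary sublemma, which does the real work and requires careful phase bookkeeping; the rest is formal block-matrix algebra. A minor subtlety is that the paper's definition of PSC explicitly excludes Paulis, while the argument above yields the larger class $\{U\in\mathcal{C}^{(2)}:U^2\in\mathcal{P}_n\}$ (for $U\in\mathcal{P}_n$ one has $C(U)\in\mathcal{C}^{(2)}\subset\mathcal{C}^{(3)}$ trivially). The biconditional should therefore be read as identifying this class with $\{U:C(U)\in\mathcal{C}^{(3)}\}$, with the PSC class being the non-Pauli subclass of both sides.
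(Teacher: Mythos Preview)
Your proof is correct. The forward direction is essentially identical to the paper's (both use the same conjugation identities, which the paper writes as the circuit equations~\eqref{eq:identity_general_1} and~\eqref{eq:identity_general_2}). The reverse direction, however, takes a genuinely different route.

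The paper's reverse direction iterates the $X$--conjugation identity once more: from $V:=C(U)(X\otimes I)C(U)^\dagger=(X\otimes U)C(U^{\dagger 2})\in\mathcal{C}^{(2)}$ it computes $V(X\otimes I)V^\dagger=C(U^2)(X\otimes I)C(U^{\dagger 2})=(X\otimes U^2)C(U^{\dagger 4})\in\mathcal{C}^{(1)}$, observes that Paulis contain no nontrivial controlled gates (so $U^4\propto I$ and $U^2\in\mathcal{P}_n$), and then \emph{cites} an external fact (square roots of Paulis are Clifford) to conclude $U\in\mathcal{C}^{(2)}$. Your approach instead isolates the sublemma $C(V)\in\mathcal{C}^{(2)}\Leftrightarrow V\in\mathcal{P}_n$ and applies it twice: once to the $I\otimes R$ identity to get $URU^\dagger\in\mathcal{P}_n$ for all $R$ (hence $U$ Clifford) directly, and once to the $X\otimes I$ identity to get $U^2\in\mathcal{P}_n$. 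This is more self-contained---no external citation is needed---and uses the $I\otimes R$ identity for the reverse direction, which the paper does not. The paper's recursion is slick but leans on a result proved elsewhere; your sublemma does a bit more work up front but yields both conclusions uniformly. Your closing remark on the ``non-Pauli'' clause in the PSC definition is also a fair observation: the biconditional as literally stated characterises $\{U\in\mathcal{C}^{(2)}:U^2\in\mathcal{P}_n\}$, with PSCs being the non-Pauli members.
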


Moreover, there is a canonical form for any PSC, which will turn out to be convenient for proving the circuit identities in Sec.~\ref{subsec:psc_circuit_identities}. 
\begin{restatable}{proposition}{PropositionPSCNormalForm}
    A unitary $U$ is a PSC if and only if it can be written in the following form:
    \begin{equation}
     U = \alpha P \exp\left(\frac{\pi i}{4}\sum_{j=1}^mQ_j \right),\label{eq:standard_form}
    \end{equation}
    where $P$ is a Pauli and $\{Q_j: j=1, \ldots, m\}$ is a commuting set of Paulis $[Q_i,Q_j]=0$, and $\alpha$ is a complex number such that $\alpha^8=1$.
    \label{prop:psc_normal_form}
\end{restatable}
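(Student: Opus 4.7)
The plan is to prove the two directions separately. Sufficiency $(\Leftarrow)$ will be a direct computation, while necessity $(\Rightarrow)$ reduces to a structural fact about involutions in the symplectic group $Sp(2n, \mathbb{F}_2)$ via the standard Clifford-to-symplectic correspondence.

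For sufficiency, I will first verify that $\exp(i\pi Q_j/4) = \frac{1}{\sqrt{2}}(I + i Q_j)$ is a Clifford gate for each Pauli $Q_j$, and use the commutativity of the $Q_j$ to combine the factors into $\exp(i\pi \sum_j Q_j/4)$, which remains Clifford; multiplying by $P$ and the scalar $\alpha$ keeps $U$ in the Clifford group. For $U^2$, I will push $P$ through the exponential via $P e^{i\pi Q_j/4} P^{-1} = e^{i\pi \epsilon_j Q_j/4}$ with $\epsilon_j = +1$ if $[P, Q_j] = 0$ and $\epsilon_j = -1$ if $\{P, Q_j\} = 0$, obtaining
\begin{equation*}
U^2 = \alpha^2 P^2 \exp\!\left(\frac{i\pi}{4}\sum_j (1 + \epsilon_j) Q_j\right).
\end{equation*}
The $\epsilon_j = -1$ terms drop out and the surviving terms combine via $\exp(i\pi Q/2) = iQ$ into a product of Paulis. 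The condition $\alpha^8 = 1$ ensures $\alpha^2 \in \{\pm 1, \pm i\}$, which is exactly the phase group of $\mathcal{P}_n$.

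For necessity, let $U$ be a PSC. Its conjugation action on the Pauli group modulo signs gives an automorphism $\sigma_U \in Sp(2n, \mathbb{F}_2)$. Since $U^2 \in \mathcal{P}_n$ acts trivially in this quotient, $\sigma_U^2 = \mathrm{id}$, so $\sigma_U$ is an involution. The heart of the argument is the structural claim: every involution $\sigma \in Sp(2n, \mathbb{F}_2)$ can be written as a product $\prod_j \tau_{v_j}$ of pairwise commuting symplectic transvections $\tau_v: x \mapsto x + \omega(v, x) v$. Under the Clifford-to-symplectic correspondence, $\tau_{v_j}$ lifts to conjugation by $\exp(i\pi Q_j/4)$ for the Pauli $Q_j$ associated with $v_j$, and the commutation relation $\omega(v_i, v_j) = 0$ translates directly to $[Q_i, Q_j] = 0$. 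Setting $V := \exp(i\pi \sum_j Q_j/4)$, this gives $\sigma_{UV^{-1}} = \mathrm{id}$, so $UV^{-1} = \alpha P$ for some Pauli $P$ and scalar $\alpha$; rearranging yields $U = \alpha P V$. Computing $U^2$ via the sufficiency calculation and matching against $U^2 \in \mathcal{P}_n$ then forces $\alpha^8 = 1$.

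The main obstacle will be the structural claim about involutions in $Sp(2n, \mathbb{F}_2)$. A self-contained proof would observe that $N := \sigma - I$ satisfies $N^2 = 0$ in characteristic 2, then show that $V_1 := \mathrm{Im}\, N$ is totally isotropic under $\omega$ (so that any choice of $v_j \in V_1$ automatically pairwise commute), and finally produce the decomposition $\sigma = \prod_j \tau_{v_j}$ with $v_j \in V_1$ by reducing to the elementary fact that every symmetric matrix over $\mathbb{F}_2$ is a sum of rank-$1$ symmetric matrices $\alpha \alpha^T$.
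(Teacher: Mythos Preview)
Your proposal is correct and essentially mirrors the paper's proof: both directions match in structure, with sufficiency by direct computation of $U^2$ and necessity via the symplectic representation, writing the involution as $I+N$ with $N^2=0$, showing $\mathrm{Im}(N)$ (equivalently $\mathrm{Im}(\omega N)$) is totally isotropic, and decomposing the associated symmetric $\mathbb{F}_2$-matrix as a sum of rank-one terms $u_k u_k^T$ to obtain commuting transvections. The only cosmetic differences are that the paper organizes the sufficiency computation as $U^2 = \alpha^2 (P e^{\frac{i\pi}{4}\sum Q_j} P e^{-\frac{i\pi}{4}\sum Q_j}) e^{\frac{i\pi}{2}\sum Q_j}$ rather than your $\epsilon_j$-splitting, and works explicitly with $\omega N$ (symmetric) rather than $N$ when setting up the rank-one decomposition.
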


It is possible to check these statements individually for well-known PSCs, such as $H, CZ, X\otimes CZ, TST^{\dagger}$. However, the advantage of these statements is that those properties can be verified by simply checking if a given Clifford squares to a Pauli, something that can be done much more straightforwardly in many cases.

The proofs of Lemma~\ref{lemma:psc_hierarchy} and Proposition~\ref{prop:psc_normal_form} can be found in Appendix~\ref{app:proof_sec4}.

\subsection{Circuit identities}
\label{subsec:psc_circuit_identities}
In this section, we derive circuit identities involving PSCs that generalize the circuit identities discussed in Sec.~\ref{sec:example}. These follow straightforwardly from the canonical form [Eq.~\eqref{eq:standard_form}]. For details, see Appendix~\ref{app:proof_sec4}.

\begin{restatable}{proposition}{PropositionPSCPropagationCaseOne}
    Let $C$ and $P$ be a PSC and a Pauli respectively, supported on $t$ qubits
        \begin{equation}
            \begin{adjustbox}{height=0.65cm}
                \begin{quantikz}
            &  & & \ctrl{1} & \\ 
            & \qwbundle{t} & \gate{P} & \gate{C} &\end{quantikz} =
            \begin{quantikz}
             & & \ctrl{1} &  & \ctrl{1} & \\ 
             & \qwbundle{t} & \gate{C} & \gate{P} & \gate{Q}& 
            \end{quantikz},
            \end{adjustbox} \label{eq:propagation_psc}
        \end{equation}
        where $Q$ is a $t$-qubit Pauli that either commutes or anticommutes with $C$. Furthermore, $\text{Supp}(Q) \subseteq \text{Supp}(C)$.
    \label{prop:psc_propagation_case1}
\end{restatable}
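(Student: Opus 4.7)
The plan is to reduce the circuit identity~\eqref{eq:propagation_psc} to an algebraic relation for a group commutator, and then exploit the defining property of a PSC, namely $C^2 \in \mathcal{P}_t$, to establish the claimed structure of $Q$.

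First, I would expand both sides of~\eqref{eq:propagation_psc} as operators on the joint control-target register. The left-hand side is $|0\rangle\langle 0|\otimes P + |1\rangle\langle 1|\otimes CP$, while the right-hand side is $|0\rangle\langle 0|\otimes P + |1\rangle\langle 1|\otimes QPC$. The identity is therefore equivalent to $CP = QPC$, i.e.,
\[
Q = C\, P\, C^{-1}\, P^{-1},
\]
the group commutator of $C$ and $P$. Because $C$ is a Clifford, $CPC^{-1} \in \mathcal{P}_t$, so $Q$ is automatically a product of two Paulis and hence an element of $\mathcal{P}_t$. It remains to verify: (i) $Q$ either commutes or anticommutes with $C$, and (ii) $\mathrm{Supp}(Q) \subseteq \mathrm{Supp}(C)$.

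Property (ii) is immediate from the group-commutator form. If $q \notin \mathrm{Supp}(C)$, then one can write $C = I_q \otimes C_{\bar{q}}$ and $P = P_q \otimes P_{\bar{q}}$, so that
\[
Q = I_q \otimes \bigl(C_{\bar{q}}\, P_{\bar{q}}\, C_{\bar{q}}^{-1}\, P_{\bar{q}}^{-1}\bigr),
\]
and hence $q \notin \mathrm{Supp}(Q)$. For property (i), I would invoke the PSC property to write $C^2 = R$ for some $R \in \mathcal{P}_t$, and set $P' := CPC^{-1}$. Then
\[
CQC^{-1} = (CP'C^{-1})(CP^{-1}C^{-1}) = (RPR^{-1})\cdot P'^{-1} = \pm P\cdot P'^{-1} = \pm Q^{-1},
\]
using $RPR^{-1} = \pm P$ (two Paulis either commute or anticommute). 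Because any Pauli $Q \in \mathcal{P}_t$ satisfies $Q^2 \in \{\pm I\}$, we have $Q^{-1} = \pm Q$. Composing the two sign ambiguities yields $CQC^{-1} = \pm Q$, as required.

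The main obstacle I anticipate is careful bookkeeping of the $\pm 1, \pm i$ phases, particularly verifying that the two $\pm$'s above really do combine to an honest sign $\pm 1$ rather than a spurious factor of $i$. A concrete sanity check is furnished by the toy example in Section~\ref{sec:example}: for $C = H$ and $P = X$, one computes $Q = HXH^{-1}X^{-1} = ZX = iY$, which is a Pauli in $\mathcal{P}_1$, anticommutes with $H$, and is supported on the same single qubit as $H$, in agreement with the proposition. An alternative, more explicit route would invoke Proposition~\ref{prop:psc_normal_form} to write $C = \alpha P_0 \exp\!\bigl(\tfrac{\pi i}{4}\sum_{j=1}^m Q_j\bigr)$ and compute $Q$ proportional to $\prod_{j \in S} Q_j$ with $S = \{j : \{P, Q_j\} = 0\}$; this pins down the phase explicitly and makes (ii) manifest via $\mathrm{Supp}(\prod_{j \in S} Q_j) \subseteq \mathrm{Supp}(C)$, at the cost of more bookkeeping than the route above.
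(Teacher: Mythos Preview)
Your proposal is correct and takes a genuinely different route from the paper for property~(i). The paper invokes the canonical form of Proposition~\ref{prop:psc_normal_form} to write $C = \alpha \tilde{P}\exp\bigl(\tfrac{\pi i}{4}\sum_j Q_j\bigr)$ and shows $Q \propto \prod_{j\in\mathcal{A}_P}Q_j$, whence $Q$ commutes with the exponential factor and the commutation relation with $C$ reduces to that of two Paulis $\tilde{P}$ and $Q$. Your argument is more elementary: you use only the defining property $C^2 = R \in \mathcal{P}_t$ to compute $CQC^{-1} = (RPR^{-1})P'^{-1} = \pm Q^{-1} = \pm Q$, bypassing the canonical form entirely. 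Your phase bookkeeping is sound (since $RPR^{-1}=\pm P$ with a genuine sign, and $Q^2=\pm I$ for any element of $\mathcal{P}_t$), so the worry you flag does not materialize. A further advantage of your approach is that you give a clean direct argument for the support claim~(ii), which the paper's proof does not explicitly address; the route via the canonical form would require the extra observation that each $Q_j$ is supported within $\mathrm{Supp}(C)$. The paper's approach, on the other hand, yields the explicit structure $Q \propto \prod_{j:\{Q_j,P\}=0}Q_j$, which you correctly identify as an alternative and which is useful downstream for the error-propagation bookkeeping.
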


\begin{restatable}{proposition}{PropositionCliffordProp}
    Let $C$ and $P$ be a PSC and a Pauli respectively, supported on $t$ qubits
    \begin{equation}
        \begin{adjustbox}{height=0.65cm}
    \begin{quantikz}
    &  & & \ctrl{1} & \\ 
    &\qwbundle{t} & \gate{C} & \gate{P} &\end{quantikz} =
    \begin{quantikz}
     & & \ctrl{1} &  & \ctrl{1} & \\ 
     & \qwbundle{t} & \gate{P} & \gate{C} & \gate{Q}& 
    \end{quantikz},
    \end{adjustbox} 
    \label{eq:propagation_psc_2}
    \end{equation}
    where $Q$ is a $t$-qubit Pauli that either commutes or anticommutes with $C$. Furthermore, $\text{Supp}(Q) \subseteq \text{Supp}(C)$.
\label{prop:Cliff_prop}
\end{restatable}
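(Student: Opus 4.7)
The plan is to recognize the identity as a direct specialization of the general commutation rule Eq.~\eqref{eq:identity_general_1} with $V=C$ and $U=P$, which immediately produces the right-hand side with the trailing controlled gate being controlled-$Q$ for $Q=PCP^{\dagger}C^{\dagger}$. What then remains is to verify that this $Q$ is a Pauli, that it either commutes or anticommutes with $C$, and that $\mathrm{Supp}(Q)\subseteq\mathrm{Supp}(C)$.

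The Pauli-ness of $Q$ is the crux of the proof, and I would establish it via the canonical form in Proposition~\ref{prop:psc_normal_form}. Write $C=\alpha P'\exp\bigl(\frac{i\pi}{4}\sum_j Q_j\bigr)$ with the $Q_j$'s pairwise commuting. Conjugation by the Pauli $P$ sends each $Q_j$ to $\epsilon_j Q_j$ for some $\epsilon_j\in\{\pm1\}$, yielding $PCP^{\dagger}=\alpha P''\exp\bigl(\frac{i\pi}{4}\sum_j \epsilon_j Q_j\bigr)$ with $P''=PP'P^{\dagger}$. Multiplying by $C^{\dagger}=\alpha^{-1}\exp\bigl(-\frac{i\pi}{4}\sum_j Q_j\bigr)P'^{\dagger}$, the two exponentials commute because both are functions of the commuting set $\{Q_j\}$, and their product telescopes to $\exp\bigl(-\frac{i\pi}{2}\sum_{j:\epsilon_j=-1}Q_j\bigr)=\prod_{j:\epsilon_j=-1}(-iQ_j)$ by the identity $e^{-i\pi Q/2}=-iQ$ for $Q^2=I$. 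Hence $Q$ is a product of Paulis (up to phase), i.e., a Pauli. As a quick sanity check, one can also verify directly that $Q\cdot Q_4=I$, where $Q_4=CPC^{\dagger}P^{\dagger}$ is the Pauli correction from Proposition~\ref{prop:psc_propagation_case1}, exhibiting $Q$ as the group-theoretic inverse of a known Pauli.

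The commutation property follows by similar bookkeeping: since the $Q_j$'s pairwise commute, each $Q_k$ conjugates $C$ to $\pm C$ (only the Pauli prefactor $P'$ can pick up a sign, the exponential is invariant), and a short calculation from the explicit expression for $Q$ shows it commutes with every $Q_j$ and hence with $\exp\bigl(\frac{i\pi}{4}\sum_j Q_j\bigr)$. Consequently $QCQ^{\dagger}=\alpha\,(QP'Q^{\dagger})\exp\bigl(\frac{i\pi}{4}\sum_j Q_j\bigr)=\pm C$, with the sign determined by whether the Paulis $Q$ and $P'$ commute. For the support claim, I would observe that outside $\mathrm{Supp}(C)$ both $C$ and $C^{\dagger}$ act as identity, so on those qubits $Q=PCP^{\dagger}C^{\dagger}$ collapses to $PP^{\dagger}=I$.

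The main obstacle is establishing that $Q$ is a Pauli. For a generic Clifford, the group commutator $PCP^{\dagger}C^{\dagger}$ is only guaranteed to be a Clifford; its collapse to a Pauli is a genuinely PSC-specific phenomenon, and it is precisely the step $e^{-i\pi Q_j/2}=-iQ_j$ in the canonical-form computation that forces this. The remaining verifications are routine Pauli arithmetic once the canonical form is in hand.
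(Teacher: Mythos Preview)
Your proposal is correct and matches the paper's approach: the paper simply states that Proposition~\ref{prop:Cliff_prop} ``is proved using a similar argument'' to Proposition~\ref{prop:psc_propagation_case1}, and what you have written is precisely that argument---apply Eq.~\eqref{eq:identity_general_1} with $V=C$, $U=P$ to obtain $Q=PCP^{\dagger}C^{\dagger}$, then use the canonical form of Proposition~\ref{prop:psc_normal_form} to verify the claimed properties of $Q$. Your ``sanity check'' that $Q$ is the inverse of the Pauli $CPC^{\dagger}P^{\dagger}$ from Proposition~\ref{prop:psc_propagation_case1} is in fact the slickest route, since it reduces all three claims (Pauli-ness, (anti)commutation with $C$, and support) to the already-proven Proposition~\ref{prop:psc_propagation_case1} in one line; you could promote it from a remark to the main argument.
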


The proof of Proposition~\ref{prop:psc_propagation_case1} uses the circuit identity from Eq.~\eqref{eq:propagation_psc}, in addition to the canonical form of PSCs [Proposition~\ref{prop:psc_normal_form}]. The details of the proof can be found in Appendix~\ref{app:proof_sec4}. Proposition~\ref{prop:Cliff_prop} is proved using a similar argument.

\begin{restatable}{theorem}{TheoremPropagationPSC}
    Let $C$ be a PSC on $t$ qubits and $P$ be a $(t+1)$-qubit Pauli
    \begin{equation}
        \begin{adjustbox}{height=0.7cm}
    \begin{quantikz}
        & & \gate[2]{P}  & \ctrl{1} & \\ 
        & \qwbundle{t} & & \gate{C} & \end{quantikz} =
        \begin{quantikz}
         & & \ctrl{1} &  \gate[2]{P'} & \ctrl{1} & & \\
         & \qwbundle{t} & \gate{C} &  & \gate{Q}& \gate{C^b} &
        \end{quantikz},
        \end{adjustbox} 
        \label{eq:propagation_psc_full}
    \end{equation}
    for some $(t+1)$-qubit Pauli $P'$,  $t$-qubit Pauli $Q$, and $b\in \{0,1 \}$, where $Q$ either commutes or anticommutes with $C$. Furthermore, $\text{Supp}(Q) \subseteq \text{Supp}(C)$. 
    \label{thm:propagation_psc}
\end{restatable}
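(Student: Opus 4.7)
The plan is to decompose the $(t+1)$-qubit Pauli as $P = P_c \otimes P_t$ according to its action on the control qubit, then propagate each part through the controlled-$C$ gate. By Lemma~\ref{lemma:psc_hierarchy}, $CC$ lies in $\mathcal{C}^{(3)}$, so $CC \cdot P \cdot (CC)^{-1}$ is necessarily a Clifford; the task is to exhibit the specific factorization $(I \otimes C^b) \cdot CQ \cdot P'$ claimed by the theorem.

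For the easy case $P_c \in \{I, Z\}$, the factor $P_c \otimes I$ commutes with $CC$ and passes through untouched, while the residual target factor $I \otimes P_t$ is propagated through $CC$ using Proposition~\ref{prop:psc_propagation_case1}. This yields the claimed form with $b = 0$ and $P' = P$; the Pauli $Q$ inherits the commutation/anti-commutation with $C$ and the support constraint $\text{Supp}(Q) \subseteq \text{Supp}(C)$ directly from Proposition~\ref{prop:psc_propagation_case1}.

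The substantive case is $P_c \in \{X, Y\}$. Since $Y = iXZ$ and $Z_c$ commutes with $CC$, it suffices to treat $P_c = X$, after which the $Y$ case follows by composition with the $Z$ case. Working in the $2 \times 2$ control-qubit block form $CC = \mathrm{diag}(I, C)$, a direct computation yields
\[
(CC)(X \otimes P_t)(CC)^{-1} \;=\; \begin{pmatrix} 0 & P_t C^{-1} \\ C P_t & 0 \end{pmatrix}.
\]
The key step is to exploit the PSC property via $C^{-1} = C \cdot P_C^{-1}$, where $P_C := C^2$ is a Pauli. This lets us extract an $I \otimes C$ factor on the left (producing $C^b$ with $b = 1$), leaving a block-anti-diagonal matrix whose entries are now Paulis. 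This residual matrix factors as $CQ \cdot (X \otimes R)$ with explicit Paulis $R = P_C^{-1} \cdot CP_tC^{-1}$ and $Q = P_t \cdot (CP_tC^{-1})^{-1} \cdot P_C$, and we set $P' = X \otimes R$.

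It remains to verify the two structural properties of $Q$. For $\text{Supp}(Q) \subseteq \text{Supp}(C)$, observe that outside $\text{Supp}(C)$ the Clifford $C$ acts as the identity, so $P_t$ and $CP_tC^{-1}$ agree there, forcing $Q$ to be trivial off $\text{Supp}(C)$. For $CQC^{-1} = \pm Q$, I would conjugate each Pauli factor of $Q$ by $C$ and repeatedly invoke the PSC identity $C^2 P_t C^{-2} = P_C P_t P_C^{-1} = \pm P_t$, reducing $CQC^{-1}$ to $\pm Q^{-1}$; since any Pauli satisfies $Q^{-1} = \pm Q$, we obtain $CQC^{-1} = \pm Q$. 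The main obstacle I anticipate is the careful bookkeeping of Pauli phases in the $P_c \in \{X, Y\}$ case: invoking the PSC property at exactly the right moment so that $C^{-1}$ may be rewritten as $C \cdot P_C^{-1}$ is essential, and this maneuver would fail for a generic (non-PSC) Clifford $C$.
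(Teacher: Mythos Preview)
Your proof is correct and follows the same case-split on the control-qubit Pauli and the same key PSC maneuver (rewrite $C^{-1} = C \cdot (C^2)^{-1}$ with $C^2$ Pauli) as the paper. The paper packages the $P_c = X$ case by composing the circuit identity Eq.~\eqref{eq:identity_general_2} with Proposition~\ref{prop:psc_propagation_case1}; your block-matrix computation is the same calculation written out explicitly, and both arrive at the identical $Q = P_t \cdot (CP_tC^{-1})^{-1} \cdot C^2$ (noting $P_t^{-1} = P_t^\dagger$).

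One genuine difference worth noting: the paper inherits the commutation property $CQ = \pm QC$ from Proposition~\ref{prop:psc_propagation_case1}, whose proof in turn relies on the PSC canonical form (Proposition~\ref{prop:psc_normal_form}). Your direct verification via $C^2 P_t C^{-2} = P_C P_t P_C^{-1} = \pm P_t$ together with $Q^{-1} = \pm Q$ uses only the bare PSC definition and elementary Pauli-group facts, so for the substantive case it is slightly more self-contained. The paper's route is more modular; yours is more elementary.
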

\begin{proof}
    If $P$ does not contain an $X$ or $Y$ on the control qubit, the claim follows immediately from Proposition~\ref{prop:psc_propagation_case1}. If the Pauli on the control qubit is $X$, using Eqs.~\eqref{eq:identity_general_2} and \eqref{eq:propagation_psc}, we obtain:

\begin{equation}
     \begin{adjustbox}{height=0.65cm}
    \begin{quantikz}
& & \gate{X}  & \ctrl{1} & \\ 
& \qwbundle{t} & \gate{P} & \gate{C} &\end{quantikz} =
\begin{quantikz}
 & & \ctrl{1} & \gate{X}  & \ctrl{1} & & \\ 
 & \qwbundle{t} & \gate{C} & \gate{P Q'^{\dagger}C^2} & \gate{Q'^{\dagger} C^{2}}& \gate{C} &
\end{quantikz},
\end{adjustbox} 
    \end{equation}
where $Q'=CPC^{\dagger}P^{\dagger}$ is a $t$-qubit Pauli that either commutes or anticommutes with $C$ [Proposition~\ref{prop:psc_propagation_case1}]. In particular, $[C^{2}, Q^{'\dagger}]=0$. Therefore, $C$ and $Q=Q'^{\dagger} C^{2}$ either commutes or anticommutes. Moreover, since $C$ is a PSC, $Q'^{\dagger} C^{2}$ and $P Q'^{\dagger}C^2$ are Paulis, concluding the proof for this case. This argument also applies to the case in which the Pauli on the control qubit is $Y$.

\end{proof}

Theorem~\ref{thm:propagation_psc} is important because it establishes a nontrivial structural property of a Pauli error propagating through a controlled-Clifford. Since the controlled-Clifford is in the third level of the Clifford hierarchy [Lemma~\ref{lemma:psc_hierarchy}], the error may propagate to a Clifford error. However, the non-Pauli part of the error consists of two components that have nontrivial algebraic relations with the original controlled-Clifford. 

For instance, because $Q$ and $C$ in Eq.~\eqref{eq:propagation_psc_full} either commute or anticommute, we have
\begin{equation}
    \begin{adjustbox}{height=0.5cm}
\begin{quantikz}
& & \ctrl{1} & \ctrl{1}  & \\ 
& \qwbundle{t} & \gate{Q}&\gate{C} &
\end{quantikz}
\end{adjustbox}
=
\begin{cases}
 \begin{adjustbox}{height=0.65cm}
\begin{quantikz}
& & \ctrl{1} & \ctrl{1}  &  \gate{Z} &\\ 
& \qwbundle{t} & \gate{C}&\gate{Q} & &
\end{quantikz}
\end{adjustbox} \,\, &\text{if} \,\, \{Q,C \}=0, \\[20pt]
\begin{adjustbox}{height=0.5cm}
\begin{quantikz}
& & \ctrl{1} & \ctrl{1}  & \\ 
& \qwbundle{t} & \gate{C}&\gate{Q} &
\end{quantikz}
\end{adjustbox} \,\, &\text{if} \,\, [Q,C]=0.
\end{cases}
\label{eq:same-target-same-control-commutation}
\end{equation}
and 
\begin{equation}
    \begin{adjustbox}{height=0.65cm}
\begin{quantikz}
& & \ctrl{2} & &  & \\ & & & \ctrl{1} &  \\
& \qwbundle{t} & \gate{Q}&\gate{C} &
\end{quantikz}
\end{adjustbox}
=
\begin{cases}
 \begin{adjustbox}{height=0.65cm}
\begin{quantikz}
&  & & \ctrl{2} & \ctrl{1} & \\
& &\ctrl{1} &  &  \phase{} &\\
& \qwbundle{t} & \gate{C} &\gate{Q} & &
\end{quantikz}
\end{adjustbox} \,\, &\text{if} \,\, \{Q,C \}=0, \\[20pt]
\begin{adjustbox}{height=0.65cm}
\begin{quantikz}
& & & \ctrl{2} &  & \\ &
&\ctrl{1} &  &   &\\
& \qwbundle{t} & \gate{C} &\gate{Q} & &
\end{quantikz}
\end{adjustbox} \,\, &\text{if} \,\, [Q,C]=0.
\end{cases}
\label{eq:same-target-diff-control-commutation}
\end{equation}
Therefore, if the controlled-$Q$ encounters another controlled-$C$ gate later, they commute up to a potential ${CZ}$ gate on the controls. Similarly, because $Q$ is a Pauli, it will also commute with any other controlled-Pauli up to a potential ${CZ}$ gate on the controls. Moreover, if we were to propagate $C$ in Eq.~\eqref{eq:propagation_psc_full} through a controlled-Pauli, we only get an additional controlled-Pauli whose underlying Pauli either commutes or anticommutes with $C$ [Proposition~\ref{prop:Cliff_prop}]. 

\section{Error propagation in magic state preparation protocols}
\label{sec:canonical_family}

In this section, we summarize families of MSP protocols for which arbitrary circuit-level Pauli errors propagate through them to an end-of-circuit Clifford error. We consider protocols based on measuring logical PSCs [Sec.~\ref{subsec:logical-clifford-meas}], magic state distillation (MSD) at the logical level [Sec.~\ref{subsec:msd_non_Cliff}] and protocols based on the implementation of a transversal non-Clifford gate [Sec.~\ref{subsec:transv-non-Cliff}]. Before diving into any further details, we would like to highlight that error propagation for protocols from Sec.~\ref{subsec:logical-clifford-meas} is in general more involved than for protocols from Secs.~\ref{subsec:msd_non_Cliff} and~\ref{subsec:transv-non-Cliff}. This is due to the complex error propagation properties of logical PSC measurements [Sect.~\ref{sec:controlled_clifford}]. We also note that, compared to Sec.~\ref{subsec:logical-clifford-meas}, the discussions in Secs~\ref{subsec:msd_non_Cliff} and~\ref{subsec:transv-non-Cliff} do not rely heavily on the properties of PSCs. Nonetheless, we discuss these protocols for completeness.

We note that all protocols discussed in this section can be equivalently recast with measurements postponed to the end. For convenience, we stick to this convention throughout this section. One \textit{can} simulate midcircuit measurements and ancilla resets explicitly without having to postpone them to the end. However, in this case, the propagated error also includes projections by stabilizer states, which are more cumbersome (though still efficient) to simulate. For instance, the MSP protocol of Ref.~\cite{Davydova:2025ylx} requires midcircuit measurements and feed-forward Pauli corrections to guarantee fault-tolerance. We leave analysis of error propagation through mid-circuit measurements and ancilla resets to future work.

\subsection{Canonical family 1: logical PSC measurement}
\label{subsec:logical-clifford-meas}

\begin{figure*}[!t]
    \centering
    \includegraphics[width=0.75\textwidth]{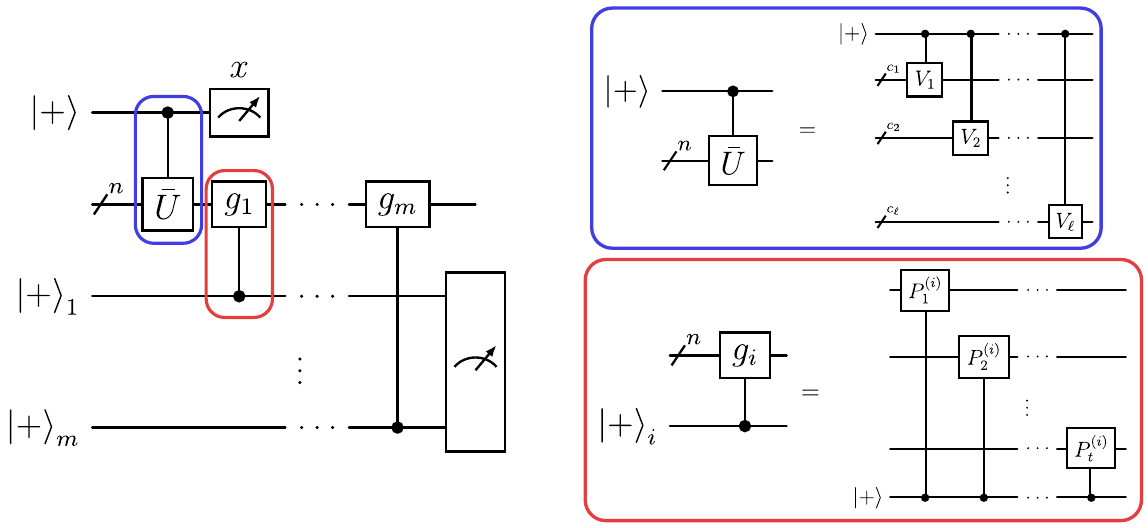}
    \caption{\textbf{Non-FT standard measurement gadget.} Measurement of a transversal logical Pauli-square-root Clifford $\bar{U}=\bigotimes_{i=1}^\ell V_i$, for Pauli-square-root Cliffords $V_i$, is followed by single-ancilla measurements of Pauli stabilizer generators $\{g_1,...,g_{m}\}$, where $g_i=\bigotimes_{j=1}^{t}P_j^{(i)}$ for single-qubit Paulis $P_j^{(i)}$.}
    \label{fig:non-ft-can-meas}
\end{figure*}

We describe a general framework for a family of protocols preparing logical magic states that are $+1$-eigenstates of logical PSC operators. Such protocols interleave layers of logical PSC measurements with stabilizer measurements~\cite{Goto:2016gss,Chamberland:2019ehl,Chamberland:2020axi,Itogawa_2025,Gidney:2024alh,sahay2025foldtransversalsurfacecodecultivation,claes2025cultivatingtstatessurface,Vaknin:2025pbp}. We show that the error propagation observed from the toy example of Sec.~\ref{sec:example} can be generalized to these cases.

As a warm-up, we first consider a non-fault-tolerant (non-FT) version 
of PSC measurement protocols. We then discuss a FT version using $|\text{CAT}\rangle$ states 
(where $|\text{CAT}\rangle=\frac{1}{\sqrt{2}}(|0\rangle^{\otimes s}+|1\rangle^{\otimes s})$ for some natural number $s$). Additionally, we also briefly discuss the use of flag qubits \cite{Chao_2018a,Chao_2018b,Chamberland_2018,Chamberland:2019ehl,Chamberland:2020axi} for achieving fault-tolerance in lieu of $|\text{CAT}\rangle$ states to improve ancilla overhead.

Our discussion will center around logical PSC gates, defined as follows.

\begin{definition}[Tranversal PSC gate] A logical unitary gate $\bar{U}$ is a transversal PSC if it can be written as a tensor product of PSC gates $V_i$ with disjoint support,
\begin{equation}
    \bar{U}=\bigotimes_{i=1}^\ell V_i\:.
\end{equation}
\label{def:transversal-psc}
\end{definition}

Note that we are using a liberal definition of transversality in that $V_i$ may act on more than one qubit. We make this choice in order to include logical Cliffords that are tensor products of multi-qubit PSCs. For example, in the folded surface code~\cite{Moussa:2016kgp}, the full set of logical Cliffords can be realized by a transversal PSC gate, wherein some $V_i$ act on two qubits instead of just one.

In the rest of this section, we focus on protocols that measure a transversal PSC $\bar{U}= \bigotimes_{i=1}^\ell V_i$ with the following additional properties:\footnote{Note that eigenstates of PSCs are not automatically magic states. For instance, the eigenstates of the $S$ gate are $\ket{0}$ and $\ket{1}$.  Therefore, checking whether a measurement of a logical PSC yields a magic state must be done on a case-by-case basis.}
\begin{enumerate}
    \item Every $V_i$ acts on at most $O(1)$ qubits.
    \item The PSC $\bar{U}$ has order $2$, i.e., $\bar{U}^2=I$.
\end{enumerate}
Let us briefly justify these choices. Without the first property, $\bar{U}$ may spread low-weight errors to undesirable high-weight errors. Indeed, to the best of our knowledge, all known protocols that measure a logical Clifford obey this assumption. The second assumption is imposed due to the fact that many existing protocols for preparing logical magic states measure PSCs of order $2$.\footnote{However, exceptions do exist, e.g., Ref.~\cite{claes2025cultivatingtstatessurface}. The Clifford being measured in this work is not a PSC.} We also note that generalizing to the case where $\bar{U}$ is of order $4$ is straightforward. Given $\bar{U}$ is a PSC, its square is a logical Pauli. The eigenphase of $\bar{U}$ can be determined by performing quantum phase estimation with $\bar{U}$, which involves applying both controlled-$\bar{U}$ and controlled-$\bar{U}^2$. The propagation of errors we discuss generalizes to this setting in a straightforward way. We also note that cases where $\bar{U}^4 = -I$ (due to a global phase) are redundant as this does not change $U$'s eigenstate. 

Now we introduce various protocols for measuring logical PSCs and prove that circuit-level Pauli errors propagate to an end-of-circuit Clifford error [Theorem~\ref{thm:standard-protocol-error-prop}] with polynomially-many one- and two-qubit gates. To provide some context, recall the MSP protocol discussed in Sec.~\ref{sec:example}. Here, a logical PSC and stabilizers of the $[[4,2,2]]$ code were measured twice (non-fault-tolerantly) in an alternating fashion. We envision generalizing this idea to protocols that repeatedly measure an arbitrary subset of a code's stabilizers and an available logical PSC, in an arbitrary order.

\begin{figure*}[!t]
    \centering
    \includegraphics[width=0.75\textwidth]{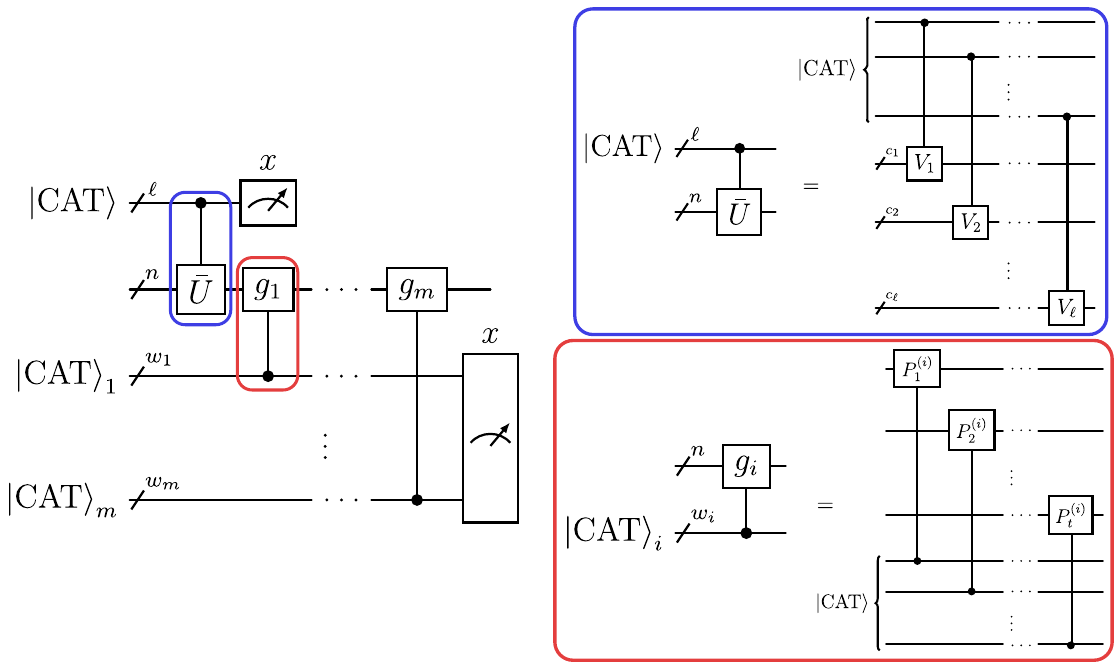}
    \caption{\textbf{Shor-style measurement gadget}. Measurement of a transversal logical Pauli-square-root Clifford $\bar{U}=\bigotimes_{i=1}^l V_i$ for Pauli-square-root Cliffords $V_i$ controlled by a weight-$\ell$ $|\text{CAT}\rangle$ state 
    is followed by Shor-style measurements of Pauli stabilizer generators $\{g_1,...,g_{m}\}$ where $g_i=\bigotimes_{j=1}^{t}P_j^{(i)}$ for single-qubit Paulis $P_j^{(i)}$ using weight-$w_i$ $|\text{CAT}\rangle$ states.}
    \label{fig:FT-canonical-meas-circ}
\end{figure*}

More formally, consider measuring a subset of stabilizers $S_M = \{ g_1, \ldots, g_m\}\subseteq S$, which can be specified in terms of a parity-check matrix $H$. We define sparsity parameters $w_c$ and $w_q$ associated with $H$ as the maximum row and column weights of this matrix, respectively. In particular, these are the maximum weight of the elements of $S_M$ and the maximum number of elements in $S_M$ acting nontrivially on a single-qubit, respectively. 

Given a logical PSC $\bar{U}$, we consider three types of protocols for measuring $\bar{U}$:
\begin{enumerate}
    \item \textbf{Standard protocol.} A protocol is \textit{standard} if every stabilizer/PSC is measured in a non-fault-tolerant manner with single-ancilla qubits; see Fig.~\ref{fig:non-ft-can-meas}.
    \item \textbf{Shor-style protocol.} A protocol is \textit{Shor-style} if every stabilizer/PSC is measured fault-tolerantly with $|\text{CAT}\rangle$ states; see Fig.~\ref{fig:FT-canonical-meas-circ}.
    \item \textbf{Flag-based.} A protocol is \textit{flag-based} if the stabilizer/PSC measurements involve flag qubits; see Fig.~\ref{fig:CH-meas-circ} for an example measuring $\bar{H}$ on the Steane code.
\end{enumerate}

For each type of protocol, we show that circuit-level Pauli errors propagate to Clifford errors with sufficiently-small gate descriptions, in effect enabling efficient simulation (since it is difficult to systematically cover all flag-based protocols, we only provide a general remark and an example.) At a high-level, we consider propagating a circuit-level error originating at a data or ancilla qubit and give a coarse upper-bound on the one- and two-qubit gate count of the resulting Clifford error. In particular, we show this upper-bound to be polynomial in the number of rounds of stabilizer and logical PSC measurements, the sparsity parameters $w_q$ and $w_c$ of parity-check matrix $H$, and the number of PSC tensor factors $V_i$ in $\bar{U}$ [Definition~\ref{def:transversal-psc}].

\begin{figure*}[t]
    \centering
    \includegraphics[width=0.9\linewidth]{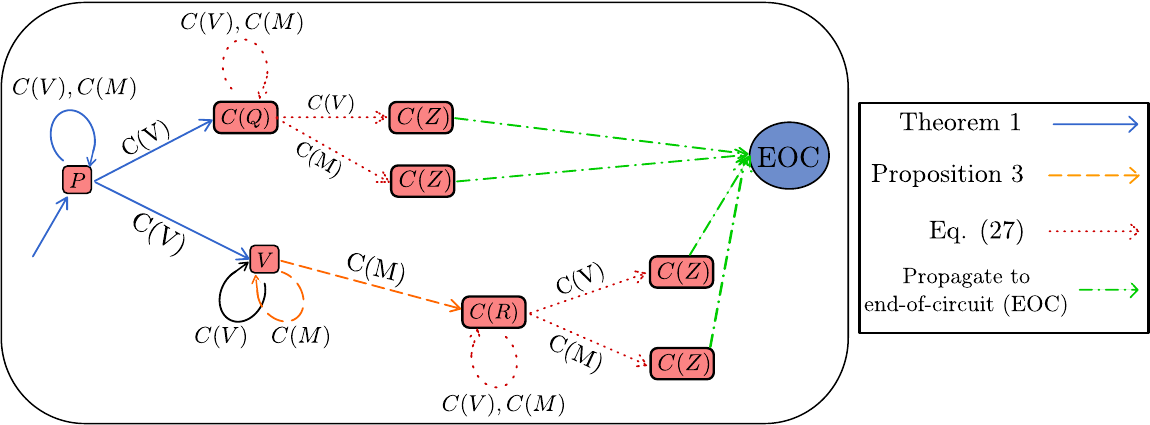}
    \caption{\textbf{Error propagation rules for canonical Pauli-square-root Clifford measurement protocols}. Here, $P$ is a Pauli and $V$ is a Pauli-square-root Clifford. $C(R)$ and $C(Q)$ are controlled-Paulis where $Q$ and $R$ either commute or anticommute with $V$. This diagram only shows nontrivial propagation of errors. For instance, an error (vertices) and a gate (edges) may have disjoint supports, in which case the error will remain the same. Such propagations are not included in this diagram.}
    \label{fig:propagation_rules_visual}
\end{figure*}

\begin{restatable}{theorem}{StandardProtocolErrorProp}
    Let $w_c$ and $w_q$ be the maximum row and column weight of a parity check matrix $H$. Consider a standard protocol consisting of $r = r_S + r_L$ rounds of stabilizer ($r_S$) and PSC ($r_L$) measurements. Any single-qubit circuit-level error propagates through this protocol to a Clifford error consisting of $O\left(r^2 w_q(w_c + \ell)\right)$ one- and two-qubit Clifford gates, for $\ell$ the number of PSC tensor factors $V_i$ in $\bar{U}$.
    \label{thm:standard-protocol-error-prop}
\end{restatable}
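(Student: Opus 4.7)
My plan is to explicitly propagate an arbitrary single-qubit Pauli error $E$ forward through the standard protocol of Fig.~\ref{fig:non-ft-can-meas}, applying the circuit identities of Section~\ref{subsec:psc_circuit_identities} whenever $E$ crosses a controlled gate, and to tally the resulting one- and two-qubit Clifford gates composing $C_\mathrm{prop}$. First I would reduce the analysis to a short list of base cases: a $Z$ error on an ancilla commutes with every controlled gate and propagates trivially; a $Y$ error decomposes into $X$ and $Z$ pieces; and an $X$ error on a PSC-measurement ancilla, upon crossing a controlled-$V_i$, is converted by Eq.~\eqref{eq:identity_general_2} into a PSC residue on data together with an accumulated controlled-Pauli correction. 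The substantive case is therefore the forward propagation of a data-side residue that is a Pauli, a single PSC factor $V_i$, or at most the full transversal logical $\bar{U}$ supported on $O(\ell)$ data qubits.

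Next, I would maintain two quantities throughout the propagation: the data-side residue and an accumulated list of controlled and two-ancilla Clifford gates. Using Propositions~\ref{prop:psc_propagation_case1} and~\ref{prop:Cliff_prop}, every crossing of a new controlled gate preserves both the Pauli/PSC structure of the residue and its support, while spawning at most one additional controlled correction whose target lies inside the support of the stabilizer check or PSC factor just traversed. Because the residue involves at most $O(\ell)$ data qubits, in a single round it interacts with at most $O(\ell w_q)$ stabilizer-checking controlled gates (by the column-weight bound $w_q$) and at most $O(\ell)$ PSC-checking controlled gates, contributing $O(w_q(w_c+\ell))$ newly accumulated Clifford gates per round once one accounts for the data-side weights of the spawned corrections.

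The remaining contribution comes from the interaction of the previously-accumulated corrections with the controlled gates of subsequent rounds. Any such correction has data-target of size at most $w_c$ (when stabilizer-spawned) or $O(1)$ (when PSC-spawned), and using Eqs.~\eqref{eq:same-target-same-control-commutation} and~\eqref{eq:same-target-diff-control-commutation} it commutes past each overlapping new controlled gate up to a single $CZ$ on the two ancilla controls. The key observation preventing an exponential cascade is that these $CZ$ gates live entirely on ancilla qubits and therefore cannot spawn further corrections in any later round; the cascade terminates after a single step of interaction.

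A double summation over the $r$ rounds and the linearly-growing list of accumulated corrections then delivers the bound $O\!\left(r^2\, w_q(w_c+\ell)\right)$. The hard part will be the careful bookkeeping around Theorem~\ref{thm:propagation_psc}: the secondary controlled-$C^b$ term generated when an $X$-on-control error is pushed past a controlled-PSC must be verified not to enlarge the data-side residue nor to initiate a second cascade, which follows from the containment $\mathrm{Supp}(Q)\subseteq \mathrm{Supp}(C)$ established in Proposition~\ref{prop:psc_propagation_case1}. Once this structural check is in hand, the bound reduces to a routine summation.
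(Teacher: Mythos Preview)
Your proposal is correct and follows essentially the same approach as the paper: both arguments rely on the structural facts that the data-side residue keeps its Pauli/PSC form and bounded support (Propositions~\ref{prop:psc_propagation_case1} and~\ref{prop:Cliff_prop}), that each crossing spawns a single controlled-Pauli correction, and that the resulting ancilla--ancilla $CZ$ gates from Eqs.~\eqref{eq:same-target-same-control-commutation}--\eqref{eq:same-target-diff-control-commutation} cannot propagate further, so a quadratic summation over rounds suffices. The only organizational difference is that the paper splits explicitly into the three cases ``data qubit / stabilizer-ancilla / PSC-ancilla'' and counts each separately (obtaining $O(r_Sr_Lw_q+r_L^2)$ for data and $O((w_c+\ell)(r_Sr_Lw_q+r_L^2))$ for ancilla), whereas you track a single worst-case residue of size $O(\ell)$ together with a growing list of accumulated corrections; both bookkeeping schemes deliver the same $O(r^2 w_q(w_c+\ell))$ bound.
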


We provide an in-depth proof of Theorem~\ref{thm:standard-protocol-error-prop} in Appendix~\ref{app:proof_theo2}. Given Theorem~\ref{thm:standard-protocol-error-prop}, we can straightforwardly extrapolate to an average-case upper-bound on the number of gates in a Clifford error due to propagating \textit{all} sampled circuit-level errors. In particular, if a protocol circuit has $M$ error locations, the one- and two-qubit gate count of the total propagated Clifford error scales as $O\left(Mr^2w_q(w_c + \ell)\right)$.

\begin{corollary}
    Consider a standard protocol consisting of $r = r_S + r_L$ rounds of stabilizer ($r_S$) and PSC ($r_L$) measurements, with $M$ error locations. Any error configuration on these $M$ error locations propagates through this protocol to an end-of-circuit Clifford error consisting of $O\left(Mr^2w_q(w_c + \ell)\right)$ one- and two-qubit Clifford gates.
    \label{corr:standard-protocol-full-error-prop}
\end{corollary}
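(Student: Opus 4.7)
The plan is to reduce this statement directly to Theorem~\ref{thm:standard-protocol-error-prop} via an iterative error-propagation argument. Label the error locations chronologically as $1, 2, \ldots, M$, and write the realized error configuration as a product $E_M \cdots E_2 E_1$ of single-qubit Paulis inserted at those locations (with $E_i = I$ corresponding to the no-error outcome at that location). The idea is that since each $E_i$ individually is a single-qubit Pauli, I can invoke Theorem~\ref{thm:standard-protocol-error-prop} once per error, and since Clifford errors produced at the end of the circuit commute past one another (they are all at the end), the resulting bounds simply add.

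I would execute this in the following order. First, propagate $E_1$ through all subsequent gates to the end of the circuit using the same identities (Propositions~\ref{prop:psc_propagation_case1}, \ref{prop:Cliff_prop}, and Theorem~\ref{thm:propagation_psc}) underlying the proof of Theorem~\ref{thm:standard-protocol-error-prop}; by that theorem the resulting end-of-circuit Clifford $C_1$ contains $O(r^2 w_q (w_c + \ell))$ one- and two-qubit gates. Next, starting from the original circuit with $E_2, \ldots, E_M$ still in place and $C_1$ parked at the very end, propagate $E_2$ to the end in the same manner; since $C_1$ lives outside the region swept through, the propagation sees exactly the same standard protocol structure assumed by Theorem~\ref{thm:standard-protocol-error-prop}, so it yields another end-of-circuit Clifford $C_2$ with $O(r^2 w_q (w_c + \ell))$ gates. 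Repeating for $i = 3, \ldots, M$ produces Cliffords $C_i$ with the same gate-count bound.

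At the end, the total propagated Clifford error is the composition $C_\mathrm{prop} = C_M \cdots C_2 C_1$, and its one- and two-qubit gate count is at most $\sum_{i=1}^M |C_i| = M \cdot O(r^2 w_q(w_c + \ell)) = O(M r^2 w_q (w_c + \ell))$, which is the claimed bound.

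The only subtle point, and the step I would be most careful about, is ensuring that propagating $E_{i+1}$ does not produce additional gates interacting with the already-parked $C_1, \ldots, C_i$ in a way that inflates the count. This is immediate once I observe that each $C_j$ sits strictly to the right of every gate $E_{i+1}$ still has to cross, so the propagation of $E_{i+1}$ is an independent application of Theorem~\ref{thm:standard-protocol-error-prop} and never touches the previously appended Cliffords; equivalently, one can rewrite the noisy circuit so that each $E_i$ is pushed independently to the end and the end-of-circuit Cliffords are simply concatenated. No cross-terms arise, so the additive bound is tight up to constants and the corollary follows.
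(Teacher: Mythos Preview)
Your proposal is correct and matches the paper's approach: the paper simply observes in the sentence preceding the corollary that the single-error bound of Theorem~\ref{thm:standard-protocol-error-prop} can be ``straightforwardly extrapolated'' by multiplying by $M$, and elsewhere (footnote in Section~\ref{sec:example}, Section~\ref{subsec:time-space-complexity}) makes explicit the same independent-propagation-then-concatenate idea you spell out. Your treatment is more detailed than the paper's, and your closing remark that the noisy circuit can be rewritten with each $E_i$ pushed independently is exactly the justification needed---just be careful that the concatenation order is $C_M\cdots C_1$ because each $C_i$ is obtained by conjugating $E_i$ through the \emph{gates only} (not through the other errors), which is precisely the algebraic identity $E_M G_M\cdots E_1 G_1 = C_M\cdots C_1\, G_M\cdots G_1$.
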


Additionally, the result from Theorem~\ref{thm:standard-protocol-error-prop} naturally upper-bounds the size of a Clifford error propagated through a Shor-style protocol. For instance, a circuit-level error on a stabilizer measurement ancilla qubit will spread to at most a \textit{single} data qubit rather than $w_c$ qubits due to transversality of the stabilizer measurement. Likewise, an error on a PSC measurement ancilla qubit would also only spread to at most \textit{single} data qubit instead of $\ell$ disjoint subsystems. As a result, the number of gates in a Clifford error propagated through a Shor-style protocol, originating from an ancilla qubit, would no longer have a dependence on $w_c$ or $\ell$. Hence we obtain the following result:

\begin{corollary}
    Consider a Shor-style protocol \footnote{Corollary 2 holds similarly for Steane and Knill-type error correction.} consisting of $r = r_S + r_L$ rounds of stabilizer ($r_S$) and PSC ($r_L$) measurements, with $M'$ error locations. Any error configuration on these $M'$ error locations propagates through this protocol to a Clifford error consisting of  $O\left(M'r^2w_q\right)$ one- and two-qubit Clifford gates.
    \label{corr:shor-style-protocol-full-error-prop}
\end{corollary}

One practical downside of using $|\text{CAT}\rangle$ states is its sizable ancilla overhead. Additionally, each $|\text{CAT}\rangle$ state has to be prepared and verified prior to performing measurements to maintain fault-tolerance \cite{shor1997faulttolerantquantumcomputation}. If postselection is being employed, this would mean incurring a much lower acceptance rate to prepare high-fidelity magic states.

As such in many practical scenarios, it is helpful to use flag qubits~\cite{Yoder_2017,Chao_2018a,Chao_2018b,Chamberland_2018,Chamberland:2020axi}. Flag qubit-based syndrome measurement schemes are engineered to detect when low-weight (correctable) faults propagate to faults with weight larger than $\lfloor (d-1)/2 \rfloor$, for code distance $d$. More precisely, a circuit is called a \textit{$t$-flag circuit} \cite{Chamberland_2018}, for $t = \lfloor (d-1)/2 \rfloor$, if all $v \leq t$ circuit faults that propagate to an error weight $> v$ are detected. Hence, such a $t$-flag circuit would require \textit{at most} $t$ ancilla qubits to detect an adversarial spread of errors.

Once again, just like the standard and Shor-style FT protocols, circuit-level Pauli noise propagates to end-of-circuit Clifford errors with sufficiently-few one- and two-qubit gates. This is because flag-based methods only introduce additional controlled-Pauli gates and at most an extra $t$ ancilla qubits per PSC measurement and stabilizer check. Indeed, we demonstrate a concrete example of such a flag-based PSC measurement protocol in Sec.~\ref{sub:sim-application-example}.

In conjunction with the above results, we note that rules for error propagation can be conveniently packaged into a directed graph structure. Recall that logical PSC measurement protocols are comprised entirely of controlled-PSC gates $C(V)$ and controlled-Pauli gates $C(M)$. To this end, Fig.~\ref{fig:propagation_rules_visual} specifies a complete set of recursive error propagation rules for circuit-level Pauli noise under these gates and shows that \textit{only certain} Clifford errors are propagated through to the end of a PSC measurement-based protocol. Thus, the propagated error is necessarily Clifford.

A motivation for Fig.~\ref{fig:propagation_rules_visual} is as follows: starting with Pauli errors (denoted by $P$), propagation through PSC measurements is given by Theorem~\ref{thm:propagation_psc}. Subsequently, propagation of the resulting controlled-Pauli errors through future PSC and stabilizer measurements is given by Eq.~\eqref{eq:same-target-diff-control-commutation}, incurring $C(Z)$ gates between ancilla qubits. Since we consider a circuit model in which ancillas are never reused, these $C(Z)$ errors can trivially be commuted to the end of the circuit. Finally, PSC errors (denoted $V$) can be commuted past further stabilizer measurements by Proposition~\ref{prop:Cliff_prop}.

\subsection{Canonical family 2: magic state distillation}
\label{subsec:msd_non_Cliff}

Magic state distillation (MSD) protocols~\cite{Bravyi_2005,meier2012magic,Bravyi_2012,Jones_2013a,fowler_surface_2013,Jones_2013b,duclos_cianci_distillation_2013,Duclos_Cianci_2015,Campbell_2017,O_Gorman_2017,Haah2018codesprotocols} based on stabilizer codes also propagate circuit-level Pauli errors to an end-of-circuit Clifford error. In light of~\cite{Zheng:2024dpx}, all known MSD protocols can be recast using the stabilizer reduction formalism~\cite{Campbell:2009jko}. Indeed, given an $[[n,k,d]]$ stabilizer code, a decoder $U_D$ consisting of Clifford operations acts on a $k$-qubit logical state $|\bar{\phi}\rangle$ as
\begin{equation}
    U_D|\bar{\phi}\rangle=|\phi\rangle \otimes |0\rangle^{\otimes (n-k)}\:.
\end{equation}
An $n$-to-$k$ MSD protocol for such a code can be engineered by non-fault-tolerantly preparing $n$ copies of the magic state as input $\rho_{in}=\rho_i^{\otimes n}$, followed by the application of the decoder $U_D$ and by a final round of $n-k$ measurements in the $Z$ basis. Upon postselecting the $+1$ measurement outcomes, the protocol outputs a distilled $k$-qubit magic state $\rho_{out}=\text{Tr}_{anc}(U_D\tilde{\rho}U_D^{\dagger})$ with higher fidelity, where $\tilde{\rho}\propto \Pi \rho_{in} \Pi$ and $\Pi$ is the projector onto the codespace.

MSD protocols described this way do not apply any non-Clifford gates during the distillation process. Instead, one only applies Clifford gates to non-Pauli stabilizer inputs. Thus, all circuit-level Pauli noise propagates to a Pauli error at the end. The fidelity of this noisy magic state can be estimated using extended stabilizer simulation, as we explain in more detail in Sec.~\ref{sec:fid_method}. Importantly, the simulation cost is polynomial in the stabilizer rank/Pauli rank of the \textit{output} logical magic state. Note that in many protocols of practical interest, e.g., $15$-to-$1$ protocol~\cite{Bravyi_2005}, the stabilizer/Pauli rank of the output is significantly smaller than that of the input. Therefore, this observation may lead to faster simulation of MSD protocols.

\subsection{Canonical family 3: transversal non-Clifford gate}
\label{subsec:transv-non-Cliff}

Another family of protocols that propagate circuit-level Pauli errors to an end-of-circuit Clifford error are those that rely on quantum error correcting (QEC) codes admitting a transversal non-Clifford gate from the third level of the Clifford hierarchy. Examples of such codes are 3D color codes~\cite{Bombin_2007,Bombin:2015tpp,Kubica:2014jue} and 3D surface codes~\cite{Vasmer:2019ovd}, which admit transversal $\bar{T}$ and $\overline{CCZ}$ gates, respectively. In general, to realize universal quantum computation, such codes are not used alone as several no-go theorems restrict the set of available fault-tolerant, logical gates~\cite{Eastin:2009tem,Bravyi:2012rnv}. Instead, one can use a \textit{pair} of codes with a complementary, fault-tolerant, and universal set of logical gates to circumvent these restrictions. Namely, this is done by switching back and forth between them in a technique known as code-switching~\cite{Anderson:2014jvy,bombin2016dimensionaljumpquantumerror,Beverland2021,Butt_2024,Daguerre:2024gjd,Daguerre:2025boq}. For instance, MSP protocols can use 3D color codes to prepare a magic $|\bar{T}\rangle=\bar{T}|\bar{+}\rangle$ state fault-tolerantly, which can then be teleported to a 2D color code~\cite{Daguerre:2024gjd} using a transversal entangling gate between both codes~\cite{Sullivan:2023qsg,Heussen:2024vtv} [Fig.~\ref{fig:code_switching}].

\begin{figure}[!h]
    \begin{adjustbox}{scale=0.9}
    \begin{quantikz}[column sep = 0.4cm, font=\large, wire types = {q,q}]
            \lstick{3D code: $\:\:|\bar{+}\rangle$}  & & \gate{\bar{T}} & \ctrl{1} & &\meter[1] {\bar{X}} \wire[d][1]{c}\\
            \lstick{2D code: $\:\:|\bar{0}\rangle$} &  &  & \targ{} &&  \gate{\bar{Z}}&\rstick{$|\bar{T}\rangle$} 
    \end{quantikz}\:.
    \end{adjustbox}
\caption{\textbf{Code-switching protocol between 3D and 2D color codes to prepare logical $|\bar{T}\rangle$ states~\cite{Daguerre:2024gjd}}. The logical non-Clifford $\bar{T}$ gate is transversal in the 3D color code.} \label{fig:code_switching} 
\end{figure}
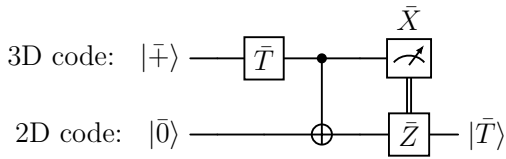

The structure of propagated Clifford errors for the family of MSP protocols described in Fig.~\ref{fig:code_switching} is simple. First, in a circuit-level noise model, the fault-tolerant preparation of the noisy $|\bar{+}\rangle$ state solely produces an initialization Pauli error frame $P_I$ such that $|\bar{+}\rangle=P_I|\bar{+}\rangle_{\text{noiseless}}$, for a noiseless plus state $|\bar{+}\rangle_{\text{noiseless}}$. In binary representation, $P_I \sim \bigotimes_{i=1}^n X_i^{a_i}Z_i^{b_i}$ for $a_i,b_i \in \{0,1\}$ (up to a global phase). Moreover, the transversal $\bar{T}$ gate in 3D color codes is such that
\begin{equation}
    \bar{T}=\left(\bigotimes_{i\in A}T_i\right)\left(\bigotimes_{i\in A^{C}}T_i^{\dagger}\right),
\end{equation}
for a disjoint partition $M=A \cup A^{C}$ of the set of physical qubits $M$ of the 3D color code~\cite{Bombin:2015tpp,Kubica:2014jue}. The Clifford error frame $C_\mathrm{prop}(P_I)$ is obtained upon propagating the Pauli error frame $P_I$ through $\bar{T}$. Indeed,
\begin{equation}
    \bar{T}|\bar{+}\rangle=(\bar{T}P_I\bar{T}^{\dagger})\bar{T}|\bar{+}\rangle_{\text{noiseless}}=C_\mathrm{prop}(P_I)|\bar{T}\rangle_{\text{noiseless}}\:,
    \label{eq:T_gate_3D}
\end{equation}
where
\begin{equation}
    C_\mathrm{prop}(P_I)=\bar{T}P_I\bar{T}^{\dagger}=\left(\bigotimes_{i\in A}S_i^{a_i}\right)\left(\bigotimes_{i\in A^{C}}S_i^{b_i\dagger}\right)P_I\:.\label{eq:CP_transversal}
\end{equation}
Hence, Eq.~\eqref{eq:T_gate_3D} shows that a logical $\bar{T}$ gate acting on a noisy $|\bar{+}\rangle$ state can be effectively described by a noiseless encoded logical magic state $|\bar{T}\rangle_{\text{noiseless}}$ subjected to a propagated Clifford error (\ref{eq:CP_transversal}).

\section{Fidelity estimation}
\label{sec:fid_method}

In the previous section, we showed how circuit-level Pauli noise propagates to a Clifford error on the final state [Fig.~\ref{fig:simulation-steps}(b)] for various MSP protocols. In this section, we give a detailed account on how to use the propagated Clifford error to estimate the fidelity of the output noisy logical magic state.

First, note that once errors are propagated to the end of the circuit, the noiseless circuit containing non-Clifford gates can freely act on the initial logical state to prepare a \textit{noiseless} logical magic state. Hence, the circuit can be effectively described as a noiseless logical magic state followed by the propagated Clifford error [Fig.~\ref{fig:simulation-steps}(c)]. In order to estimate the fidelity of such a noisy magic state using stabilizer simulations, we must circumvent the inability of such simulators to encode a non-Pauli stabilizer state.

One approach is to decompose the magic state as a linear combination of stabilizer states, namely, its stabilizer rank decomposition [Eq.~\eqref{eq:stab_rak}], and evolve each stabilizer state independently. We describe this method, which was originally proposed in Ref.~\cite{Bravyi:2018ugg}, in further detail in Appendix~\ref{app:stab_rank_sim}. This method keeps track of the global phase of stabilizer states, which is used to determine the relative phases needed in computing the overlaps between two such linear combinations of stabilizer states.

A second approach we consider relies on decomposing the magic state density matrix into a linear combination of Pauli matrices, namely its Pauli rank decomposition [Eq.~\eqref{eq:rep_rho_Paulis}] (which can be further decomposed into a linear combination of stabilizer state projectors)~\cite{Bu:2019qed}. This technique allows for simulating MSP protocols with a stabilizer simulator without the need to keep track of global phases. As such, it is readily applicable to commonly-used simulation tools such as \texttt{Stim}~\cite{gidney2021stim}. Although the technique was originally introduced in Ref.~\cite{Daguerre:2024gjd} for the special case of $|T\rangle$ states, here we provide a general framework suitable for arbitrary multi-qubit magic states.

We now describe how to efficiently estimate the fidelity $F = \mathrm{Tr}\left(\bar{\rho}\bar{\rho}_N\right)$ of a noisy $k$-qubit logical state $\bar{\rho}_N$ with respect to a target logical pure state $\bar{\rho}$. The noisy logical state is defined as  
\begin{equation}
    \bar{\rho}_N = \mathcal{E}_L(\bar{\rho}) = (\mathcal{D} \circ \mathcal{E} \circ \mathcal{U})(\bar{\rho})\:,
    \label{eq:noisy-logical-state}
\end{equation}
for an effective logical noise channel  $\mathcal{E}_L$. The channel $\mathcal{E}_L$ is composed of the encoding channel $\mathcal{U}$ which maps a logical $k$-qubit state onto an $n$-qubit state, a circuit-level noise channel $\mathcal{E}$ acting on an $n$-qubit state, and the decoding channel $\mathcal{D}$ that maps the noisy $n$-qubit state back onto a $k$-qubit logical state (up to a potential logical error). 

Since $k$-qubit Pauli operators form a basis for $2^k \times 2^k$ complex-valued matrices, one can re-express the magic state $\bar{\rho}$ as a linear combination of them as in Eq.~(\ref{eq:rep_rho_Paulis}),
\begin{equation}
    \bar{\rho} = \frac{1}{2^k}\bar{I}^{\otimes k} + \frac{1}{2^k}\sum_{i=1}^{p} \beta_i\bar{P}_i\:,
    \label{eq:pauli-rank-decomposition}
\end{equation}
where $\bar{P}_i \in \{\bar{I}, \bar{X}, \bar{Y},\bar{Z}\}^{\otimes k}\setminus\{\bar{I}^{\otimes k}\}$  and $\beta_i=\text{Tr}(\bar{\rho} \bar{P}_i)$. The number of nontrivial Paulis $p$ is known as the \textit{Pauli rank} of $\bar{\rho}$~\cite{Bu:2019qed}. Thus, by linearity of trace, it follows that 
\begin{equation}
    F = \frac{1}{2^k} + \frac{1}{2^k}\sum_{i=1}^{p}\beta_i\langle \bar{P}_i \rangle_{\bar{\rho}_N}\:,
    \label{eq:fid_magic}
\end{equation}
for expected values $\langle \bar{P}_i \rangle_{\bar{\rho}_N} = \mathrm{Tr}(\bar{P}_i \bar{\rho}_N)$. Recall that our goal is to efficiently estimate the fidelity $F$. Additionally, recall that we cannot specify nonstabilizer input states $\bar{\rho}_N$ in a stabilizer simulator. As such, we further decompose $\bar{\rho}_N$ into a linear combination of stabilizer state projectors. In particular, we can insert Eq.~\eqref{eq:pauli-rank-decomposition} into Eq.~\eqref{eq:noisy-logical-state} and then further decompose each logical Pauli $\bar{P}_i$ into such a linear combination. First, we write each $k$-qubit Pauli $\bar{P}_i$ as a tensor product of single-qubit Paulis
\begin{equation}
    \bar{P}_i = \bigotimes_{j=1}^k P_i^{(j)}\:,
\end{equation}
where $P_i^{(j)} \in \{\bar{I},\bar{X},\bar{Y},\bar{Z}\}$. Each single-qubit Pauli $P_i^{(j)}$ is thus either the identity or a non-convex combination of Pauli stabilizer states
\begin{equation}
    \bar{X}=|\bar{+}\rangle \langle \bar{+}|-|\bar{-}\rangle \langle \bar{-}|\:,
\end{equation}
\begin{equation}
    \bar{Y}=|\bar{Y}_+\rangle \langle \bar{Y}_+|-|\bar{Y}_-\rangle \langle \bar{Y}_-|\:,
\end{equation}
\begin{equation}
    \bar{Z}=|\bar{0}\rangle \langle \bar{0}|-|\bar{1}\rangle \langle \bar{1}|\:,
\end{equation}
\begin{equation}
    \bar{I} = |\bar{s}\rangle \langle \bar{s}|+|\bar{s}^{\perp}\rangle \langle \bar{s}^{\perp}|\:, \ \ \ \ \langle \bar{s}|\bar{s}^\perp \rangle = 0\:,
\end{equation}
for $|\bar{s}\rangle$ a Pauli stabilizer state (for instance, $\ket{\bar{s}} = \ket{\bar{+}}$ and $\ket{\bar{s}^\perp} = \ket{\bar{-}}$). Therefore, a nontrivial Pauli $\bar{P}_i$ can be written as a \textit{non-convex} combination of tensor products of Pauli stabilizer states 
\begin{equation}
    \bar{P}_i=\sum_{l=1}^{2^k}(-1)^{\gamma_l} \left(\bigotimes_{j=1}^k |\bar{s}_l^{(j)}\rangle \langle \bar{s}_l^{(j)}|\right)=\sum_{l=1}^{2^k}(-1)^{\gamma_l}\Pi^i_l\:,
    \label{eq:Pauli_comb_stab}
\end{equation}
where $\gamma_l \in \{0,1\}$, $|\bar{s}^{(j)}_l\rangle \langle \bar{s}^{(j)}_l|$ is a single logical-qubit stabilizer state projector acting on the $j$th logical qubit and $\Pi^i_l$ is a $k$-qubit stabilizer state projector, both corresponding to the $l$th term of the decomposition associated to $\bar{P}_i$. Hence, by the linearity of trace and the logical noise channel $\mathcal{E}_L$, 
\begin{equation}
    \langle \bar{P}_i \rangle_{\bar{\rho}_N}=\frac{1}{2^k}\sum_{j=1}^p\sum_{l=1}^{2^k}(-1)^{\gamma_l} \langle \bar{P}_i\rangle_{\mathcal{E}_L(\Pi^j_l)}\:,
\end{equation}
where $\langle \bar{P}_i\rangle_{\mathcal{E}_L(\Pi_l^j)}=\text{Tr}(\mathcal{E}_L(\Pi^j_l)\bar{P}_i)$. Finally, the fidelity (\ref{eq:fid_magic}) can be written as
\begin{equation}
    F = \frac{1}{2^k}+\frac{1}{4^{k}}\sum_{i,j = 1}^{p} \sum_{l=1}^{2^k}\beta_i\beta_j(-1)^{\gamma_l} \langle \bar{P}_i\rangle_{\mathcal{E}_L(\Pi^j_l)}\:.
\label{eq:fid_final}
\end{equation}
Thus, to estimate the fidelity of the noisy logical magic state $\bar{\rho}_N$, it suffices to run $O(p^22^k)$ different logical stabilizer circuits subjected to a propagated Clifford error. This is quadratic in the Pauli rank and exponential in the number of logical qubits. 

As a concrete example, consider the $k=1$ magic state $|\bar{T}\rangle \langle \bar{T}|$ with Pauli rank $p=2$, which can be written as
\begin{equation}
    |\bar{T}\rangle \langle \bar{T}|=\frac{1}{2}\bar{I}+\frac{1}{2\sqrt{2}}(\bar{X}+\bar{Y})\:.
\end{equation}
The fidelity (\ref{eq:fid_final}) has the following expression~\cite{Daguerre:2024gjd}:
\begin{equation}
    F_{|\bar{T}\rangle \langle \bar{T}|}=\frac{1}{2}+\frac{1}{8}\Delta \:,
\end{equation}
where
\begin{equation}
    \begin{split}
    \Delta=&\langle \bar{X}\rangle_{|\widetilde{+}\rangle}-\langle \bar{X}\rangle_{|\widetilde{-}\rangle}+\langle \bar{X}\rangle_{|\widetilde{Y_+}\rangle}-\langle \bar{X}\rangle_{|\widetilde{Y_-}\rangle}\\
    &+\langle \bar{Y}\rangle_{|\widetilde{+}\rangle}-\langle \bar{Y}\rangle_{|\widetilde{-}\rangle}+\langle \bar{Y}\rangle_{|\widetilde{Y_+}\rangle}-\langle \bar{Y}\rangle_{|\widetilde{Y_-}\rangle} \:.
    \end{split}
\end{equation}
In the last equation, the lower index $|\widetilde{s}\rangle$ stands for $\mathcal{E}_L(|\bar{s}\rangle \langle \bar{s}|)$, where $|\bar{s}\rangle \in \{|\bar{+}\rangle,|\bar{-}\rangle,|\bar{Y}_{+}\rangle,|\bar{Y}_{-}\rangle\}$.

\section{Simulation overview}
\label{sec:sim_technique}

Having described the families of MSP protocols that propagate circuit-level Pauli errors to an end-of-circuit Clifford error in Sec.~\ref{sec:canonical_family} and a fidelity estimation method in Sec.~\ref{sec:fid_method}, in this section we provide a holistic overview of our simulation technique (including a summary in Sec.~\ref{subsec:summary_sim_techn}). In Sec.~\ref{sub:sim-application-example}, we present a proof-of-principle numerical simulation of a MSP protocol that measures a transversal Clifford operator. Finally, in Sec.~\ref{subsec:time-space-complexity}, we analyze the space and time complexity of our method.

\subsection{Summary}
\label{subsec:summary_sim_techn}

Below, we provide an executive summary of our simulation technique. \newline

\noindent \textbf{Protocol definition:} MSP protocol, specified as a sequence of one- and two-qubit unitaries $\{U_i\}$ with $U_i \in \mathcal{C}^{(2)}$~or~$\mathcal{C}^{(3)}$, acting on a logical input state $|\bar{\psi}\rangle$ to produce a logical magic output state $|\bar{\phi}\rangle$\footnote{For protocols based on measurements of a logical PSC, the encoded state $|\bar{\psi}\rangle$ is equal to the output logical magic state $|\bar{\phi}\rangle$ stabilized by the PSC. For code-switching protocols described in this article, $|\bar{\psi}\rangle=|\bar{+}\rangle$ which differs from the the output magic state $|\bar{\phi}\rangle$.} such that $\prod_i U_i(\ket{\bar{\psi}} \otimes \ket{a}) = \ket{\bar{\phi}} \otimes \ket{a}$ for an ancilla register $|a\rangle$. Deferred projective measurements in the Pauli basis $\{M\}$ [Fig.~\ref{fig:simulation-steps}].\newline

\noindent \textbf{Simulation input:} MSP protocol. Circuit-level noise model of strength $p$, accounting for initialization $P_I$, gate-specific $\{P_i\}$ and measurement $P_M$ errors~[Appendix~\ref{app:error_model}]. Total number of Monte Carlo shots $N_{\text{tot}}$.\newline

\noindent \textbf{Simulation Output:} Fidelity $F=\langle \bar{\phi}|\bar{\rho}_N |\bar{\phi}\rangle=\text{Tr}(\bar{\rho}\bar{\rho}_N)$ of the output logical noisy magic state $\bar{\rho}_N$ with respect to the noiseless logical $\bar{\rho}=|\bar{\phi}\rangle \langle \bar{\phi}|$ magic state. Acceptance rate $p_{\text{acc}}=N_{\text{acc}}/N_{\text{tot}}$, for a number of accepted (preselected) shots $N_{\text{acc}}$.\newline

\begin{figure*}[!t]
    \centering
    \includegraphics[width=0.85\linewidth]{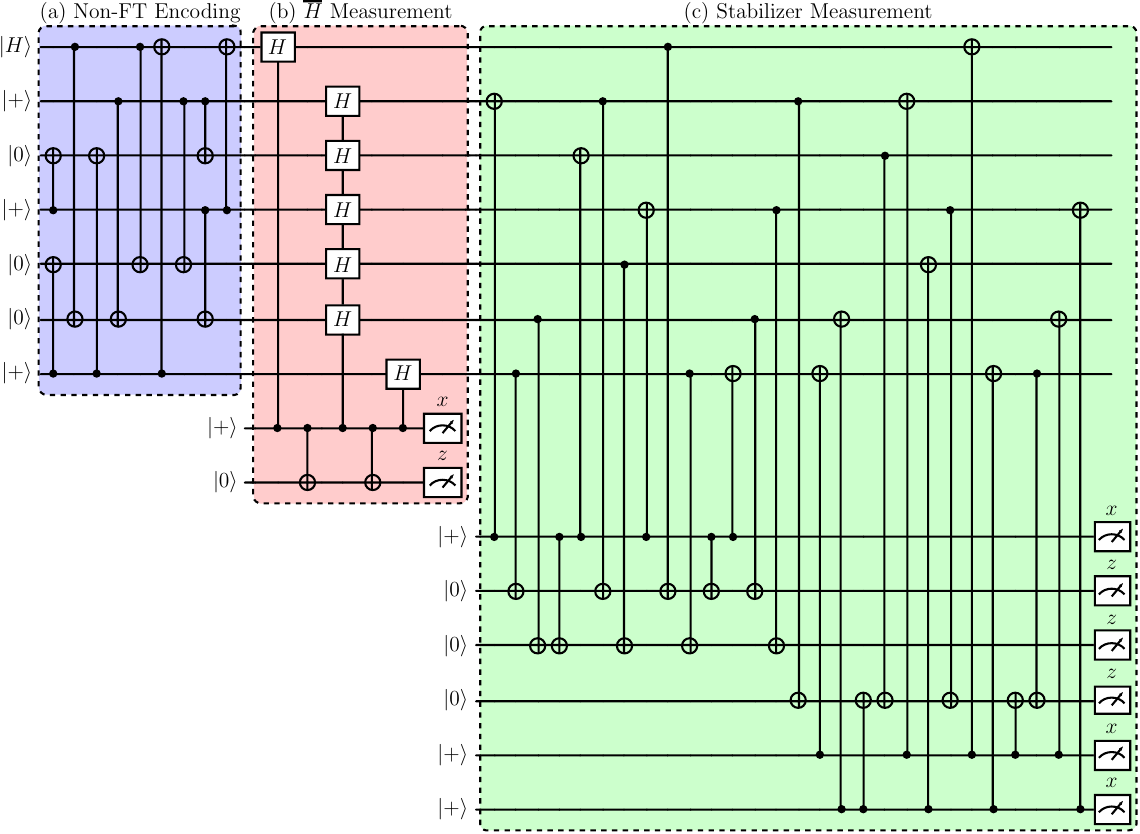}
    \caption{\textbf{Magic $|\bar{H}\rangle$ state preparation protocol on the [[7,1,3]] Steane code}. (a) Implements a unitary non-fault-tolerant encoding of the magic state $|\bar{H}\rangle$. (b) Measures the $\bar{H}$ operator using a 1-flag fault-tolerant gadget. Once again for ease of visualization, we compress otherwise sequentially-applied $C(H)$ gates into a single timeslice. (c) Performs a final round of fault-tolerant stabilizer measurements. Steps (b) and (c) correspond to a flag-based measurement with $(r_L,r_S)=(1,1)$. For the purposes of simulation, measurements are deferred to the end of the circuit and shots are postselected on $+1$ outcomes. See Figs.~\ref{fig:stim-vs-cirq-numerical-results1} and~\ref{fig:stim-vs-cirq-numerical-results2} for numerical results from simulating this protocol.}
    \label{fig:CH-meas-circ}
\end{figure*}

\noindent \textbf{Simulation Method:} 
\begin{itemize}
    \item \textbf{Error sampling [Fig.~\ref{fig:simulation-steps}(a)]:} Sample circuit-level Pauli errors at every location; see Appendix~\ref{app:error_model} for the error model. Of $N_{\text{tot}}$ samples, a fraction $f$ [Eq.~\eqref{eq:non_triv_Cliff}] of them contain nontrivial errors (non-identity Pauli strings).
    
    \item \textbf{Error propagation [Fig.~\ref{fig:simulation-steps}(b)]:} If nontrivial circuit-level Pauli errors are sampled, propagate these errors to an end-of-circuit Clifford error $C_\mathrm{prop}$ [Fig.~\ref{fig:simulation-steps}(b)].\footnote{In practice, it is enough to propagate Clifford errors up the last $U_i$ that belongs to $\mathcal{C}^{(3)}$.} 

    Upon error propagation, circuits containing non-Clifford gates can be effectively described by a noiseless logical magic state $|\bar{\phi}\rangle \otimes |a\rangle$ subjected to a Clifford error $C_\mathrm{prop}$ [Fig.~\ref{fig:simulation-steps}(c)].

    \item \textbf{State decomposition [Fig.~\ref{fig:simulation-steps}(d)]:} Decompose the noiseless logical magic state into a form that enables stabilizer simulation. For instance, $\bar{\rho}=|\bar{\phi}\rangle \langle \bar{\phi}|$ may be decomposed into a linear combination of logical Pauli matrices [Eq.~\eqref{eq:rep_rho_Paulis}], which can be further decomposed as a linear combination of logical stabilizer projectors. Alternatively, $|\bar{\phi}\rangle$  may be decomposed into a linear combination of stabilizer states [Eq.~\eqref{eq:stab_rak}].

    \item \textbf{Stabilizer-based simulation:} Apply the Clifford error $C_{\text{prop}}$ to a stabilizer state appearing in the aforementioned decomposition. Simulate measuring the $|a\rangle$ register, sampling over the postselected state. The logical error rate can be expressed as a linear combination of random variables that can be computed from such states and the noiseless magic state $|\bar{\phi}\rangle$; see Sec.~\ref{sec:fid_method} and Appendix~\ref{app:stab_rank_sim} for details. Take the sum of each expected value and renormalize by the acceptance rate $p_{acc}$, which yields the logical error probability.

\end{itemize}

\subsection{Numerical results: $|\bar{H}\rangle$ state preparation on the Steane code}
\label{sub:sim-application-example}

\begin{figure*}[!ht]
  \centering
  \begin{tabular}{cc}
  \includegraphics[width=0.5\textwidth]{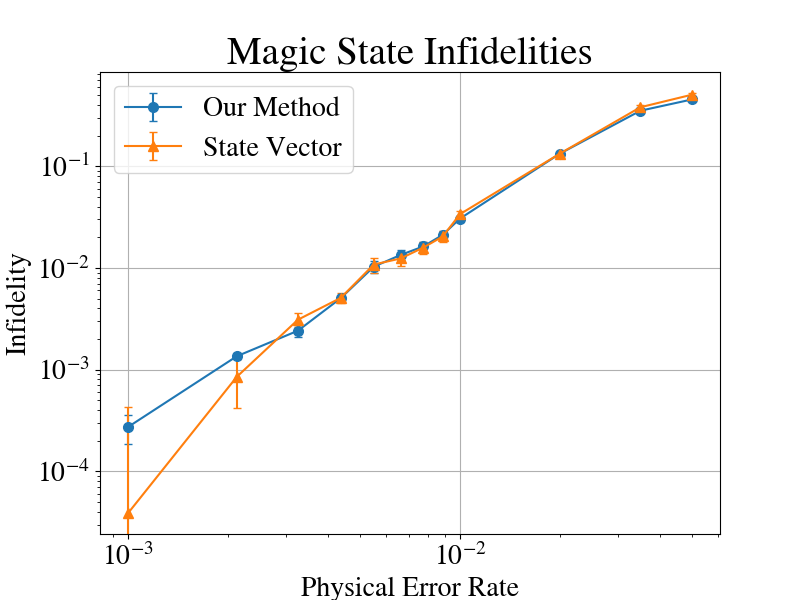} \label{fig:stim-cirq-infidelities}
  &
    \includegraphics[width=0.5\textwidth]{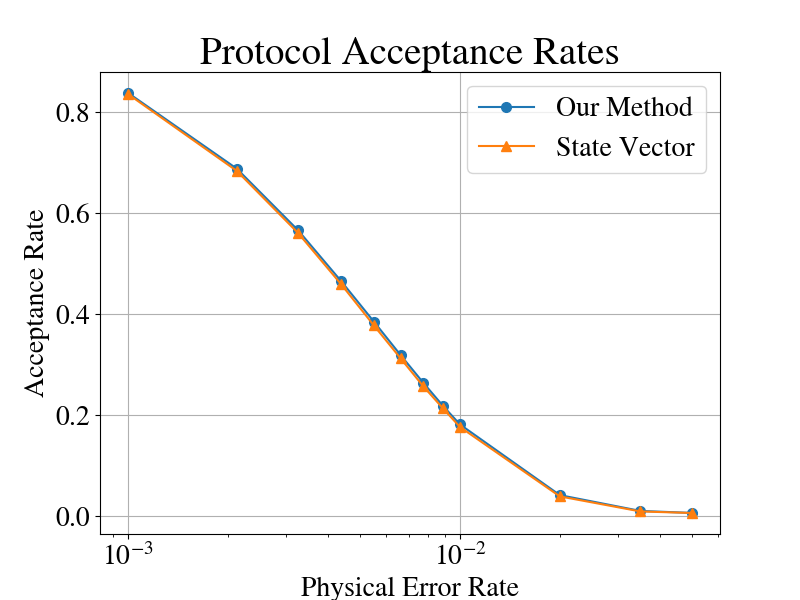} 
  \label{fig:stim-cirq-acceptance-rates} \\
    
    \small (a) & \small (b)
  \end{tabular}

    \caption{\textbf{Numerical simulations for the magic state $|\bar{H}\rangle$ preparation protocol on the $[[7,1,3]]$ Steane code under uniform circuit-level noise with physical error rate $p$}.
    Numerical simulations were performed with our phase-insensitive technique for $\bar{\rho}=|\bar{H}\rangle \langle \bar{H}| $ using the stabilizer-based simulator Stim~\cite{gidney2021stim} (blue) and for $|\bar{H}\rangle$  using the state-vector simulator Cirq~\cite{CirqDevelopers_2025} (orange). (a) Infidelity of the magic state as a function of the physical error rate $p$. At lower error rates, results from state-vector simulations slightly deviate from results from our technique due to finite sampling effects. (b) Acceptance rate as a function of the physical error rate $p$. We observe that the numerical results coincide for both simulation techniques. A comparison of the runtime for both simulation techniques can be found in Fig.~\ref{fig:stim-vs-cirq-numerical-results2}. Error bars correspond to one standard deviation. The number of shots employed for the stabilizer-based simulations was $N_{tot}\in[10^5,10^7]$ and $N_{tot}\in [10^5,10^6]$ for the state-vector simulations.}
   \label{fig:stim-vs-cirq-numerical-results1}
\end{figure*}

\begin{figure}[!ht]
  \centering
     \includegraphics[width=\linewidth]{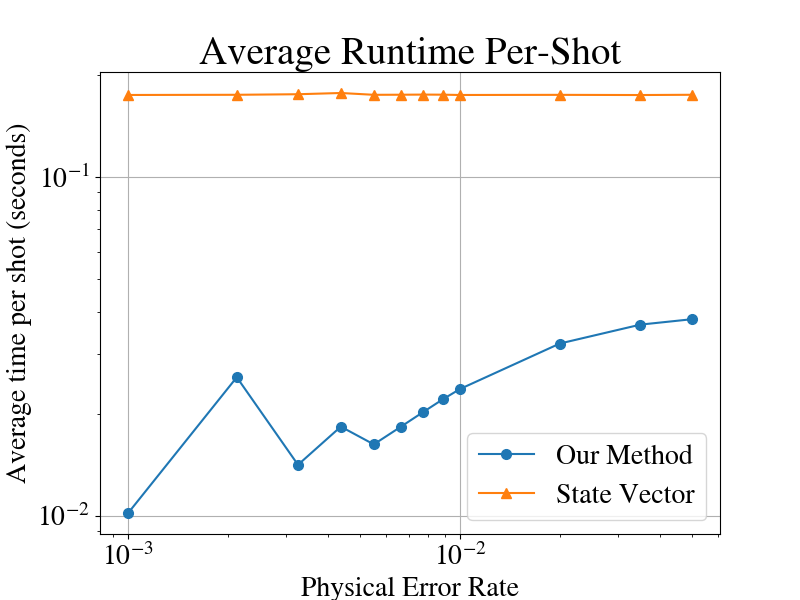}
    \label{fig:stim-cirq-avg-time-per-shot}
   
   \caption{\textbf{Average time per-simulation shot (in seconds) as a function of physical error rate $p$ for the magic state $|\bar{H}\rangle$ preparation protocol on the $[[7,1,3]]$ Steane code}. We observe an approximately $10\times$ improvement on the runtime of our technique (blue) compared to state-vector simulations (orange) for $p=10^{-3}$.}
   \label{fig:stim-vs-cirq-numerical-results2}
\end{figure}

As a proof-of-principle demonstration, we consider an MSP protocol~\cite{Goto:2016gss,Chamberland:2019ehl,Postler_2022} that prepares an $|\bar{H}\rangle$ state [Eq.(\ref{eq:magic_states})] stabilized by the Hadamard operator $\bar{H}$ in the [[7,1,3]] Steane code~\cite{Steane_1997}; see Fig.~\ref{fig:CH-meas-circ} for the circuit description of this protocol. This protocol is composed of a non-FT encoding of $|\bar{H}\rangle$ followed by a flag-based measurement of $\bar{H}$ with $r_L= 1$ PSC and $r_S=1$ stabilizer measurement rounds, respectively. We compare our phase-insensitive simulation technique with a state-vector simulation and achieve comparable results with sizable runtime improvements; see Figs.~\ref{fig:stim-vs-cirq-numerical-results1} and~\ref{fig:stim-vs-cirq-numerical-results2}). The slight difference in fidelity at low error rates is due to finite sampling effects.

We make a few remarks about our simulation. First, our phase-insensitive simulations rely on an explicitly-defined distribution for the propagated Clifford errors $C_\mathrm{prop}$ under circuit-level noise; see Appendix~\ref{app:prep_H} for details. Since obtaining an analytic formula for the propagated Clifford error is tedious in general, we provide an algorithm for systematically propagating circuit-level errors in Appendix~\ref{app:error_prop_alg}.

Second, in the simulation, on average only a small fraction of the simulation samples will have a nontrivial circuit-level error configuration. Let $M$ be the number of error locations. The expected fraction of samples that will contain an error $f$ is 
\begin{equation}
    f = 1- (1-p)^M \approx Mp \:,
    \label{eq:non_triv_Cliff}
\end{equation}
for some small error probability $p$. For instance, if $M=100$ and $p=10^{-3}$, one would expect 10\% of the samples to have nontrivial error configurations. The remaining samples need not require further processing.

Third, in \texttt{Stim}~\cite{gidney2021stim}, propagating errors that occur while encoding a logical magic state (prior to measuring them) can be done using the $\texttt{peek\_pauli\_flips()}$ function in the $\texttt{FlipSimulator()}$ class. More precisely, for fidelity estimation [Sec.~\ref{sec:fid_method}], one would plug in a set of stabilizer states corresponding to the stabilizer/Pauli rank decomposition of a magic state. For each such stabilizer state, error propagation can be performed with these functions, using the error model associated to the magic state [Appendix~\ref{app:error_model}].

Lastly, the main practical bottleneck of this approach, both in numerical simulations and hardware experiments involving magic states, is the sampling cost associated to the computation of Pauli expectation values necessary to estimate the fidelity. This is because these expectations have variance $O(1)$ since magic states are not eigenstates of logical Pauli operators. Indeed, the fidelity $F$ of magic states can be estimated as $F=1-O(1/\sqrt{N_\mathrm{tot}})$ for a total number of sampled shots $N_\mathrm{tot}$, which is in stark contrast with the fidelity of Pauli eigenstates $F= 1-O(1/N_\mathrm{tot})$~\cite{Mahler:2013mdg} \footnote{Recently,~\cite{Daguerre:2025boq} proposed a method to overcome this constraint by giving a lower bound $F\geq 1-O(1/N_\mathrm{tot})$ with quadratic improvement in sampling complexity at the expense of storing \textit{two copies} of the logical magic state and performing a logical Bell sampling; see also Ref.~\cite{Lee:2025lcs}.}. The phase-sensitive method can overcome this bottleneck in principle [Appendix~\ref{app:stab_rank_sim}], but we leave the implementation of this idea for future work.

For numerical examples of simulations for code-switching, we refer to~\cite{Daguerre:2024gjd} for a distance $3$ realization of Fig.~\ref{fig:code_switching}.

\subsection{Complexity}
\label{subsec:time-space-complexity}

We discuss the simulation complexity of our methods. Let $n_\mathrm{tot}$ be the total number of data and ancilla qubits the protocol acts on. We assume the protocol is defined by a measurement schedule consisting of $r$ rounds of measurements of either logical PSC or stabilizer measurements. Stabilizer measurements are defined in terms of a parity check matrix $H$, whose maximum row and column weights are $w_c$ and $w_q$, respectively. Finally, let $M$ denote the number of locations on which a Pauli error can occur.

We make the following simplifying assumptions. First, we focus on the standard protocol for measurements [Sec.~\ref{subsec:logical-clifford-meas}]. Second, we bound the number of tensor factors appearing in the logical PSC [Definition~\ref{def:transversal-psc}] with $n_{\mathrm{tot}}$. A tighter bound on the complexity may be obtained by noting that (i) the number of tensor factors of the logical PSC is strictly smaller than $n_{\mathrm{tot}}$ and (ii) the gate complexity of the propagated Clifford error in fault-tolerant protocols is lower [Corollary~\ref{corr:shor-style-protocol-full-error-prop}]. We do not discuss such details here.

Let us now discuss the time complexity of the two methods, the phase-insensitive method based on the Pauli rank decomposition [Sec.~\ref{sec:fid_method}] and the phase-sensitive method based on the stabilizer rank decomposition [Appendix~\ref{app:stab_rank_sim}] (for the space complexity, see Appendix~\ref{app:error_prop_alg} for more details). For both methods, each sample in a Monte Carlo simulation can be obtained by propagating a randomly sampled circuit-level error configuration to an end-of-circuit Clifford error. A conservative bound on the time complexity of obtaining this Clifford error is $O(Mr^3w_q^2n_\mathrm{tot}^2)$ [Appendix~\ref{app:error_prop_alg}].
Furthermore, each such propagated error consists of at most $O(Mr^2 w_qn_{\mathrm{tot}})$ one- and two-qubit Clifford gates [Corollary~\ref{corr:standard-protocol-full-error-prop}].

One approach to obtain the Clifford error is by sequentially pushing sampled circuit-level errors through the protocol circuit from left-to-right to obtain an end-of-circuit Clifford error [Appendix~\ref{app:error_prop_alg}]. A second approach, which we assume here, is to \textit{individually} propagate circuit-level errors at each of the $M$ error locations to a corresponding end-of-circuit Clifford error. These Clifford errors are then combined (in an appropriate order) to obtain the full end-of-circuit Clifford error. For either approach, this error can then be expressed as either a combined Clifford tableau (in the phase-insensitive method) or as a sequence of one- and two-qubit gates (in the phase-sensitive method). 

Subsequently, we describe the complexity of obtaining Monte Carlo samples for each of the two simulation methods. Note, these complexities do not include the one-time cost for constructing a look-up table that takes in as input an error location and a circuit-level error at that location and returns an end-of-circuit propagated Clifford error. In principle, such a look-up table can be constructed on a case-by-case basis using propagation rules as defined in Fig.~\ref{fig:propagation_rules_visual}. This look-up table is to be contrasted with the look-up table defined in Fig.~\ref{fig:propagation_alg_visual}, which computes propagation on a per-gate level.

\textbf{Phase-insensitive method}: For the Pauli rank-based method [Sec.~\ref{sec:fid_method}], it is convenient to store each of the propagated errors in the standard stabilizer tableau representation~\cite{Aaronson:2004xuh}. Combining these for the $M$ error locations would incurs an additional $O(Mn_{\mathrm{tot}}^3)$ time cost. This is the complexity needed for each sample. 

Note that for this method, there are additional parameters needed to specify the complexity. Let $p$ be the Pauli rank of the $k$-qubit logical magic state. To compute the fidelity of the noisy logical magic state, we require computing $p$ Pauli expectation values, each of which requires simulating $O(p2^k)$ Clifford circuits (comprised of the encoding circuit, the end-of-circuit Clifford error, and final measurements) on stabilizer state inputs. Therefore, to estimate the fidelity up to an additive error $\epsilon$, it suffices to have $O(p^22^k/\epsilon^2)$ samples. Thus the overall time cost is $O(p^22^kMn_{\mathrm{tot}}^3/\epsilon^2)$.

\textbf{Phase-sensitive method}: For the stabilizer rank-based method [Appendix~\ref{app:stab_rank_sim}], it is be convenient to store each of the propagated errors as a sequence of one- and two-qubit Clifford gates. The number of gates in the combined end-of-circuit Clifford error is $O(Mr^2w_qn_{\mathrm{tot}})$. Moreover, updating the stabilizer rank decomposition for each of these gates takes time at most $O(n_{\mathrm{tot}}^2)$~\cite{Bravyi:2018ugg}. Thus the complexity needed for each sample is $O(Mr^2w_qn_{\mathrm{tot}}^3)$. 

Let $q$ be the stabilizer rank of the logical magic state. To estimate the logical error rate up to an additive precision of $\epsilon$, it suffices to have $O(1/\epsilon)$ samples [Appendix~\ref{app:stab_rank_sim}]. For each such sample, the needed computation boils down to (i) updating $q$ stabilizer states using the aforementioned method and (ii) computing the overlap with $q$ stabilizer states. It is straightforward to see that the former costs $O(qMr^2w_qn_{\mathrm{tot}}^3)$. For the latter, note that each overlap can be computed in $O(n_{\textrm{tot}}^3)$ time~\cite[Lemma 3, 4]{Bravyi:2018ugg}. Thus, the overall time cost becomes $O(q^2 n_{\mathrm{tot}}^3)$. In the $q=O(1)$ regime, the overall time complexity is $O(Mr^2w_qn_{\mathrm{tot}}^3/\epsilon)$.

\textbf{Comparison}: These two methods have differing complexities and are incomparable at the moment. Nonetheless, the key point is that these methods have time and space complexity polynomial in the parameters that define the MSP protocol (except for the exponential dependence on $k$ for the phase-insensitive case). This is in stark contrast with expensive state-vector simulation, whose cost scales exponentially with $n_{\mathrm{tot}}$.

Finally, the additive error of the fidelity also determines the overhead complexity in each case, which in practice represents a fundamental limitation for the simulations. Concretely, one may consider the task of estimating the logical error rate $p_L$ up to (say) one standard deviation. For this task, the phase-sensitive method requires $O(1/p_L)$ independent samples whereas the phase-insensitive method requires $O(1/p_L^2)$ independent samples.

\section{Discussion}
\label{sec:discussion}

In this paper, we introduce a new method for simulating logical MSP protocols under circuit-level noise. Our method is applicable to a broad range of protocols, including cultivation-style methods~\cite{Goto:2016gss,Chamberland:2019ehl,Chamberland:2020axi,Itogawa_2025,Gidney:2024alh,sahay2025foldtransversalsurfacecodecultivation,claes2025cultivatingtstatessurface,Vaknin:2025pbp}, MSD at the logical level~\cite{Bravyi_2005,meier2012magic,Bravyi_2012,Jones_2013a,fowler_surface_2013,Jones_2013b,duclos_cianci_distillation_2013,Duclos_Cianci_2015,Campbell_2017,O_Gorman_2017,Haah2018codesprotocols}, and code-switching~\cite{Anderson:2014jvy,bombin2016dimensionaljumpquantumerror,Beverland2021,Butt_2024,Daguerre:2024gjd}. Thus, our work provides a flexible, accurate, and efficient method to benchmark these protocols.

We expect our method to enable the following studies, which we leave for future work. First, we may be able to address an open question concerning magic state cultivation~\cite{Gidney:2024alh}: the relationship between the fidelity of the prepared magic state and that of a closely related stabilizer state. Because there was no known method to simulate these protocols efficiently at large code distances, the ratio between the two at small code distance ($d=3$) was used to deduce the magic state fidelity at higher distances. It may be possible to rigorously test the validity of this assumption using our method. 

In order to do so, we will need to develop a software toolkit capable of executing large-scale simulations of MSP protocols. One practical complication resides in the implementation of a general-purpose, efficient error propagation algorithm (such as the one described in Appendix~\ref{app:error_prop_alg}). For instance, the Clifford error cannot naively be stored in its tableau representation since propagation rules through non-Clifford gates are cumbersome to implement without an explicit circuit representation. Another approach consists of storing the Clifford error as an ordered list of gates, where propagation rules can be defined and applied layer by layer.

Alternatively, to efficiently obtain many samples of propagated errors, one can precompute propagated errors for each error location in a protocol circuit and store them in a table as Clifford tableaus. This one-time cost, although initially expensive, will not contribute to the per-shot runtime complexity as propagated errors for each sampled error configuration can be quickly constructed via a few cheap accesses to the precomputed table. However, the implementation of such a software toolkit is beyond the scope of this article and its development is left for future work.

Second, we may be able to gain an analytic understanding of circuit-level Pauli noise in the presence of non-Clifford gates. The most striking effect of these gates is that a single-qubit Pauli error can propagate to an \emph{entangling Clifford error}; see Fig.~\ref{fig:422_propagation} for an example. Inspecting the form of the propagated errors may be helpful to further optimize quantum error correction protocols involving non-Clifford gates. In particular, it is important to understand the effectiveness of flag gadgets in the presence of these entangling Clifford errors~\cite{Yoder_2017,Chao_2018a,Chao_2018b}.

One intriguing possibility is that one may be able to decode errors better by taking the error propagation rule into account. In the presence of an entangling Clifford error, measurement of a syndrome qubit induces a projective measurement on a subset of data qubits in some known basis. For instance, if there is a CZ error between a data qubit and a syndrome qubit, measuring the syndrome qubit in the $X$-basis induces a $Z$-basis measurement of the data qubit, wherein the postmeasurement state of the data qubit is completely determined by the $X$-basis measurement. This extra information may be useful for the decoder. We leave such studies for future work.

We also leave the following open questions for future work. One question is whether our method can be generalized to alternative protocols. For instance, it is possible to apply a non-Clifford gate by viewing a three-dimensional protocol involving a transversal non-Clifford gate as a two-dimensional protocol  ~\cite{bombin20182dquantumcomputation3d,Davydova:2025ylx,kobayashi2025cliffordhierarchystabilizercodes,bauer2025planarfaulttolerantcircuitsnonclifford} evolving in time (under some additional conditions). In these methods, the syndrome measurements cannot be deferred to the end, making our current simulation technique inadequate. We also note that these protocols can be viewed as fault-tolerant gates of  recently studied Clifford stabilizer codes~\cite{Davydova:2025ylx,kobayashi2025cliffordhierarchystabilizercodes}, which are in fact generated by PSCs. Can the special properties of PSCs [Sec.~\ref{sec:controlled_clifford}] let us extend the stabilizer formalism to such Clifford stabilizers? Another family of protocols one may consider is a cultivation-style protocol involving Cliffords that do not square to a Pauli~\cite{claes2025cultivatingtstatessurface}. It is unclear if a central insight of this paper --- propagation of circuit-level Pauli noise to a Clifford --- is valid in such protocols.

Lastly, we remark that our method is useful only for methods that prepare a small amount of magic. How to efficiently simulate MSP protocols that can produce a large amount of magic \cite{wills2024constantoverheadmagicstatedistillation,golowich2024asymptoticallygoodquantumcodes,nguyen2024goodbinaryquantumcodes,golowich2024quantumldpccodestransversal} remains an interesting open question.
%
\smallskip
\vspace{1em}
{\centerline{\textbf{ACKNOWLEDGMENTS}}\par}
\vspace{1em}
We thank Michael Gullans, Dominic Williamson, Takada Yugo, Kwok Ho Wan for helpful discussions. We acknowledge support from the Advanced Scientific Computing Research program in the Office of Science of the Department of Energy (DE-SC0026109). LD is supported by a Dean's Distinguished Graduate Fellowship from the College of Letters and Science of the University of California, Davis. 

\vspace{1em}
{\centering \textbf{DATA AVAILABILITY}\par}
\vspace{1em}

All the datasets and codes can be made available upon reasonable request to the corresponding author.

\bibliography{bib_magic_simulation}

\providecommand{\href}[2]{#2}\begingroup\raggedright\begin{thebibliography}{10}

\bibitem{Shor1995}
P.~W. Shor, ``{Scheme for reducing decoherence in quantum computer memory},'' \href{http://dx.doi.org/10.1103/physreva.52.r2493}{{\em Phys. Rev. A} {\bfseries 52} no.~4, (1995) R2493--R2496}.

\bibitem{aharonov1997fault}
D.~Aharonov and M.~Ben-Or, ``{Fault-Tolerant Quantum Computation with Constant Error Rate},'' \href{http://dx.doi.org/10.1137/S0097539799359385}{{\em SIAM J. Comput.} {\bfseries 38} no.~4, (2008) 1207--1282}, \href{http://arxiv.org/abs/quant-ph/9906129}{{\ttfamily arXiv:quant-ph/9906129}}.

\bibitem{dennis2002}
E.~Dennis, A.~Kitaev, A.~Landahl, and J.~Preskill, ``Topological quantum memory,'' \href{http://dx.doi.org/10.1063/1.1499754}{{\em Journal of Mathematical Physics} {\bfseries 43} no.~9, (09, 2002) 4452--4505}. \url{https://doi.org/10.1063/1.1499754}.

\bibitem{Horsman_2012}
D.~Horsman, A.~G. Fowler, S.~Devitt, and R.~V. Meter, ``Surface code quantum computing by lattice surgery,'' \href{http://dx.doi.org/10.1088/1367-2630/14/12/123011}{{\em New Journal of Physics} {\bfseries 14} no.~12, (Dec, 2012) 123011}. \url{https://doi.org/10.1088/1367-2630/14/12/123011}.

\bibitem{Bombin2006}
H.~Bombin and M.~A. Martin-Delgado, ``Topological quantum distillation,'' \href{http://dx.doi.org/10.1103/PhysRevLett.97.180501}{{\em Phys. Rev. Lett.} {\bfseries 97} (Oct, 2006) 180501}. \url{https://link.aps.org/doi/10.1103/PhysRevLett.97.180501}.

\bibitem{Cohen2022}
L.~Z. Cohen, I.~H. Kim, S.~D. Bartlett, and B.~J. Brown, ``{Low-overhead fault-tolerant quantum computing using long-range connectivity},'' \href{http://dx.doi.org/10.1126/sciadv.abn1717}{{\em Sci. Adv.} {\bfseries 8} no.~20, (2022) abn1717}, \href{http://arxiv.org/abs/2110.10794}{{\ttfamily arXiv:2110.10794 [quant-ph]}}.

\bibitem{Quintavalle2023partitioningqubits}
A.~O. Quintavalle, P.~Webster, and M.~Vasmer, ``Partitioning qubits in hypergraph product codes to implement logical gates,'' \href{http://dx.doi.org/10.22331/q-2023-10-24-1153}{{\em {Quantum}} {\bfseries 7} (Oct., 2023) 1153}. \url{https://doi.org/10.22331/q-2023-10-24-1153}.

\bibitem{Bravyi_2005}
S.~Bravyi and A.~Kitaev, ``Universal quantum computation with ideal clifford gates and noisy ancillas,'' \href{http://dx.doi.org/10.1103/physreva.71.022316}{{\em Physical Review A} {\bfseries 71} no.~2, (Feb., 2005) }. \url{http://dx.doi.org/10.1103/PhysRevA.71.022316}.

\bibitem{meier2012magic}
A.~M. Meier, B.~Eastin, and E.~Knill, ``{Magic-state distillation with the four-qubit code},'' \href{http://arxiv.org/abs/1204.4221}{{\ttfamily arXiv:1204.4221 [quant-ph]}}.

\bibitem{Bravyi_2012}
S.~Bravyi and J.~Haah, ``Magic-state distillation with low overhead,'' \href{http://dx.doi.org/10.1103/physreva.86.052329}{{\em Physical Review A} {\bfseries 86} no.~5, (Nov., 2012) }. \url{http://dx.doi.org/10.1103/PhysRevA.86.052329}.

\bibitem{Jones_2013a}
C.~Jones, ``Low-overhead constructions for the fault-tolerant toffoli gate,'' \href{http://dx.doi.org/10.1103/physreva.87.022328}{{\em Physical Review A} {\bfseries 87} no.~2, (Feb., 2013) }. \url{http://dx.doi.org/10.1103/PhysRevA.87.022328}.

\bibitem{fowler_surface_2013}
A.~G. Fowler, S.~J. Devitt, and C.~Jones, ``Surface code implementation of block code state distillation,'' \href{http://dx.doi.org/10.1038/srep01939}{{\em Scientific Reports} {\bfseries 3} no.~1, (June, 2013) 1939}. \url{https://www.nature.com/articles/srep01939}. Publisher: Nature Publishing Group.

\bibitem{Jones_2013b}
C.~Jones, ``Multilevel distillation of magic states for quantum computing,'' \href{http://dx.doi.org/10.1103/physreva.87.042305}{{\em Physical Review A} {\bfseries 87} no.~4, (Apr., 2013) }. \url{http://dx.doi.org/10.1103/PhysRevA.87.042305}.

\bibitem{duclos_cianci_distillation_2013}
G.~Duclos-Cianci and K.~M. Svore, ``Distillation of nonstabilizer states for universal quantum computation,'' \href{http://dx.doi.org/10.1103/PhysRevA.88.042325}{{\em Physical Review A} {\bfseries 88} no.~4, (Oct., 2013) 042325}. \url{https://link.aps.org/doi/10.1103/PhysRevA.88.042325}. Publisher: American Physical Society.

\bibitem{Duclos_Cianci_2015}
G.~Duclos-Cianci and D.~Poulin, ``Reducing the quantum-computing overhead with complex gate distillation,'' \href{http://dx.doi.org/10.1103/physreva.91.042315}{{\em Physical Review A} {\bfseries 91} no.~4, (Apr., 2015) }. \url{http://dx.doi.org/10.1103/PhysRevA.91.042315}.

\bibitem{Campbell_2017}
E.~T. Campbell and M.~Howard, ``Unified framework for magic state distillation and multiqubit gate synthesis with reduced resource cost,'' \href{http://dx.doi.org/10.1103/physreva.95.022316}{{\em Physical Review A} {\bfseries 95} no.~2, (Feb., 2017) }. \url{http://dx.doi.org/10.1103/PhysRevA.95.022316}.

\bibitem{O_Gorman_2017}
J.~O’Gorman and E.~T. Campbell, ``Quantum computation with realistic magic-state factories,'' \href{http://dx.doi.org/10.1103/physreva.95.032338}{{\em Physical Review A} {\bfseries 95} no.~3, (Mar., 2017) }. \url{http://dx.doi.org/10.1103/PhysRevA.95.032338}.

\bibitem{Haah2018codesprotocols}
J.~Haah and M.~B. Hastings, ``Codes and {P}rotocols for {D}istilling {$T$}, controlled-{$S$}, and {T}offoli {G}ates,'' \href{http://dx.doi.org/10.22331/q-2018-06-07-71}{{\em {Quantum}} {\bfseries 2} (June, 2018) 71}. \url{https://doi.org/10.22331/q-2018-06-07-71}.

\bibitem{Gidney2019efficientmagicstate}
C.~Gidney and A.~G. Fowler, ``Efficient magic state factories with a catalyzed {$|CCZ\rangle$} to {$2|T\rangle$} transformation,'' \href{http://dx.doi.org/10.22331/q-2019-04-30-135}{{\em {Quantum}} {\bfseries 3} (Apr., 2019) 135}. \url{https://doi.org/10.22331/q-2019-04-30-135}.

\bibitem{Reiher2017}
M.~Reiher, N.~Wiebe, K.~M. Svore, D.~Wecker, and M.~Troyer, ``{Elucidating reaction mechanisms on quantum computers},'' \href{http://dx.doi.org/10.1073/pnas.1619152114}{{\em Proc. Nat. Acad. Sci.} {\bfseries 114} no.~29, (2017) 7555--7560}, \href{http://arxiv.org/abs/1605.03590}{{\ttfamily arXiv:1605.03590 [quant-ph]}}.

\bibitem{Li_2015}
Y.~Li, ``A magic state’s fidelity can be superior to the operations that created it,'' \href{http://dx.doi.org/10.1088/1367-2630/17/2/023037}{{\em New Journal of Physics} {\bfseries 17} no.~2, (Feb., 2015) 023037}. \url{http://dx.doi.org/10.1088/1367-2630/17/2/023037}.

\bibitem{Yoder_2017}
T.~J. Yoder and I.~H. Kim, ``The surface code with a twist,'' \href{http://dx.doi.org/10.22331/q-2017-04-25-2}{{\em Quantum} {\bfseries 1} (Apr., 2017) 2}. \url{http://dx.doi.org/10.22331/q-2017-04-25-2}.

\bibitem{Chao_2018a}
R.~Chao and B.~W. Reichardt, ``Quantum error correction with only two extra qubits,'' \href{http://dx.doi.org/10.1103/physrevlett.121.050502}{{\em Physical Review Letters} {\bfseries 121} no.~5, (Aug., 2018) }. \url{http://dx.doi.org/10.1103/PhysRevLett.121.050502}.

\bibitem{Chao_2018b}
R.~Chao and B.~W. Reichardt, ``Fault-tolerant quantum computation with few qubits,'' \href{http://dx.doi.org/10.1038/s41534-018-0085-z}{{\em npj Quantum Information} {\bfseries 4} no.~1, (Sept., 2018) }. \url{http://dx.doi.org/10.1038/s41534-018-0085-z}.

\bibitem{Chamberland:2020axi}
C.~Chamberland and K.~Noh, ``{Very low overhead fault-tolerant magic state preparation using redundant ancilla encoding and flag qubits},'' \href{http://dx.doi.org/10.1038/s41534-020-00319-5}{{\em npj Quantum Inf.} {\bfseries 6} (2020) 91}.

\bibitem{Itogawa_2025}
T.~Itogawa, Y.~Takada, Y.~Hirano, and K.~Fujii, ``Efficient magic state distillation by zero-level distillation,'' \href{http://dx.doi.org/10.1103/thxx-njr6}{{\em PRX Quantum} {\bfseries 6} no.~2, (June, 2025) }. \url{http://dx.doi.org/10.1103/thxx-njr6}.

\bibitem{Gidney:2024alh}
C.~Gidney, N.~Shutty, and C.~Jones, ``{Magic state cultivation: growing T states as cheap as CNOT gates},'' \href{http://arxiv.org/abs/2409.17595}{{\ttfamily arXiv:2409.17595 [quant-ph]}}.

\bibitem{Daguerre:2024gjd}
L.~Daguerre and I.~H. Kim, ``{Code switching revisited: Low-overhead magic state preparation using color codes},'' \href{http://dx.doi.org/10.1103/PhysRevResearch.7.023080}{{\em Phys. Rev. Res.} {\bfseries 7} no.~2, (2025) 023080}, \href{http://arxiv.org/abs/2410.07327}{{\ttfamily arXiv:2410.07327 [quant-ph]}}.

\bibitem{sahay2025foldtransversalsurfacecodecultivation}
K.~Sahay, P.-K. Tsai, K.~Chang, Q.~Su, T.~B. Smith, S.~Singh, and S.~Puri, ``{Fold-transversal surface code cultivation},'' \href{http://arxiv.org/abs/2509.05212}{{\ttfamily arXiv:2509.05212 [quant-ph]}}.

\bibitem{claes2025cultivatingtstatessurface}
J.~Claes, ``{Cultivating T states on the surface code with only two-qubit gates},'' \href{http://arxiv.org/abs/2509.05232}{{\ttfamily arXiv:2509.05232 [quant-ph]}}.

\bibitem{Vaknin:2025pbp}
Y.~Vaknin, S.~Jacoby, A.~Grimsmo, and A.~Retzker, ``{Magic State Cultivation on the Surface Code},'' \href{http://arxiv.org/abs/2502.01743}{{\ttfamily arXiv:2502.01743 [quant-ph]}}.

\bibitem{Chen:2025imz}
Z.-H. Chen, M.-C. Chen, C.-Y. Lu, and J.-W. Pan, ``{Efficient Magic State Cultivation on $\mathbb{RP}^2$},'' \href{http://arxiv.org/abs/2503.18657}{{\ttfamily arXiv:2503.18657 [quant-ph]}}.

\bibitem{Lacroix:2024vls}
N.~Lacroix {\em et~al.}, ``{Scaling and logic in the color code on a superconducting quantum processor},'' \href{http://arxiv.org/abs/2412.14256}{{\ttfamily arXiv:2412.14256 [quant-ph]}}.

\bibitem{Rodriguez:2024bhh}
P.~S. Rodriguez {\em et~al.}, ``{Experimental demonstration of logical magic state distillation},'' \href{http://dx.doi.org/10.1038/s41586-025-09367-3}{{\em Nature} {\bfseries 645} no.~8081, (2025) 620--625}, \href{http://arxiv.org/abs/2412.15165}{{\ttfamily arXiv:2412.15165 [quant-ph]}}.

\bibitem{pogorelov2024experimentalfaulttolerantcodeswitching}
I.~Pogorelov, F.~Butt, L.~Postler, C.~D. Marciniak, P.~Schindler, M.~M{\"u}ller, and T.~Monz, ``{Experimental fault-tolerant code switching},'' \href{http://dx.doi.org/10.1038/s41567-024-02727-2}{{\em Nature Phys.} {\bfseries 21} no.~2, (2025) 298--303}, \href{http://arxiv.org/abs/2403.13732}{{\ttfamily arXiv:2403.13732 [quant-ph]}}.

\bibitem{Kim:2024vmw}
Y.~Kim, M.~Sevior, and M.~Usman, ``{Magic State Injection on IBM Quantum Processors Above the Distillation Threshold},'' \href{http://arxiv.org/abs/2412.01446}{{\ttfamily arXiv:2412.01446 [quant-ph]}}.

\bibitem{Ye:2023hxg}
Y.~Ye {\em et~al.}, ``{Logical Magic State Preparation with Fidelity beyond the Distillation Threshold on a Superconducting Quantum Processor},'' \href{http://dx.doi.org/10.1103/PhysRevLett.131.210603}{{\em Phys. Rev. Lett.} {\bfseries 131} no.~21, (2023) 210603}, \href{http://arxiv.org/abs/2305.15972}{{\ttfamily arXiv:2305.15972 [quant-ph]}}.

\bibitem{Gupta:2023zei}
R.~S. Gupta {\em et~al.}, ``{Encoding a magic state with beyond break-even fidelity},'' \href{http://dx.doi.org/10.1038/s41586-023-06846-3}{{\em Nature} {\bfseries 625} no.~7994, (2024) 259--263}, \href{http://arxiv.org/abs/2305.13581}{{\ttfamily arXiv:2305.13581 [quant-ph]}}.

\bibitem{Postler_2022}
L.~Postler, S.~Heuben, I.~Pogorelov, M.~Rispler, T.~Feldker, M.~Meth, C.~D. Marciniak, R.~Stricker, M.~Ringbauer, R.~Blatt, P.~Schindler, M.~Müller, and T.~Monz, ``Demonstration of fault-tolerant universal quantum gate operations,'' \href{http://dx.doi.org/10.1038/s41586-022-04721-1}{{\em Nature} {\bfseries 605} no.~7911, (May, 2022) 675–680}. \url{http://dx.doi.org/10.1038/s41586-022-04721-1}.

\bibitem{Anderson2021}
C.~Ryan-Anderson {\em et~al.}, ``Realization of real-time fault-tolerant quantum error correction,'' \href{http://dx.doi.org/10.1103/PhysRevX.11.041058}{{\em Phys. Rev. X} {\bfseries 11} (Dec, 2021) 041058}, \href{http://arxiv.org/abs/2107.07505}{{\ttfamily arXiv:2107.07505 [quant-ph]}}.

\bibitem{dasu2025breakingmagicdemonstrationhighfidelity}
S.~Dasu, S.~Burton, K.~Mayer, D.~Amaro, J.~A. Gerber, K.~Gilmore, D.~Gresh, D.~DelVento, A.~C. Potter, and D.~Hayes, ``{Breaking even with magic: demonstration of a high-fidelity logical non-Clifford gate},'' \href{http://arxiv.org/abs/2506.14688}{{\ttfamily arXiv:2506.14688 [quant-ph]}}.

\bibitem{Rosenfeld:2025xvf}
E.~Rosenfeld {\em et~al.}, ``{Magic state cultivation on a superconducting quantum processor},'' \href{http://arxiv.org/abs/2512.13908}{{\ttfamily arXiv:2512.13908 [quant-ph]}}.

\bibitem{Daguerre:2025boq}
L.~Daguerre, R.~Blume-Kohout, N.~C. Brown, D.~Hayes, and I.~H. Kim, ``{Experimental Demonstration of High-Fidelity Logical Magic States from Code Switching},'' \href{http://dx.doi.org/10.1103/dck4-x9c2}{{\em Phys. Rev. X} {\bfseries 15} no.~4, (2025) 041008}, \href{http://arxiv.org/abs/2506.14169}{{\ttfamily arXiv:2506.14169 [quant-ph]}}.

\bibitem{Gottesman:1998hu}
D.~Gottesman, ``{The Heisenberg representation of quantum computers},'' {\em {22nd International Colloquium on Group Theoretical Methods in Physics}} (7, 1998) 32--43, \href{http://arxiv.org/abs/quant-ph/9807006}{{\ttfamily arXiv:quant-ph/9807006}}.

\bibitem{Aaronson:2004xuh}
S.~Aaronson and D.~Gottesman, ``{Improved simulation of stabilizer circuits},'' \href{http://dx.doi.org/10.1103/PhysRevA.70.052328}{{\em Phys. Rev. A} {\bfseries 70} no.~5, (2004) 052328}, \href{http://arxiv.org/abs/quant-ph/0406196}{{\ttfamily arXiv:quant-ph/0406196}}.

\bibitem{Bravyi:2016yhm}
S.~Bravyi and D.~Gosset, ``{Improved Classical Simulation of Quantum Circuits Dominated by Clifford Gates},'' \href{http://dx.doi.org/10.1103/PhysRevLett.116.250501}{{\em Phys. Rev. Lett.} {\bfseries 116} no.~25, (2016) 250501}.

\bibitem{Bravyi:2016uqx}
S.~Bravyi, G.~Smith, and J.~A. Smolin, ``{Trading Classical and Quantum Computational Resources},'' \href{http://dx.doi.org/10.1103/PhysRevX.6.021043}{{\em Phys. Rev. X} {\bfseries 6} no.~2, (2016) 021043}.

\bibitem{Bravyi:2018ugg}
S.~Bravyi, D.~Browne, P.~Calpin, E.~Campbell, D.~Gosset, and M.~Howard, ``{Simulation of quantum circuits by low-rank stabilizer decompositions},'' \href{http://dx.doi.org/10.22331/q-2019-09-02-181}{{\em Quantum} {\bfseries 3} (2019) 181}, \href{http://arxiv.org/abs/1808.00128}{{\ttfamily arXiv:1808.00128 [quant-ph]}}.

\bibitem{Bu:2019qed}
K.~Bu and D.~E. Koh, ``{Efficient Classical Simulation of Clifford Circuits with Nonstabilizer Input States},'' \href{http://dx.doi.org/10.1103/PhysRevLett.123.170502}{{\em Phys. Rev. Lett.} {\bfseries 123} no.~17, (2019) 170502}.

\bibitem{Masot-Llima:2024doz}
S.~Masot-Llima and A.~Garcia-Saez, ``{Stabilizer Tensor Networks: Universal Quantum Simulator on a Basis of Stabilizer States},'' \href{http://dx.doi.org/10.1103/PhysRevLett.133.230601}{{\em Phys. Rev. Lett.} {\bfseries 133} no.~23, (2024) 230601}, \href{http://arxiv.org/abs/2403.08724}{{\ttfamily arXiv:2403.08724 [quant-ph]}}.

\bibitem{Zhang:2024jlz}
Y.~Zhang and Y.~Zhang, ``{Classical Simulability of Quantum Circuits with Shallow Magic Depth},'' \href{http://dx.doi.org/10.1103/PRXQuantum.6.010337}{{\em PRX Quantum} {\bfseries 6} no.~1, (2025) 010337}, \href{http://arxiv.org/abs/2409.13809}{{\ttfamily arXiv:2409.13809 [quant-ph]}}.

\bibitem{Garner:2025xts}
S.~Garner, C.~Liu, M.~Wang, S.~Stein, and A.~Li, ``{STABSim: A Parallelized Clifford Simulator with Features Beyond Direct Simulation},'' \href{http://arxiv.org/abs/2507.03092}{{\ttfamily arXiv:2507.03092 [quant-ph]}}.

\bibitem{Aziz:2025cnu}
K.~Aziz, H.~Pan, M.~J. Gullans, and J.~H. Pixley, ``{Classical Simulations of Low Magic Quantum Dynamics},'' \href{http://arxiv.org/abs/2508.20252}{{\ttfamily arXiv:2508.20252 [quant-ph]}}.

\bibitem{Goto:2016gss}
H.~Goto, ``{Minimizing resource overheads for fault-tolerant preparation of encoded states of the Steane code},'' \href{http://dx.doi.org/10.1038/srep19578}{{\em Sci. Rep.} {\bfseries 6} no.~1, (2016) 19578}.

\bibitem{Chamberland_2018}
C.~Chamberland and M.~E. Beverland, ``Flag fault-tolerant error correction with arbitrary distance codes,'' \href{http://dx.doi.org/10.22331/q-2018-02-08-53}{{\em Quantum} {\bfseries 2} (Feb., 2018) 53}. \url{http://dx.doi.org/10.22331/q-2018-02-08-53}.

\bibitem{Chamberland:2019ehl}
C.~Chamberland and A.~W. Cross, ``{Fault-tolerant magic state preparation with flag qubits},'' \href{http://dx.doi.org/10.22331/q-2019-05-20-143}{{\em Quantum} {\bfseries 3} (2019) 143}, \href{http://arxiv.org/abs/1811.00566}{{\ttfamily arXiv:1811.00566}}.

\bibitem{Butt_2024}
F.~Butt, S.~Heußen, M.~Rispler, and M.~Müller, ``Fault-tolerant code-switching protocols for near-term quantum processors,'' \href{http://dx.doi.org/10.1103/prxquantum.5.020345}{{\em PRX Quantum} {\bfseries 5} no.~2, (May, 2024) }. \url{http://dx.doi.org/10.1103/PRXQuantum.5.020345}.

\bibitem{wan2025cuttingstabiliserdecompositionsmagic}
K.~H. Wan and Z.~Zhong, ``{Cutting stabiliser decompositions of magic state cultivation with ZX-calculus},'' \href{http://arxiv.org/abs/2509.01224}{{\ttfamily arXiv:2509.01224 [quant-ph]}}.

\bibitem{wan2025simulatemagicstatecultivation}
K.~H. Wan and Z.~Zhong, ``{How to simulate magic state cultivation with around $8$ Clifford terms on average},'' \href{http://arxiv.org/abs/2509.08658}{{\ttfamily arXiv:2509.08658 [quant-ph]}}.

\bibitem{Sutcliffe_2024}
M.~Sutcliffe and A.~Kissinger, ``Procedurally optimised zx-diagram cutting for efficient t-decomposition in classical simulation,'' \href{http://dx.doi.org/10.4204/eptcs.406.3}{{\em Electronic Proceedings in Theoretical Computer Science} {\bfseries 406} (Aug., 2024) 63–78}. \url{http://dx.doi.org/10.4204/EPTCS.406.3}.

\bibitem{Anderson:2014jvy}
J.~T. Anderson, G.~Duclos-Cianci, and D.~Poulin, ``{Fault-Tolerant Conversion between the Steane and Reed-Muller Quantum Codes},'' \href{http://dx.doi.org/10.1103/PhysRevLett.113.080501}{{\em Phys. Rev. Lett.} {\bfseries 113} no.~8, (2014) 080501}, \href{http://arxiv.org/abs/1403.2734}{{\ttfamily arXiv:1403.2734 [quant-ph]}}.

\bibitem{bombin2016dimensionaljumpquantumerror}
H.~Bomb{\'\i}n, ``{Dimensional jump in quantum error correction},'' \href{http://dx.doi.org/10.1088/1367-2630/18/4/043038}{{\em New J. Phys.} {\bfseries 18} no.~4, (2016) 043038}, \href{http://arxiv.org/abs/1412.5079}{{\ttfamily arXiv:1412.5079 [quant-ph]}}.

\bibitem{Beverland2021}
M.~E. Beverland, A.~Kubica, and K.~M. Svore, ``Cost of universality: A comparative study of the overhead of state distillation and code switching with color codes,'' \href{http://dx.doi.org/10.1103/PRXQuantum.2.020341}{{\em PRX Quantum} {\bfseries 2} (Jun, 2021) 020341}. \url{https://link.aps.org/doi/10.1103/PRXQuantum.2.020341}.

\bibitem{Gottesman_1999}
D.~Gottesman and I.~L. Chuang, ``Demonstrating the viability of universal quantum computation using teleportation and single-qubit operations,'' \href{http://dx.doi.org/10.1038/46503}{{\em Nature} {\bfseries 402} no.~6760, (Nov., 1999) 390–393}. \url{http://dx.doi.org/10.1038/46503}.

\bibitem{Davydova:2025ylx}
M.~Davydova, A.~Bauer, J.~C.~M. de~la Fuente, M.~Webster, D.~J. Williamson, and B.~J. Brown, ``{Universal fault tolerant quantum computation in 2D without getting tied in knots},'' \href{http://arxiv.org/abs/2503.15751}{{\ttfamily arXiv:2503.15751 [quant-ph]}}.

\bibitem{shor1997faulttolerantquantumcomputation}
P.~W. Shor, ``{Fault-tolerant quantum computation},'' \href{http://dx.doi.org/10.1109/SFCS.1996.548464}{{\em Proceedings of 37th Conference on Foundations of Computer Science} (5, 1996) }, \href{http://arxiv.org/abs/quant-ph/9605011}{{\ttfamily arXiv:quant-ph/9605011}}.

\bibitem{Bombin_2007}
H.~Bombin and M.~A. Martin-Delgado, ``Topological computation without braiding,'' \href{http://dx.doi.org/10.1103/physrevlett.98.160502}{{\em Physical Review Letters} {\bfseries 98} no.~16, (Apr., 2007) }. \url{http://dx.doi.org/10.1103/PhysRevLett.98.160502}.

\bibitem{anderson2025controlledgatescliffordhierarchy}
J.~T. Anderson and M.~Weippert, ``{Controlled Gates in the Clifford Hierarchy},'' \href{http://arxiv.org/abs/2410.04711}{{\ttfamily arXiv:2410.04711 [quant-ph]}}.

\bibitem{Moussa:2016kgp}
J.~E. Moussa, ``{Transversal Clifford gates on folded surface codes},'' \href{http://dx.doi.org/10.1103/PhysRevA.94.042316}{{\em Phys. Rev. A} {\bfseries 94} no.~4, (2016) 042316}, \href{http://arxiv.org/abs/1603.02286}{{\ttfamily arXiv:1603.02286 [quant-ph]}}.

\bibitem{Zheng:2024dpx}
Y.~Zheng and D.~E. Liu, ``{From Magic State Distillation to Dynamical Systems},'' \href{http://dx.doi.org/10.22331/q-2025-09-15-1858}{{\em Quantum} {\bfseries 9} (2025) 1858}, \href{http://arxiv.org/abs/2412.04402}{{\ttfamily arXiv:2412.04402 [quant-ph]}}.

\bibitem{Campbell:2009jko}
E.~T. Campbell and D.~E. Browne, ``{On the Structure of Protocols for Magic State Distillation},'' \href{http://dx.doi.org/10.1007/978-3-642-10698-9_3}{{\em Lect. Notes Comput. Sci.} {\bfseries 5906} (2009) 20--32}, \href{http://arxiv.org/abs/0908.0838}{{\ttfamily arXiv:0908.0838 [quant-ph]}}.

\bibitem{Bombin:2015tpp}
H.~Bomb{\'\i}n, ``{Gauge color codes: optimal transversal gates and gauge fixing in topological stabilizer codes},'' \href{http://dx.doi.org/10.1088/1367-2630/17/8/083002}{{\em New J. Phys.} {\bfseries 17} no.~8, (2015) 083002}, \href{http://arxiv.org/abs/1311.0879}{{\ttfamily arXiv:1311.0879 [quant-ph]}}.

\bibitem{Kubica:2014jue}
A.~Kubica and M.~E. Beverland, ``{Universal transversal gates with color codes: A simplified approach},'' \href{http://dx.doi.org/10.1103/PhysRevA.91.032330}{{\em Phys. Rev. A} {\bfseries 91} no.~3, (2015) 032330}, \href{http://arxiv.org/abs/1410.0069}{{\ttfamily arXiv:1410.0069 [quant-ph]}}.

\bibitem{Vasmer:2019ovd}
M.~Vasmer and D.~E. Browne, ``{Three-dimensional surface codes: Transversal gates and fault-tolerant architectures},'' \href{http://dx.doi.org/10.1103/PhysRevA.100.012312}{{\em Phys. Rev. A} {\bfseries 100} no.~1, (2019) 012312}, \href{http://arxiv.org/abs/1801.04255}{{\ttfamily arXiv:1801.04255 [quant-ph]}}.

\bibitem{Eastin:2009tem}
B.~Eastin and E.~Knill, ``{Restrictions on Transversal Encoded Quantum Gate Sets},'' \href{http://dx.doi.org/10.1103/PhysRevLett.102.110502}{{\em Phys. Rev. Lett.} {\bfseries 102} no.~11, (2009) 110502}, \href{http://arxiv.org/abs/0811.4262}{{\ttfamily arXiv:0811.4262 [quant-ph]}}.

\bibitem{Bravyi:2012rnv}
S.~Bravyi and R.~Koenig, ``{Classification of Topologically Protected Gates for Local Stabilizer Codes},'' \href{http://dx.doi.org/10.1103/physrevlett.110.170503}{{\em Phys. Rev. Lett.} {\bfseries 110} no.~17, (2013) 170503}, \href{http://arxiv.org/abs/1206.1609}{{\ttfamily arXiv:1206.1609 [quant-ph]}}.

\bibitem{Sullivan:2023qsg}
M.~Sullivan, ``{Code conversion with the quantum Golay code for a universal transversal gate set},'' \href{http://dx.doi.org/10.1103/PhysRevA.109.042416}{{\em Phys. Rev. A} {\bfseries 109} no.~4, (2024) 042416}, \href{http://arxiv.org/abs/2307.14425}{{\ttfamily arXiv:2307.14425 [quant-ph]}}.

\bibitem{Heussen:2024vtv}
S.~Heu{\ss}en and J.~Hilder, ``{Efficient fault-tolerant code switching via one-way transversal CNOT gates},'' \href{http://dx.doi.org/10.22331/q-2025-09-03-1846}{{\em Quantum} {\bfseries 9} (2025) 1846}, \href{http://arxiv.org/abs/2409.13465}{{\ttfamily arXiv:2409.13465 [quant-ph]}}.

\bibitem{gidney2021stim}
C.~Gidney, ``Stim: a fast stabilizer circuit simulator,'' \href{http://dx.doi.org/10.22331/q-2021-07-06-497}{{\em {Quantum}} {\bfseries 5} (July, 2021) 497}. \url{https://doi.org/10.22331/q-2021-07-06-497}.

\bibitem{CirqDevelopers_2025}
C.~Developers, \href{http://dx.doi.org/10.5281/ZENODO.4062499}{{\em Cirq}}.
\newblock Zenodo, Aug., 2025.
\newblock \url{https://zenodo.org/doi/10.5281/zenodo.4062499}.

\bibitem{Steane_1997}
A.~M. Steane, ``Active stabilization, quantum computation, and quantum state synthesis,'' \href{http://dx.doi.org/10.1103/physrevlett.78.2252}{{\em Physical Review Letters} {\bfseries 78} no.~11, (Mar., 1997) 2252–2255}. \url{http://dx.doi.org/10.1103/PhysRevLett.78.2252}.

\bibitem{Mahler:2013mdg}
D.~H. Mahler, L.~A. Rozema, A.~Darabi, C.~Ferrie, R.~Blume-Kohout, and A.~M. Steinberg, ``{Adaptive Quantum State Tomography Improves Accuracy Quadratically},'' \href{http://dx.doi.org/10.1103/PhysRevLett.111.183601}{{\em Phys. Rev. Lett.} {\bfseries 111} no.~18, (2013) 183601}, \href{http://arxiv.org/abs/1303.0436}{{\ttfamily arXiv:1303.0436 [quant-ph]}}.

\bibitem{Lee:2025lcs}
S.-u. Lee, M.~Yuan, S.~Chen, K.~Tsubouchi, and L.~Jiang, ``{Efficient benchmarking of logical magic state},'' \href{http://arxiv.org/abs/2505.09687}{{\ttfamily arXiv:2505.09687 [quant-ph]}}.

\bibitem{bombin20182dquantumcomputation3d}
H.~Bombin, ``2d quantum computation with 3d topological codes,'' 2018.
\newblock \url{https://arxiv.org/abs/1810.09571}.

\bibitem{kobayashi2025cliffordhierarchystabilizercodes}
R.~Kobayashi, G.~Zhu, and P.-S. Hsin, ``Clifford hierarchy stabilizer codes: Transversal non-clifford gates and magic,'' 2025.
\newblock \url{https://arxiv.org/abs/2511.02900}.

\bibitem{bauer2025planarfaulttolerantcircuitsnonclifford}
A.~Bauer and J.~C.~M. de~la Fuente, ``Planar fault-tolerant circuits for non-clifford gates on the 2d color code,'' 2025.
\newblock \url{https://arxiv.org/abs/2505.05175}.

\bibitem{wills2024constantoverheadmagicstatedistillation}
A.~Wills, M.-H. Hsieh, and H.~Yamasaki, ``Constant-overhead magic state distillation,'' 2024.
\newblock \url{https://arxiv.org/abs/2408.07764}.

\bibitem{golowich2024asymptoticallygoodquantumcodes}
L.~Golowich and V.~Guruswami, ``{Asymptotically Good Quantum Codes with Transversal Non-Clifford Gates},'' \href{http://arxiv.org/abs/2408.09254}{{\ttfamily arXiv:2408.09254 [quant-ph]}}.

\bibitem{nguyen2024goodbinaryquantumcodes}
Q.~T. Nguyen, ``Good binary quantum codes with transversal ccz gate,'' 2024.
\newblock \url{https://arxiv.org/abs/2408.10140}.

\bibitem{golowich2024quantumldpccodestransversal}
L.~Golowich and T.-C. Lin, ``Quantum ldpc codes with transversal non-clifford gates via products of algebraic codes,'' 2024.
\newblock \url{https://arxiv.org/abs/2410.14662}.

\end{thebibliography}\endgroup
\bibliographystyle{utphys}


\clearpage
\onecolumngrid

\appendix



\section{Error Model}
\label{app:error_model}

We consider an error model given by the standard circuit-level Pauli noise. That is, each quantum operation is modeled as a noiseless operation, preceded or followed by some stochastic Pauli error occurring with a given probability. Hence, this error model encompass initialization $P_I$, gate-specific $\{P_i\}$ and measurement $P_M$ errors. For the numerical simulations we consider a uniform circuit-level error model with strength $p$, where
\begin{itemize}
    \item \textbf{Initialization errors $P_I$.} Preparation of $\ket{0}$ ($\ket{+}$) noisy physical states is given by the initialization of noiseless $\ket{0}$ states followed by $X$ ($Z$) errors occurring with probability $p$. Similarly, preparation of physical magic states is followed by some Pauli error $Q$ occurring with probability $p$; see Appendix~\ref{sub:initialization} for the justification. For $|T\rangle$ and $|H\rangle$ magic states, the Pauli errors $Q$ are single-qubit Paulis $Z$ and $Y$, respectively.
    \item \textbf{Gate errors $\{P_i\}$.} Single-qubit (two-qubit) gates are given by the respective noiseless gates followed by uniform single-qubit $\mathcal{D}_1(\rho)$ (two-qubit $\mathcal{D}_2(\rho)$) depolarizing noise channels of strength $p$,
    \begin{equation}
    \quad \mathcal{D}_1(\rho)=(1-p)\rho+\frac{p}{3}\sum_{Q \in \{X,Y,Z\}} Q\rho Q \quad , \quad \mathcal{D}_2(\rho)=(1-p)\rho+\frac{p}{15}\sum_{Q \in \{I,X,Y,Z\}^{\otimes 2}/\{I\otimes I\}}\: Q\rho Q \:. 
    \end{equation}
    \item \textbf{Gate errors $\{P_i\}$.} ``Idling"  qubits (i.e., qubits that are acted on trivially) are followed by single-qubit depolarizing noise channel $\mathcal{D}_1(\rho)$ of strength $p$. Note that for $n$ consecutive idling operations the effective error channel is given by $n$-time composition of $\mathcal{D}_1^{\otimes n}(\rho)=(\mathcal{D}_1\circ \dots \circ \mathcal{D}_1)(\rho)$, where
    \begin{equation}
    \mathcal{D}_1^{\otimes n}(\rho)=(1-p_n)\rho+\frac{p_n}{3}\sum_{Q \in \{X,Y,Z\}}\: Q\rho Q \quad , \quad p_n=\frac{3}{4}\left[1-\left(1-\frac{4}{3}p\right)^n\right]\:.
    \label{eq:dep_idling}
    \end{equation}  
    \item \textbf{Measurement errors $P_M$.} Measurement of qubits in the $Z$ basis ($X$ basis) is preceded by an $X$ error ($Z$ error) occurring with probability $p$.
\end{itemize}

\subsection{Initialization Error: PSC Eigenstates}
\label{sub:initialization}

Here we justify the Pauli initialization error for eigenstates of PSCs [Definition~\ref{def:square_root_pauli}]. The main claim is that any error model acting on such a state can be converted to a Pauli-based error model. This is achieved by randomly applying the PSC. We first focus on the single-qubit PSCs.

\begin{proposition}
Let $C$ be a single-qubit order-$2$ PSC and $|\psi\rangle$ be its $+1$ eigenstate. Consider a twirling map $\mathcal{T}_{C}(\cdot) = \frac{1}{2}((\cdot) + C(\cdot)C^{\dagger})$. For any channel $\mathcal{E}$,
\begin{equation}
    \mathcal{T}_C\circ\mathcal{E}(|\psi{\rangle\langle}\psi|) = (1-p)|\psi{\rangle\langle}\psi| + p P|\psi{\rangle\langle}\psi|P^{\dagger}
    \label{eq:twirl_single_qub}
\end{equation}
for some Pauli $P$ and $0\leq p\leq 1$.
\label{prop:pauli_frame_init}
\end{proposition}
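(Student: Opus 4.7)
The plan is to exploit that $\mathcal{T}_C$ is the uniform twirl over the order-two group $\{I, C\}$, so it projects onto $C$-invariant states. First, using only $C^2 = I$, I would observe that any output of $\mathcal{T}_C$ is $C$-invariant: $C\, \mathcal{T}_C(\sigma)\, C^\dagger = \frac{1}{2}(C\sigma C^\dagger + C^2 \sigma (C^\dagger)^2) = \mathcal{T}_C(\sigma)$. Consequently, $\rho_{\mathrm{out}} := \mathcal{T}_C \circ \mathcal{E}(|\psi\rangle\langle\psi|)$ commutes with $C$. Because $C$ is a non-Pauli Hermitian unitary with $C^2 = I$, its spectrum is $\{+1,-1\}$ with simple orthogonal eigenspaces spanned by $|\psi\rangle$ and $|\psi^\perp\rangle$; hence any state commuting with $C$ is diagonal in $\{|\psi\rangle, |\psi^\perp\rangle\}$, so positivity and unit trace force $\rho_{\mathrm{out}} = (1-p)|\psi\rangle\langle\psi| + p\,|\psi^\perp\rangle\langle\psi^\perp|$ for some $p \in [0,1]$.

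The remaining step is to rewrite $|\psi^\perp\rangle\langle\psi^\perp|$ as $P|\psi\rangle\langle\psi|P^\dagger$ for a single-qubit Pauli $P$. It suffices to exhibit $P \in \{X, Y, Z\}$ anticommuting with $C$, since then $CP|\psi\rangle = -PC|\psi\rangle = -P|\psi\rangle$ identifies $P|\psi\rangle$ as a unit $-1$-eigenvector of $C$, necessarily equal to $|\psi^\perp\rangle$ up to a global phase. To produce such a $P$, I would note that the adjoint action $\mathrm{Ad}_C : Q \mapsto CQC^\dagger$ on the three-dimensional real span of $\{X, Y, Z\}$ is an orthogonal transformation squaring to the identity, so its eigenvalues lie in $\{\pm 1\}$; if all were $+1$, then $C$ would commute with $X$, $Y$, and $Z$, forcing $C$ to be scalar by Schur's lemma and contradicting the non-Pauli assumption. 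Hence $\mathrm{Ad}_C$ has a $-1$-eigenvector, yielding a Pauli $P$ with $\{P, C\} = 0$.

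The main subtlety is this last existence statement---one might initially worry that no standard Pauli anticommutes with a particular non-Pauli PSC $C$---but the Schur-lemma argument above circumvents a case analysis on the canonical form of Proposition~\ref{prop:psc_normal_form}. Assembling the two ingredients, the twirl-induced diagonal structure and the anticommuting Pauli, yields Eq.~\eqref{eq:twirl_single_qub} directly, with $P|\psi\rangle\langle\psi|P^\dagger = |\psi^\perp\rangle\langle\psi^\perp|$ and the same $p$ as above.
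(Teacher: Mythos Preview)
Your argument mirrors the paper's two-step structure: first, $\mathcal{T}_C$ diagonalizes any input in the eigenbasis $\{|\psi\rangle, |\psi^\perp\rangle\}$ of $C$; second, a Pauli $P$ anticommuting with $C$ is exhibited so that $|\psi^\perp\rangle\langle\psi^\perp| = P|\psi\rangle\langle\psi|P^\dagger$. Your treatment of the first step, via $C$-invariance and the simple spectrum of $C$, is equivalent to the paper's.

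The genuine difference is in the second step. The paper appeals to the canonical form of Proposition~\ref{prop:psc_normal_form}, writing $C = Q\,e^{i\pi Q'/4}$ with $\{Q,Q'\}=0$ and reading off $P = Q'$ directly. Your Schur-lemma route is more self-contained and avoids that machinery, but it has a small gap as stated: the $-1$-eigenvector of $\mathrm{Ad}_C$ you produce lives in $\mathrm{span}_{\mathbb{R}}\{X,Y,Z\}$ and is a priori only a real combination $aX+bY+cZ$, not an element of $\mathcal{P}_1$. (For an arbitrary order-$2$ element of $O(3)$, the $-1$-eigenspace need not contain a coordinate axis.) To close the gap you must use that $C$ is \emph{Clifford}, so that $\mathrm{Ad}_C$ is a signed permutation of $\{X,Y,Z\}$; then either some Pauli is sent to its negative outright, or $\mathrm{Ad}_C$ swaps two Paulis, in which case the relation $XY=iZ$ forces the remaining one to acquire a $-1$ sign. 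Either way a standard Pauli lies in the $-1$-eigenspace. With that addition your proof is complete and arguably more elementary than the paper's, at the cost of being slightly less constructive.
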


\begin{proof}

Note that the $\pm 1$ eigenstates of $C$ are $|v^{\pm}\rangle=\frac{I\pm C}{2}|s\rangle$
for a stabilizer state $|s\rangle$. Therefore, for any single-qubit density matrix $\rho=\sum_{a,b\in \{+,-\}}\alpha_{ab}|v^{a}\rangle \langle v^{b}|$ and some complex coefficients $\alpha_{ab}$, the twirling map acting on $\rho$ decoheres it onto the eigenbasis of $C$,
\begin{equation}
    \mathcal{T}_C(\rho)=\alpha_{++}|v^{+}\rangle \langle v^{+}|+\alpha_{--}|v^{-}\rangle \langle v^{-}| \:.
\end{equation}
Further, if $|\psi \rangle \equiv  |v^{+}\rangle$ we need to prove that exists a Pauli $P$ such that $|v^{-}\rangle = P |\psi\rangle$. This is equivalent to proving that given $C$, there exists a Pauli $P$ that anticommutes with $U$. Since $U^2=1$, the standard form (\ref{eq:standard_form}) of $C=Qe^{i\frac{\pi}{4}Q'}$ for Paulis $Q$, $Q'$ such that $\{Q,Q'\}=0$ and $Q^2=I$. Therefore, $P\equiv Q'$ is such that $\{P,C\}=0$. Hence, in this case twirling a physical magic state $|\psi\rangle \langle \psi |$ subjected to any noise channel $\mathcal{E}$ guarantees that only Pauli error $P$ influence it, which implies (\ref{eq:twirl_single_qub}).   
\end{proof}

The result of Proposition~\ref{prop:pauli_frame_init} can be also generalized to a larger class of multi-qubit magic state as in Proposition~\ref{prop:pauli_frame_init_diag}.

\begin{proposition}
Consider an $l$-qubit magic state $|V\rangle = V|+\rangle^{\otimes l}$, where $V$ is a diagonal gate in $\mathcal{C}^{(3)}$. Let $S_V=\langle VX_iV^{\dagger}: i\in [l]\rangle$ and define the twirling map
\begin{equation}
    \mathcal{T}_V(\cdot) = \frac{1}{|S_V|} \sum_{s\in S_V} s(\cdot) s^{\dagger}.
\end{equation}
For any channel $\mathcal{E}$,
\begin{equation}
    \mathcal{T}_V\circ \mathcal{E}(|V{\rangle\langle}V|) = \sum_{x\in \mathbb{F}_2^l} p_x Z_x |V{\rangle\langle}V| Z_x^{\dagger},
    \label{eq:twirl_diag_prop}
\end{equation}
for some probability distribution $\{p_x\}$, where $Z_x := \bigotimes_{i=1}^{l} Z_i^{x_i}$ and $x \in \mathbb{F}_2^l$.

    \label{prop:pauli_frame_init_diag}
\end{proposition}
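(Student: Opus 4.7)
\begin{proofsketch}
The plan is to generalize the twirling argument from Proposition~\ref{prop:pauli_frame_init}. The key observation is that $S_V$ is an abelian group of order $2^l$, generated by $l$ commuting Cliffords $s_i = VX_iV^{\dagger}$, each of which squares to the identity. Hence the Hilbert space decomposes into $2^l$ joint eigenspaces of $S_V$, labeled by strings $x \in \mathbb{F}_2^l$ giving the eigenvalue $(-1)^{x_i}$ of the $i$-th generator.

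First, I would show that $\{Z_x |V\rangle\}_{x \in \mathbb{F}_2^l}$ is an orthonormal eigenbasis adapted to this decomposition. Since $V$ is diagonal, $V^{\dagger} Z_i V = Z_i$, and so the commutation relations of $Z_i$ with $VX_jV^{\dagger}$ reduce to those of $Z_i$ with $X_j$. In particular, $\{Z_i, VX_iV^{\dagger}\} = 0$ and $[Z_i, VX_jV^{\dagger}] = 0$ for $i \neq j$. Therefore, $Z_x |V\rangle$ lies in the joint eigenspace with eigenvalue $(-1)^{x_i}$ for $s_i$, and distinct $x$ yield orthogonal states in distinct eigenspaces, so each eigenspace is 1-dimensional and spanned by the corresponding $Z_x|V\rangle$.

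Second, I would invoke the standard dephasing-via-twirling identity: iteratively applying $\tfrac{1}{2}(\rho + s_i \rho s_i^{\dagger}) = \Pi_+^{s_i} \rho \Pi_+^{s_i} + \Pi_-^{s_i} \rho \Pi_-^{s_i}$ for each involutory generator $s_i$ gives
\begin{equation}
    \mathcal{T}_V(A) = \sum_{x\in \mathbb{F}_2^l} \Pi_x A \Pi_x,
    \qquad \Pi_x = Z_x |V\rangle \langle V| Z_x^{\dagger}.
\end{equation}
Substituting $A = \mathcal{E}(|V\rangle\langle V|)$ and setting $p_x = \langle V | Z_x^{\dagger} \mathcal{E}(|V\rangle\langle V|) Z_x | V \rangle$ yields Eq.~\eqref{eq:twirl_diag_prop}. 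Positivity of $p_x$ follows from complete positivity of $\mathcal{E}$, while $\sum_x p_x = \mathrm{Tr}\bigl(\mathcal{E}(|V\rangle\langle V|)\bigr) = 1$ follows from $\sum_x \Pi_x = I$ and trace preservation.

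The main subtlety is verifying that the $Z_x$ form the correct ``error basis'' for this group, which is precisely where the assumption that $V$ is diagonal enters: without it, $V^{\dagger} Z_i V$ would generally be a non-Pauli Clifford, and $Z_i$ would not anticommute cleanly with exactly one generator of $S_V$. Once this is established, the rest is bookkeeping on the standard stabilizer-projector decomposition, directly extending the single-qubit case in Proposition~\ref{prop:pauli_frame_init}.
\end{proofsketch}
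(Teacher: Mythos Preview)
Your proposal is correct and follows essentially the same approach as the paper: both identify $\{Z_x|V\rangle\}_{x\in\mathbb{F}_2^l}$ as the joint eigenbasis of the commuting involutions $VX_iV^{\dagger}$ (using that $V$ is diagonal so $Z_i$ commutes with $V$), and then show that twirling by $S_V$ dephases an arbitrary state in this basis. The only cosmetic difference is that the paper verifies dephasing via the character-sum identity $\tfrac{1}{2^l}\sum_{z}(-1)^{(x+y)\cdot z}=\delta_{x,y}$ applied to matrix elements $|\tilde{x}\rangle\langle\tilde{y}|$, whereas you phrase it as the iterated projector decomposition $\mathcal{T}_V(A)=\sum_x \Pi_x A \Pi_x$; these are equivalent, and your explicit verification that $\{p_x\}$ is a probability distribution is a small bonus the paper leaves implicit.
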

\begin{proof}
Let $|V\rangle = V|+\rangle^{\otimes l}$ be a $l$-qubit magic state, where $V$ is a diagonal gate from the third level of the Clifford hierarchy. The stabilizer group $S_V$ of $|V\rangle$ is generated by unitaries $s_z$ such that $s_z|V\rangle=|V\rangle$, where
\begin{equation}
    S_V=\left \{ s_z: s_z=V\left(\bigotimes_{i=1}^l X^{z_i}_i\right)V^{\dagger}\:,\: z \in \mathbb{F}_2^l\right\} \:.
\end{equation}
We now show that there is a twirling operator $\mathcal{T}_V(\rho)$ that decoheres any $l$-qubit density matrix $\rho$ into a basis $\mathcal{B}$ consisting of projectors $P|V\rangle \langle V|P^{\dagger}$ for some set of Paulis $P$. Indeed, the basis is defined as $\mathcal{B}=\left\{|\tilde{x}\rangle \langle \tilde{x}|: |\tilde{x}\rangle= \left(\bigotimes_{i=1}^l Z^{x_i}_i\right)|V\rangle \:,\: x \in \mathbb{F}_2^l \right\}$, where $s_z |\tilde{x}\rangle = (-1)^{x \cdot z} |\tilde{x}\rangle$, $x,z \in \mathbb{F}_2^l$ and $x \cdot z =\sum_{i=1}^l x_iz_i$. The twirling operator is then
\begin{equation}
    \mathcal{T}_V(\rho)=\frac{1}{2^l}\sum_{g_z\in S}s_z\rho s_z^{\dagger}\:.
\end{equation}
The action $\mathcal{T}_V(\rho)$ on basis elements $|\tilde{x}\rangle \langle \tilde{y}|$ of $\rho$, represented by respective binary strings $x,y \in \mathbb{F}_2^l$, is
\begin{equation}
    \frac{1}{2^l}\sum_{s_z\in S}s_z|\tilde{x}\rangle \langle \tilde{y}| s_z^{\dagger}=\delta_{x,y} |\tilde{x}\rangle \langle \tilde{x}|\:,
\label{eq:delta_states}
\end{equation}
where the Kronecker delta $\delta_{x,y}$ is 1 if $x=y$ and 0 otherwise. Therefore, due to (\ref{eq:delta_states})
\begin{equation}
    \rho=\sum_{x,y \in \mathbb{F}_2^l} \alpha_{xy}|\tilde{x}\rangle \langle \tilde{y}| \quad \Longrightarrow \quad \mathcal{T}(\rho)= \sum_{x \in \mathbb{F}_2^l} \alpha_{xx}|\tilde{x}\rangle \langle \tilde{x}|\:,
\end{equation}
for some complex coefficients $\alpha_{xy}$. Hence, in this case twirling a physical magic state $|V\rangle \langle V|$ subjected to any noise channel $\mathcal{E}$ guarantees that only $Z$-type Pauli errors $P$ influence it, which implies (\ref{eq:twirl_diag_prop}).
\end{proof}

Propositions~\ref{prop:pauli_frame_init}~and~\ref{prop:pauli_frame_init_diag} imply that applying stochastically $C$ (or more generally, a group generated by a set of PSCs) to a physical magic state $|\psi\rangle$ stabilized by $C$, errors become at most a Pauli $P$. We conjecture that this property generalizes to any $k$-qubit magic state stabilized by $k$ independent and commuting PSCs, which is left for future work.

\section{Properties of PSC}
\label{app:proof_sec4}

In this Appendix, we provide proofs of the statements in Sec.~\ref{sec:controlled_clifford}.

\LemmaPSCHierarchy*
\begin{proof}
    The ``only if'' part of the statement follows straightforwardly from Eqs.~\eqref{eq:identity_general_1} and~\eqref{eq:identity_general_2}. For the ``if'' part of the statement, our proof closely follows Theorem A.2 from~\cite{anderson2025controlledgatescliffordhierarchy} for their $k = 3$ case. Notice that in Eq.~\eqref{eq:identity_general_2}, under our assumption that $C(U) \in \mathcal{C}^{(3)}$, the unitary $V=(X \otimes U) C(U^\dag{}^2)$ must be a Clifford gate. 
    We can recursively apply the circuit identity of Eq.~\eqref{eq:identity_general_2} too
    \begin{equation}
        \begin{adjustbox}{height=0.6cm}
            \begin{quantikz}
                & \gate{X} & \ctrl{1} & \\
                & & \gate{U^2} &
            \end{quantikz}
        \end{adjustbox}
        =
        \begin{adjustbox}{height=0.6cm}
            \begin{quantikz}
                & \ctrl{1} & \ctrl{1} & \gate{X} & \\
                & \gate{U^2} & \gate{ U^{\dag}{}^4} & \gate{U^2} &
            \end{quantikz}
        \end{adjustbox}.
    \label{eq:identity_general_recurse}
    \end{equation}
    Moreover, note that
    \begin{equation}
        V (X \otimes I) V^\dagger = C(U^2) (X \otimes I) C(U^\dag{}^2)\:.
        \label{eq:ident_V_U_quad}
    \end{equation}
    Since $V=(X \otimes U) C(U^\dag{}^2)$ is a Clifford, from Eqs. (\ref{eq:identity_general_recurse}) and (\ref{eq:ident_V_U_quad}) we see that $W=(X \otimes U^2) C(U^\dag{}^4) \in \mathcal{C}^{(1)}$. This implies $U^4=I$ (up to a global phase) because there are no controlled-gates in $\mathcal{C}^{(1)}$, as well as $U^2 \in \mathcal{C}^{(1)}$. Moreover, using the definition of $V$ we can rewrite $X \otimes U=V\cdot C(U^2)$. Since $U^2$ is a Pauli then $C(U^2)$ is a Clifford. Therefore, because $V$ is a Clifford, $V\cdot C(U^2)=X\otimes U$ is a Clifford too. By the group structure of Clifford gates, $X\otimes U$ being a Clifford implies that $U$ must be a Clifford too. To conclude, $U$ is a Clifford such that $U^2$ is a Pauli (hence, it is a PSC).
\end{proof}

\PropositionPSCNormalForm*
\begin{proof}
The ``if'' part of the statement can be shown easily from the following identity:
\begin{equation}
    U^2 = \alpha^2 Pe^{\frac{\pi i}{4}\sum_{j=1}^mQ_j} P e^{-\frac{\pi i}{4}\sum_{j=1}^mQ_j} e^{\frac{\pi i}{2}\sum_{j=1}^mQ_j}.
\end{equation}
Since $e^{\frac{\pi i}{4}\sum_{j=1}^mQ_j}$ is a Clifford, conjugating $P$ by this operator results in a Pauli [Eq.~\eqref{eq:expPi4}]. Moreover, 
\begin{equation}
    e^{\frac{\pi i}{2}\sum_{j=1}^mQ_j} =i^m \prod_{j=1}^m Q_j,
\end{equation}
which is again a Pauli. Since $\alpha^8=1$, $\alpha^2=+1, -1, +i,$ or $-i$. Thus $U^2$ is a Pauli.

Less obvious is the ``only if" statement. To that end, note that two Cliffords $U_1$ and $U_2$ are equivalent (up to a global phase) if and only if 
\begin{equation}
    U_1PU_1^{\dagger} = U_2PU_2^{\dagger}
\label{eq:equivalence}
\end{equation}
for every Pauli $P$. Thus, if we can show that a conjugation by a PSC is equivalent to a conjugation by a unitary of the form of $Pe^{\frac{\pi i}{4}\sum_{j=1}^mQ_j}$, we are done.

A convenient framework to study the action of the Cliffords up to Pauli is the symplectic representation, reviewed below. In this framework, each $n$-qubit Pauli can be written as a $2n$-dimensional binary vector $u$ and a Clifford can be expressed as a $2n\times 2n$ binary invertible matrix $C$ that preserves the symplectic form:
\begin{equation}
    C^T \omega
    C = \omega,
\label{eq:symplectic}
\end{equation}
where $\omega$ is the binary symplectic matrix
\begin{equation}
    \omega= \begin{pmatrix}
        0_{n\times n} & I_{n\times n} \\
        I_{n\times n} & 0_{n\times n}
    \end{pmatrix}.
\end{equation}
In particular, the Clifford preserves the symplectic product, i.e., $\langle Cv, Cu\rangle = \langle v, u\rangle$, where $\langle v, u\rangle := v^T \omega u$. Moreover, two Paulis $P$ and $Q$ commute if and only if their corresponding binary vector has a zero symplectic inner product. 

In the symplectic representation, a PSC is a symplectic matrix (\ref{eq:symplectic}) that squares to the identity since Paulis have trivial form in this representation. Without loss of generality, let $C=I+N$. Then, $C^2=I$ implies $N^2=0$ since $N$ is a binary matrix, i.e., $N$ is nilpotent. Furthermore, $C^2=I$ also implies $\langle Nx, Ny\rangle=0$ for any $x, y$ and that $\langle Nx, y\rangle = \langle x, Ny\rangle$. Thus, for any $u, v\in \text{Im}(N)$, $\langle u, v\rangle=0$. This implies that the Paulis associated to the elements of $\text{Im}(N)$ must necessarily commute. 

Now we claim that
\begin{equation}
    N = \omega \sum_k u_k u_k^T,
\end{equation}
where $u_k\in \text{Im}(\omega N)$. To see why, note $\omega N$ is symmetric due to the relation $\langle Nx, y\rangle = \langle x, Ny\rangle$. Let $P = VV'$ be the projector onto $\text{Im}(\omega N)$, where $V'$ is a $r\times n$ matrix for some $r\leq n$. Then,
\begin{equation}
\begin{aligned}
    \omega N &= P \omega N P^T \\
    &= V(V' \omega N (V')^T) V^T,
\end{aligned}
\end{equation}
where the matrix in the parenthesis is a symmetric $r\times r$ matrix. Because it is a symmetric matrix over $\mathbb{F}_2$, it can be written in the following form:
\begin{equation}
V' \omega N (V')^T = \sum_{k} v_k v_k^T,
\end{equation}
where $v_k \in \mathbb{F}_2^k$. Thus, 
\begin{equation}
    \omega N = N\omega = \sum_k u_k u_k^T
\end{equation}
for some $u_k \in \text{Im}(\omega N)$.

Now we decompose $N= \sum_k N_k$, where $N_k =  u_k u_k^T \omega$. Note that because $u_k\in \text{Im}(\omega N)$, $\langle u_i, u_j\rangle=0$ for all $i, j$. Therefore, $N_i N_j=0$ for all $i, j$ and in particular each $N_i$ is nilpotent. Thus we arrive at the following decomposition,
\begin{equation}
    C =I+\sum_{j=1}^m N_j= \prod_{j=1}^m (I+ N_j).
\label{eq:C_product_rep}
\end{equation}
Note that $(I+N_j)(x)$ is in one-to-one correspondence with $e^{\frac{\pi i}{4}Q_j}Pe^{-\frac{\pi i}{4}Q_j}$ (up to a Pauli), where $Q_j$ is a Pauli associated with $u_j$ and $P$ is an arbitrary Pauli with binary representation $x$. Indeed, it can be checked that the equivalence in terms of Eq.~(\ref{eq:equivalence}) holds since $e^{\frac{\pi i}{4}Q_j}$ maps $P$ as in Eq.~(\ref{eq:expPi4}), and $(I+N_j)(x)=x+\langle u_j,x\rangle u_j$, which is either $x$ if $\langle u_j,x\rangle =0$ or $x+u_j$ if $\langle u_j,x\rangle=1$. Moreover, because any two elements in $\text{Im}(\omega N)$ have zero symplectic inner product, $[Q_j, Q_k]=0$ for all $j$ and $k$. This completes the proof. 
\end{proof}

\PropositionPSCPropagationCaseOne*
\begin{proof}
    Note Eq.~\eqref{eq:propagation_psc} follows from the identity~(\ref{eq:identity_general_1}) when $U=P$, $V=C$ and $Q= CPC^{\dagger}P^{\dagger}$. Because $C$ is a Clifford, it follows that $Q$ is a Pauli. Below we show that $Q$ either commutes or anticommutes with $C$.

    From Proposition~\ref{prop:psc_normal_form}, it follows that $C$ can be written in the following form
    \begin{equation}
        C = \alpha \tilde{P} \exp\left(\frac{\pi i}{4}\sum_{j=1}^m Q_j \right),
    \end{equation}
    where $\tilde{P}$ is a Pauli and $\{Q_j\}$ is a set of commuting Paulis and $\alpha$ is a complex number with a unit modulus. Recall the following standard identity:
    \begin{equation}
        e^{-\frac{\pi i}{4}P' }P e^{\frac{\pi i}{4}P' } = 
        \begin{cases}
            P & \,\, \text{if} \,\, [P,P']=0, \\ 
            -iP'P & \,\, \text{if} \,\, \{P,P' \}=0.
        \end{cases}
    \label{eq:expPi4}
    \end{equation}
    Repeatedly applying this identity, we obtain
    \begin{equation}
        Q = \alpha \prod_{j\in \mathcal{A}_P} Q_j,
    \end{equation}
    where $\alpha$ is a scalar of modulus $1$ and $\mathcal{A}_P$ is a subset of $\{Q_j:j=1,\ldots m\}$ that anticommute with $P$. 

    By construction, the following commutation relation holds:
    \begin{equation}
        \left[\exp\left(\frac{\pi i}{4}\sum_{j=1}^m Q_j \right), Q \right]=0.
    \end{equation}
    Therefore, the commutation relation of $C$ and $Q$ is determined by that of $\tilde{P}$ and $Q$. Since both are Paulis, it follows that $C$ and $Q$ either commute or anticommute. 
\end{proof}

\section{Error Propagation}
\label{app:proof_sec5}

In Appendix~\ref{app:proof_theo2}, we provide a proof of the asymptotic gate complexity of a Clifford error resulting from error propagation through a standard logical PSC measurement-based protocol from Theorem~\ref{thm:standard-protocol-error-prop} in Sec.~\ref{subsec:logical-clifford-meas}. In Appendix~\ref{app:structure_PSC}, we provide supporting results regarding the structure of propagated errors for such protocols.

\subsection{Proof of Theorem \ref{thm:standard-protocol-error-prop}}
\label{app:proof_theo2}

We consider the most general setting of MSP protocols consisting of repeated stabilizer and logical PSC measurements that are interleaved in an arbitrary way. Using Fig.~\ref{fig:propagation_rules_visual}, we can guarantee that circuit-level errors will \textit{only} ever propagate to a Clifford. What remains to be studied is the gate complexity of the resulting Clifford error circuit. We show this grows modestly with various parameters of the protocol.

\StandardProtocolErrorProp*

\begin{proof}
    Without loss of generality, it suffices to consider single-qubit errors occurring on either data or ancilla qubits. Furthermore, at any given time errors will be propagated through a round of stabilizer measurements or a round of logical PSC measurements.

    \textit{Propagation through a data qubit.} Given a data qubit is involved in at most $w_q$ stabilizer checks, propagation through a round of stabilizer measurements yields at most an additional $w_q$ single-qubit Pauli errors on ancillae. Hence across $r_S$ rounds, at most $r_Sw_q$ additional single-qubit Pauli errors are propagated to ancillae. Moreover, since we do not consider ancilla qubit resets, these Pauli errors will not propagate any further in a nontrivial way. 

    For propagation through a logical PSC measurement round, a given data qubit is in the support of a \textit{single} PSC $V_i$ [Definition~\ref{def:transversal-psc}] and, therefore, in the support of $C(V_i)$'s target. By Proposition~\ref{prop:psc_propagation_case1}, a single-qubit Pauli error will propagate an additional controlled-Pauli $C(Q)$ for $Q$ a Pauli that either commutes or anticommutes with $V_i$ and satisfies $\mathrm{supp}(Q) \subseteq \mathrm{supp}(V_i)$. As per our assumption, this latter fact implies that $Q$ must be an $O(1)$-qubit Pauli. Across $r_L$ rounds of logical PSC measurement, $r_L$ $C(Q)$ errors are propagated. Further, each $C(Q)$ will propagate additional $C(Z)$ errors [Eq.~\eqref{eq:same-target-diff-control-commutation}]. In particular, each data qubit in the target of $C(Q)$ is involved in at most $w_q$ stabilizer checks. Hence, each of the $r_L$ $C(Q)$s will propagate at most $r_Sw_q$ additional $C(Z)$ gates. Further, each of the $r_L$ $C(Q)$ errors will encounter $r_L$ $C(V_i)$s from logical PSC measurement rounds, resulting in at most an additional $r_L^2$ $C(Z)$. 

    Hence, propagation of single-qubit circuit-level errors from data qubits results in a Clifford error with $O(r_Sr_Lw_q + r_L^2)$ one- and two-qubit gates.

    \textit{Propagation through an ancilla qubit.} Ancilla qubits may be used for stabilizer measurement of PSC measurement. We treat each case separately.

    We start with the former case. Given each stabilizer check involves at most $w_c$ data qubits, a single-qubit Pauli error will propagate to at most $w_c$ single-qubit Pauli errors on data qubits. Unlike ancillae, data qubits are used throughout the protocol and can, therefore, continue to propagate nontrivial errors. From the analysis of the error propagation on data qubits, it follows that each of the (at most) $w_c$ single-qubit Pauli errors propagates to a Clifford error with $O(r_Sr_Lw_q + r_L^2)$ gates.

    We now consider the latter case. Each such logical PSC measurement ancilla will be the control for $\ell$ $C(V_i)$ gates [Definition~\ref{def:transversal-psc}]. Thus across $r_L$ rounds of logical PSC measurement, a single-qubit circuit-level error on such an ancilla propagates to an additional (at most) $\ell r_L$ $C(Q)$ gates and $\ell r_L$ $V_i$ gates [Eq.~\eqref{eq:identity_general_2}]. From above, we know that $\ell r_L$ $C(Q)$ gates will propagate an additional $O(\ell r_Sr_Lw_q + \ell r_L^2)$ $C(Z)$ gates. Then, each of the $\ell r_L$ PSC gates $V_i$ will trivially propagate past logical PSC measurements as each $V_i$ has disjoint support and $V_i$ necessarily commutes with $V_j$ for $i = j$. For stabilizer measurements, each of the $O(1)$ qubits in the support of $V_i$ are involved in at most $w_q$ stabilizer checks. Thus, each of $\ell r_L$ $V_i$ gates will propagate an additional $C(R)$ gate, defined similarly to $C(Q)$ above. Finally, each of the $\ell r_L$ $C(R)$ gates will also propagate to additional $O(\ell r_Sr_Lw_q + \ell r_L^2)$ $C(Z)$ gates. 

    Hence, propagation of single-qubit circuit-level errors from ancilla qubits results in a Clifford error with $O\left( (w_c + \ell)(r_Sr_Lw_q + r_L^2)\right)$ one- and two-qubit gates.

    In summary, any single-qubit error propagates to a Clifford error with $O\left(r^2w_q(w_c + \ell)\right)$ one- and two-qubit gates as desired.
\end{proof}

\subsection{Structure of propagated errors}

\label{app:structure_PSC}

The main purpose of this section is to elucidate the structure of propagated errors in PSC measurement protocols. While the rules for propagating circuit-level errors through such protocols are already summarized in Fig.~\ref{fig:propagation_rules_visual}, we note that the propagated Clifford error has an additional internal structure. Although we do not use this structure in our simulation, we envision this to be useful in optimizing the error propagation algorithm, which may be useful for future work.

We begin by showing how circuit-level noise propagates through a single round [Eq.~\eqref{eq:non-FT-noisy-controlled-logical-U-circ}] and multiple rounds [Eq.~\eqref{eq:non-FT-r-rounds-noisy-controlled-logical-U-circ}] of PSC measurements. We then show how noise from a single round of PSC measurements propagates through a single round of stabilizer measurements [Figs.~\ref{fig:pro_stab_meas} and ~\ref{fig:corr-propagated-stab-meas-errors}]. Finally, we discuss how to understand the structure of propagated errors for standard protocols [Sec.~\ref{subsec:logical-clifford-meas}] with $r_L$ contiguous logical PSC measurement rounds followed by $r_S$ contiguous stabilizer measurement rounds. In greater generality, we may consider error propagation through standard protocols with $r_L' = r_L^{(1)} + \dots + r_L^{(k)}$ PSC measurement rounds and $r_S' = r_S^{(1)} + \dots + r_S^{(k)}$ stabilizer measurement rounds where $r_L^{(i)}$ logical PSC measurement rounds are alternated with $r_S^{(i)}$ stabilizer measurement rounds. Error propagation through Shor-style FT versions [Sec.~\ref{subsec:logical-clifford-meas}] of these measurements generalizes straightforwardly from the non-FT case.

Let $\bar{U}$ be a transversal logical PSC of a stabilizer code. A noisy circuit implementing the PSC measurement $C(\bar{U})$ and its equivalent propagated Clifford error look as follows:

\begin{equation}
    \begin{adjustbox}{height=2.5cm}
            \begin{quantikz}[column sep=0.15cm,wire types={q,q,n,q,q,q,n},font=\large]
            \lstick{$\ket{+}$} & &&&&& \gate[1, style={noisy_1}]{\text{$P_{\ket{+}}$}} & \ctrl{1} & \gate[style={noisy_1}]{\text{$D$}} & \ \ldots \ &  \ctrl{3} & \gate[style={noisy_1}]{\text{$D$}} & \ctrl{4} & \gate[style={noisy_1}]{\text{$D$}} & \ \ldots \ & \gate[1, style={noisy_1}]{\text{$P_M$}} & \meter[1]{x}\\
            \lstick{1} & \qwbundle{n_1} &&&&&  \gate[1, style={noisy_1}]{\text{$P_1$}} & \gate{V_1} & \gate[style={noisy_1}]{\text{$D$}} & \ \ldots \ && \gate[1, style={noisy_2}]{\text{$Idl$}} && \gate[1, style={noisy_2}]{\text{$Idl$}} &\\
            \lstick{$\vdots$} \\
            \lstick{$i$} & \qwbundle{n_i} &&&&& \gate[1, style={noisy_1}]{\text{$P_i$}} && \gate[1, style={noisy_2}]{\text{$Idl$}} & \ \ldots \ & \gate{V_i} & \gate[style={noisy_1}]{\text{$D$}} && \gate[1, style={noisy_2}]{\text{$Idl$}} &\\
            \lstick{$i+1$} & \qwbundle{n_{i+1}} &&&&& \gate[1, style={noisy_1}]{\text{$P_{i+1}$}} && \gate[1, style={noisy_2}]{\text{$Idl$}} & \ \ldots \ && \gate[1, style={noisy_2}]{\text{$Idl$}} & \gate{V_{i+1}} & \gate[style={noisy_1}]{\text{$D$}} &\\
            \lstick{$\vdots$} 
        \end{quantikz}
    \end{adjustbox} = 
    \begin{adjustbox}{height=2.5cm}
            \begin{quantikz}[column sep=0.15cm,wire types={q,q,n,q,q,q,n},font=\large]
            \lstick{$\ket{+}$} & \gate[5]{\text{\shortstack{PSC $ \bar{U}$ \\ Meas.}}} & \gate[5, style={noisy_1}]{P} & \ctrl{1} & \ \ldots \ & \ctrl{3} & \ctrl{4} && \ \ldots \ & \meter{x}\\
            \lstick{1} & &&\gate[style={noisy_1}]{Q_1} & \ \ldots \ & & & \gate[style={noisy_1}]{V_1} & \\
            \lstick{$\vdots$} \\
            \lstick{$i$} & & & & \ \ldots \ & \gate[style={noisy_1}]{Q_i} & & \gate[style={noisy_1}]{V_i} & \\
            \lstick{$i+1$} & & & & \ \ldots \ & & \gate[style={noisy_1}]{Q_{i+1}} & \gate[style={noisy_1}]{V_{i+1}} & \\
            \lstick{$\vdots$} 
        \end{quantikz}\:,
    \end{adjustbox}
    \label{eq:non-FT-noisy-controlled-logical-U-circ}
\end{equation}
where the first row of Pauli errors $P_I = P_{\ket{+}} \otimes P_1 \otimes \dots \otimes P_i \otimes \dots $ is due to logical state initialization and the remaining errors are idling errors $Idl$, depolarizing errors $D$, and measurement errors $P_M$ [Appendix~\ref{app:error_model}]. Additionally, all Pauli errors can all be pushed to the left side of the propagated Clifford error circuit. The above equivalence is obtained by a repeated application of Theorem~\ref{thm:propagation_psc}, leaving a noiseless circuit implementing $C(\bar{U})$ followed by a Clifford error circuit.

Now, consider propagating circuit-level errors through a circuit measuring a PSC $\bar{U}$ for $r$ rounds. A noisy circuit implementing this looks as follows:

\begin{equation}
        \begin{adjustbox}{height=2cm}
                \begin{quantikz}[column sep=0.25cm,font=\large,wire types={q,q,n,q,q}]
                \lstick{$\ket{+}_1$} && \gate[1, style={noisy_1}]{\text{$P_{I_1}$}} &  \ctrl{4} & \gate[1, style={noisy_2}]{\text{$Idl$}} & \ \ldots \ & \gate[1, style={noisy_2}]{\text{$Idl$}} & \gate[1, style={noisy_1}]{\text{$P_{M_1}$}} & \meter[1]{x}\\
                \lstick{$\ket{+}_2$} && \gate[1, style={noisy_1}]{\text{$P_{I_2}$}} & \gate[1, style={noisy_2}]{\text{$Idl$}} &  \ctrl{3} & \ \ldots \ & \gate[1, style={noisy_2}]{\text{$Idl$}} & \gate[1, style={noisy_1}]{\text{$P_{M_2}$}} & \meter[1]{x}\\
                \lstick{$\vdots$} \\
                \lstick{$\ket{+}_r$} && \gate[1, style={noisy_1}]{\text{$P_{I_r}$}} & \gate[1, style={noisy_2}]{\text{$Idl$}} & \gate[1, style={noisy_2}]{\text{$Idl$}} & \ \ldots \ & \ctrl{1} & \gate[1, style={noisy_1}]{\text{$P_{M_r}$}} & \meter[1]{x}\\
                & \qwbundle{n} & \gate[1, style={noisy_1}]{\text{$\bigotimes_{i=1}^n P_i$}} & \gate{\text{$\bar{U}_1$}} & \gate{\text{$\bar{U}_2$}} & \ \ldots \ &  \gate{\text{$\bar{U}_r$}}
            \end{quantikz}\:
        \end{adjustbox}= 
        \begin{adjustbox}{height=2cm}
                \begin{quantikz}[column sep=0.25cm,wire types={q,q,n,q,q},font=\large]
                \lstick{$\ket{+}_1$} & & \gate[5]{\text{\shortstack{PSC $ \bar{U}$ \\ Meas. \\ $(\times r)$}}} & \gate[5,style={noisy_1}]{\text{\shortstack{Eq.~\eqref{eq:non-FT-noisy-controlled-logical-U-circ} \\ Error}}} & \gate[4,style={noisy_1}]{C(Z)_{a \rightarrow b}} & \ \ldots \ & \gate[5,style={noisy_1}]{\text{\shortstack{Eq.~\eqref{eq:non-FT-noisy-controlled-logical-U-circ} \\ Error}}} & \gate[4,style={noisy_1}]{C(Z)_{a \rightarrow b}} & \meter{x}\\
                \lstick{$\ket{+}_2$} &&&&& \ \ldots \ & & & \meter{x}\\
                \lstick{$\vdots$} \\
                \lstick{$\ket{+}_r$} &&&&& \ \ldots \ & & &  \meter{x}\\
                 & \qwbundle{n} &&&& \ \ldots \ & &  \\
            \end{quantikz}\:,
        \end{adjustbox}
    \label{eq:non-FT-r-rounds-noisy-controlled-logical-U-circ}
\end{equation}
where errors for each PSC measurement $C(\bar{U}_i)$ are as defined in Eq.~\eqref{eq:non-FT-noisy-controlled-logical-U-circ}. We can propagate errors for each of the $r$ rounds of PSC measurement in the same way as Eq.~\eqref{eq:non-FT-noisy-controlled-logical-U-circ} [Theorem~\ref{thm:propagation_psc}]. All PSC errors $V_i$ can be trivially propagated past as they only occur on the target qubit of future $C(V)_{anc \rightarrow i}$ gates. On the other hand, errors of the form $C(Q)_{a \rightarrow i}$ may incur additional $C(Z)_{a \rightarrow b}$ errors between $\ket{+}_a$ and $\ket{+}_b$ ancilla qubits when propagated past $C(V)_{b \rightarrow i}$ [Eq.~\eqref{eq:same-target-diff-control-commutation}]. Intuitively, we will have $r$ rounds of noiseless PSC measurement followed by many repeated blocks of errors as in the right-hand-side of Eq.~\eqref{eq:non-FT-noisy-controlled-logical-U-circ} and $C(Z)_{a \rightarrow b}$ errors between PSC measurement ancilla qubits. 

Consider the interim Clifford error due to propagation through $r$ rounds of measuring a PSC $\bar{U}$ as described above. We now discuss further propagation of this error through multiple rounds of stabilizer measurement, starting with understanding propagation through a (possibly partial) stabilizer generator measurement $C(M_i)$,

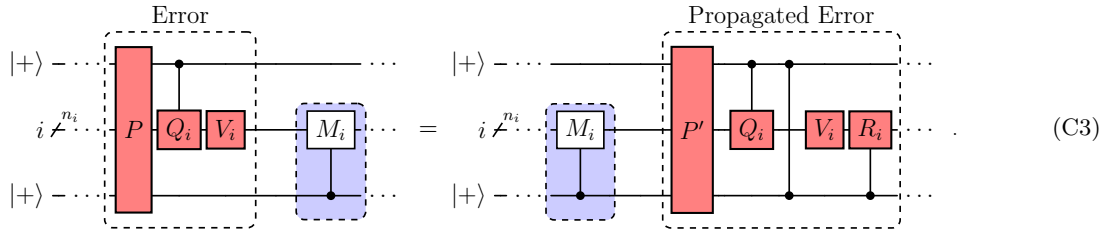
\begin{figure}[!ht]
\begin{gather}
        \begin{adjustbox}{height=1.8cm}
        \begin{quantikz}[column sep = 0.1cm, font=\large, wire types = {q,q,q}]
                \lstick{$\ket{+}$} & & \ \ldots \ & \gate[3, style={noisy_1}]{P} \gategroup[wires=3,steps=3,background,style={dashed, rounded corners, inner xsep=2pt}]{Error} & \ctrl{1} & &&& \ \ldots \ \\
                \lstick{$i$} & \qwbundle{n_i} & \ \ldots \ & & \gate[style={noisy_1}]{Q_i} & \gate[style={noisy_1}]{V_i} & \hphantomgate{wide} & \gate{M_i} \gategroup[wires=2,steps=1,background,style={dashed,rounded corners, fill=blue!20,inner xsep=2pt, inner ysep=0.1pt}, label style={label position = below,yshift=0.2cm}]{} & \ \ldots \ \\
                \lstick{$|+\rangle$} & & \ \ldots \ & & & & & \ctrl{-1} & \ \ldots \
        \end{quantikz}
        \end{adjustbox} =
        \begin{adjustbox}{height=1.8cm}
        \begin{quantikz}[column sep = 0.1cm, font=\large, wire types = {q,q,q}]
                \lstick{$\ket{+}$} & & \ \ldots \ & & & & \gate[3,style={noisy_1}]{\text{$P'$}} \gategroup[wires=3,steps=8,background,style={dashed, rounded corners, inner xsep=1pt}]{Propagated Error} & & & \ctrl{1} & & \ctrl{2} & & & \ \ldots \ \\
                \lstick{$i$} & \qwbundle{n_i} & \ \ldots \ & \gate{M_i} \gategroup[wires=2,steps=1,background,style={dashed,rounded corners, fill=blue!20,inner xsep=2pt, inner ysep=0.1pt}, label style={label position = below,yshift=0.2cm}]{}  & & \hphantomgate{wide} & & & & \gate[style={noisy_1}]{Q_i} & & & \gate[style={noisy_1}]{V_i} & \gate[style={noisy_1}]{R_i} & \ \ldots \ \\
                \lstick{$|+\rangle$} & & \ \ldots \ & \ctrl{-1} & & & & & & & & \phase{} & & \ctrl{-1} & \ \ldots \ 
        \end{quantikz}\:.
        \end{adjustbox} 
    \label{eq:lemma-propagated-stab-meas-CP-errors}   
\end{gather}
\caption{\textbf{Error propagation through a stabilizer measurement gate $C(M_i)$.} Here, $M_i \in \{I,X,Y,Z\}^{\otimes n_i}$, for an error comprised of a Pauli $P$, a Pauli-square-root Clifford $V_i$ and Pauli $Q_i$ that either commutes or anticommutes with $V_i$. Note that in this case, a $C(Z)$ error \textit{only} occurs if $Q_i$ is a Pauli that anticommutes with $M_i$ [Eq.~\eqref{eq:same-target-diff-control-commutation}].}
\label{fig:pro_stab_meas}
\end{figure}

The equivalence in Fig.~\ref{fig:pro_stab_meas} is obtained by applying Proposition~\ref{prop:Cliff_prop} and Eq.~\eqref{eq:same-target-diff-control-commutation}; the above equivalence can then be generalized to a full round of stabilizer measurements by repeatedly applying these rules [Fig.~\ref{fig:corr-propagated-stab-meas-errors}]. In particular, each error $C(R)_{\mathrm{anc} \rightarrow i}$ and $C(Q)_{\mathrm{anc} \rightarrow i}$ may incur additional $C(Z)_{\mathrm{anc}_a \rightarrow \mathrm{anc}_b}$ between different stabilizer measurement ancillas and between stabilizer measurement and PSC control ancillas, respectively [Eq.\eqref{eq:same-target-diff-control-commutation}], each PSC error $V_i$ will incur additional $C(R)_{\mathrm{anc} \rightarrow i}$ errors [Proposition~\ref{prop:Cliff_prop}], and $C(Z)$ errors between ancillas are trivially commuted as ancillas are never reused. Finally, Pauli errors $P'$ will propagate to further Pauli errors as stabilizer measurements are comprised entirely of Clifford gates. Thus, we recover Fig.~\ref{fig:corr-propagated-stab-meas-errors}.

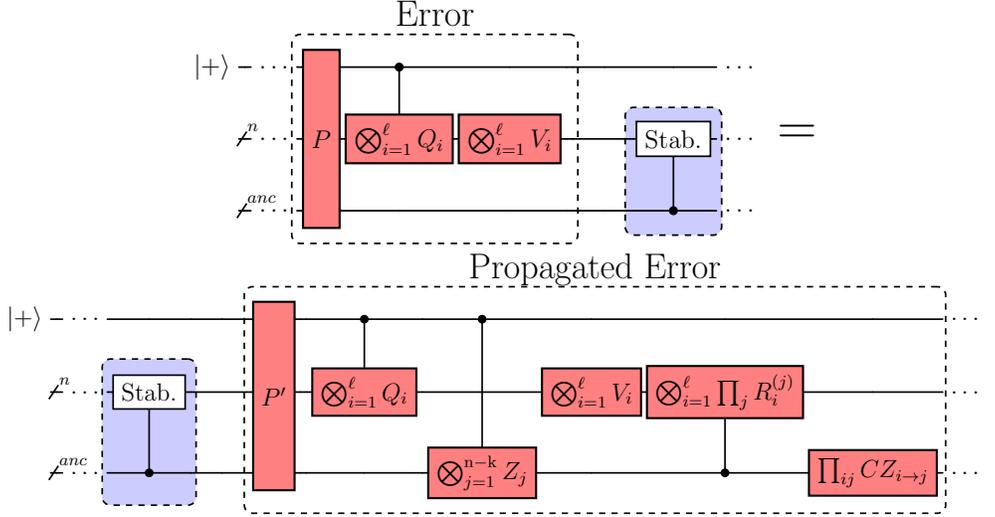
\begin{figure*}[!ht]
        \begin{adjustbox}{height=2cm}
        \begin{quantikz}[column sep = 0.1cm, font=\large, wire types = {q,q,q}]
                \lstick{$\ket{+}$} && \ \ldots \ & \gate[3, style={noisy_1}]{P} \gategroup[wires=3,steps=4,background,style={dashed, rounded corners, inner xsep=2pt}]{{\LARGE Error}} & \ctrl{1} & &&&&& \ \ldots \ \\
                & \qwbundle{n} & \ \ldots \ & & \gate[style={noisy_1}]{\bigotimes_{i=1}^\ell Q_i} & \gate[style={noisy_1}]{\bigotimes_{i=1}^\ell V_i} & & & \hphantomgate{wide} & \gate{\text{\shortstack{Stab.}}} \gategroup[wires=2,steps=1,background,style={dashed,rounded corners, fill=blue!20,inner xsep=2pt, inner ysep=0.1pt}, label style={label position = below,yshift=0.2cm}]{} & \ \ldots \ \\
                & \qwbundle{anc} & \ \ldots \ & & & & & & & \ctrl{-1} & \ \ldots \
        \end{quantikz}
        \end{adjustbox}
        {\huge =} 
        \begin{adjustbox}{height=2cm}
        \begin{quantikz}[column sep = 0.1cm, font=\large, wire types = {q,q,q}]
                \lstick{$\ket{+}$} && \ \ldots \ & & & & \gate[3,style={noisy_1}]{\text{$P'$}}\gategroup[wires=3,steps=9,background,style={dashed, rounded corners, inner xsep=1pt}]{{\LARGE Propagated Error}} & & & \ctrl{1} & & \ctrl{2} & & & & \ \ldots \ \\
                & \qwbundle{n} &\ \ldots \ & \gate{\text{\shortstack{Stab.}}} \gategroup[wires=2,steps=1,background,style={dashed,rounded corners, fill=blue!20,inner xsep=2pt, inner ysep=0.1pt}, label style={label position = below,yshift=0.2cm}]{}  & & \hphantomgate{wide} & & & & \gate[style={noisy_1}]{\bigotimes_{i=1}^\ell Q_i} & & & \gate[style={noisy_1}]{\bigotimes_{i=1}^\ell V_i} & \gate[style={noisy_1}]{\bigotimes_{i=1}^\ell \prod_{j} R_i^{(j)}} &  & \ \ldots \ \\
                & \qwbundle{anc} & \ \ldots \ & \ctrl{-1} & & & & & & & & \gate[style={noisy_1}]{\bigotimes_{j=1}^{\mathrm{n-k}}Z_j} & & \ctrl{-1} & \gate[style={noisy_1}]{\prod_{ij} CZ_{i \rightarrow j}} & \ \ldots \ 
        \end{quantikz}\:.
        \end{adjustbox}
    \caption{\textbf{Error propagation of Clifford errors produced by a previous round of Pauli-square-root Clifford measurements through a round of stabilizer measurements.} This figure generalizes error propagation depicted in Fig.~\ref{fig:pro_stab_meas} to all data qubits and all controlled-Pauli gates present in stabilizer measurement circuits. $P$ and $P'$ are Pauli errors, $C\left(\bigotimes_{i=1}^\ell Q_{\mathrm{anc} \rightarrow i}\right)$ are controlled-$Q_i$ errors for $Q_i$ either commuting or anticommuting with Pauli-square-root Clifford $V_i$ [Theorem~\ref{thm:propagation_psc}], $C\left(\bigotimes_{j=1}^\mathrm{n-k}Z_{\mathrm{anc} \rightarrow j}\right)$ are controlled-$Z$ errors that may occur between Pauli-square-root Clifford measurement control ancillae ($|+\rangle$ state in the circuit) and stabilizer measurement ancillae (denoted by index $j$), $C(\bigotimes_{i=1}^\ell \prod_{j} R_{j \rightarrow i})$ are controlled-$R_i$ errors between stabilizer measurement ancillae (indexed by superscript $j$) and data qubits (indexed by subscript $i$) for $R_i$ either commuting or anticommuting with $V_i$ [Proposition~\ref{prop:Cliff_prop}], and $\prod_{ij} CZ_{i\rightarrow j}$ represents controlled-$Z$ errors that occur \textit{within} stabilizer measurement ancillae. As a note, $C(Z)$ errors only occur when the conditions of Eq.~\eqref{eq:same-target-diff-control-commutation} are met.}
    
    \label{fig:corr-propagated-stab-meas-errors}    
    \end{figure*}

We note that the Clifford error depicted in the bottom circuit of Fig.~\ref{fig:corr-propagated-stab-meas-errors} corresponds to propagation through a standard protocol for $r_L = 1$ and $r_S = 1$. Generalizing to repeated and interleaved blocks of $\{r_L^{(i)}\}_{i \in [k]}$ rounds of logical PSC measurements and $\{r_S^{(i)}\}_{i \in [k]}$ rounds of stabilizer measurements can be understood as repeated blocks of the propagated error from Fig.~\ref{fig:corr-propagated-stab-meas-errors} (up to additional $C(Z)$ errors strictly on ancilla qubits) with ancilla control qubits corresponding to the ancillae used in respective PSC and stabilizer measurement rounds.

\section{Magic state $|\bar{H}\rangle$ simulation details}
\label{app:prep_H}
In this Appendix we give additional details about the simulation technique for the specific magic $|\bar{H}\rangle$ state preparation protocol of Fig.~\ref{fig:CH-meas-circ} from Sec.~\ref{sub:sim-application-example}. 


We recall the propagation rules for Pauli errors (and phase gates) through $CH$ gates introduced in Sec.~\ref{sec:example},

\begin{equation}
 \begin{adjustbox}{width=0.3\columnwidth}
\begin{quantikz}
& \gate{X} & \ctrl{1} & \\ & & \gate{H} &  \end{quantikz} = \begin{quantikz}
& \ctrl{1} & \gate{X} & \\ &  \gate{H} &  \gate{H} &\end{quantikz}\quad , \quad
\end{adjustbox}
\begin{adjustbox}{width=0.3\columnwidth}
\begin{quantikz}
& \gate{S^a} & \ctrl{1} & \\ & & \gate{H} &  \end{quantikz} = \begin{quantikz}
& \ctrl{1} & \gate{S^a} & \\ &  \gate{H} &   &\end{quantikz}\:,
\end{adjustbox}
\end{equation}

\begin{equation}
 \begin{adjustbox}{width=0.4\columnwidth}
    \begin{quantikz}
&  & \ctrl{1} & \\ &\gate{X} & \gate{H} &  \end{quantikz} =
\begin{quantikz}
 & \ctrl{1} &  & \ctrl{1} & \gate{S} & \\ & \gate{H} & \gate{X} & \gate{Y}& &
\end{quantikz}\quad , \quad
\end{adjustbox}
    \begin{adjustbox}{width=0.4\columnwidth}
    \begin{quantikz}
&  & \ctrl{1} & \\ &\gate{Z} & \gate{H} &  \end{quantikz} =
\begin{quantikz}
 & \ctrl{1} &  & \ctrl{1} & \gate{S^{\dagger}} & \\ & \gate{H} & \gate{Z} & \gate{Y}& &
\end{quantikz}\:.
\end{adjustbox}
\end{equation}
Note that $a \in \{0,1,2,3\}$ and that $S^2=Z$ and $S^3=S^{\dagger}$. To propagate errors through the measurement of $\bar{H}$ of Fig.~\ref{fig:CH-meas-circ} we label error locations with a number and a color; see Fig.~\ref{fig:error_prop_sim}.

\begin{figure}[!h]
\centering
    \begin{tikzpicture}[scale=0.8, transform shape]
        \foreach \x in {1,...,9}
            \filldraw[fill=white, draw=white] (-2,0-\x+1) rectangle (0, 0-\x) node[pos=0.5] {Qubit \x};

        \foreach \x in {1,...,10}
            \filldraw[fill=white, draw=white] (\x-1,1) rectangle (\x,0) node[pos=0.5] {$t_{\text{\x}}$};
        
        \filldraw[fill=blue!40!white, draw=black] (0,0) rectangle (1,-1) node[pos=0.5] {39};
        \filldraw[fill=blue!40!white, draw=black] (0,-1) rectangle (1,-2) node[pos=0.5] {31};
        \filldraw[fill=blue!40!white, draw=black] (0,-2) rectangle (1,-3) node[pos=0.5] {26};
        \filldraw[fill=blue!40!white, draw=black] (0,-3) rectangle (1,-4) node[pos=0.5] {21};
        \filldraw[fill=blue!40!white, draw=black] (0,-4) rectangle (1,-5) node[pos=0.5] {16};
        \filldraw[fill=blue!40!white, draw=black] (0,-5) rectangle (1,-6) node[pos=0.5] {11};
        \filldraw[fill=blue!40!white, draw=black] (0,-6) rectangle (1,-7) node[pos=0.5] {2};
        \filldraw[fill=orange!40!white, draw=black] (0,-7) rectangle (1,-8) node[pos=0.5] {1};
        \filldraw[fill=orange!40!white, draw=black] (0,-8) rectangle (1,-9) node[pos=0.5] {6};
        
        \draw (1,0) rectangle (9,-1) node[pos=0.5] {40};
        \filldraw[fill=green!40!white, draw=black] (9,0) rectangle (10,-1) node[pos=0.5] {42};
        
        \draw (1,-1) rectangle (7,-2) node[pos=0.5] {32};
        \filldraw[fill=green!40!white, draw=black] (7,-1) rectangle (8,-2) node[pos=0.5] {34};
        \draw (8,-1) rectangle (10,-2) node[pos=0.5] {35};
        
        \draw (1,-2) rectangle (6,-3) node[pos=0.5] {27};
        \filldraw[fill=green!40!white, draw=black] (6,-2) rectangle (7,-3) node[pos=0.5] {29};
        \draw (7,-2) rectangle (10,-3) node[pos=0.5] {30};

        \draw (1,-3) rectangle (5,-4) node[pos=0.5] {22};
        \filldraw[fill=green!40!white, draw=black] (5,-3) rectangle (6,-4) node[pos=0.5] {24};
        \draw (6,-3) rectangle (10,-4) node[pos=0.5] {25};

        \draw (1,-4) rectangle (4,-5) node[pos=0.5] {17};
        \filldraw[fill=green!40!white, draw=black] (4,-4) rectangle (5,-5) node[pos=0.5] {19};
        \draw (5,-4) rectangle (10,-5) node[pos=0.5] {20};

        \draw (1,-5) rectangle (3,-6) node[pos=0.5] {12};
        \filldraw[fill=green!40!white, draw=black] (3,-5) rectangle (4,-6) node[pos=0.5] {14};
        \draw (4,-5) rectangle (10,-6) node[pos=0.5] {15};

        \filldraw[fill=green!40!white, draw=black] (1,-6) rectangle (2,-7) node[pos=0.5] {4};
        \draw (2,-6) rectangle (10,-7) node[pos=0.5] {5};
        
        \filldraw[fill=green!40!white, draw=black] (1,-7) rectangle (2,-8) node[pos=0.5] {3};
        \filldraw[fill=green!40!white, draw=black] (2,-7) rectangle (3,-8) node[pos=0.5] {8};
        \filldraw[fill=green!40!white, draw=black] (3,-7) rectangle (4,-8) node[pos=0.5] {13};
        \filldraw[fill=green!40!white, draw=black] (4,-7) rectangle (5,-8) node[pos=0.5] {18};
        \filldraw[fill=green!40!white, draw=black] (5,-7) rectangle (6,-8) node[pos=0.5] {23};
        \filldraw[fill=green!40!white, draw=black] (6,-7) rectangle (7,-8) node[pos=0.5] {28};
        \filldraw[fill=green!40!white, draw=black] (7,-7) rectangle (8,-8) node[pos=0.5] {33};
        \filldraw[fill=green!40!white, draw=black] (8,-7) rectangle (9,-8) node[pos=0.5] {36};
        \filldraw[fill=green!40!white, draw=black] (9,-7) rectangle (10,-8) node[pos=0.5] {41};

        \draw (1,-8) rectangle (2,-9) node[pos=0.5] {7};
        \filldraw[fill=green!40!white, draw=black] (2,-8) rectangle (3,-9) node[pos=0.5] {9};
        \draw (3,-8) rectangle (8,-9) node[pos=0.5] {10};
        \filldraw[fill=green!40!white, draw=black] (8,-8) rectangle (9,-9) node[pos=0.5] {37};
        \draw (9,-8) rectangle (10,-9) node[pos=0.5] {38};
    \end{tikzpicture}
    \caption{\textbf{Error locations for the flag-based measurement of $\bar{H}$ from Fig.~\ref{fig:CH-meas-circ} labeled with numbers and colors}. Blue boxes represent Pauli errors due to the non-fault-tolerant encoding circuit. White boxes represent idling errors, orange boxes represent initialization errors, and green boxes represent 2-qubit depolarizing errors. }
    \label{fig:error_prop_sim}
\end{figure}
The numerical simulation of the MSP protocol considers a circuit-level noise model that we describe in Appendix~\ref{app:error_model}. The different types of errors arising during the measurement of $\bar{H}$ in Fig.~\ref{fig:CH-meas-circ} can be labeled as follows: 
\begin{itemize}
    \item \textbf{Blue boxes} correspond to Pauli errors coming from the non fault-tolerant preparation of $|\bar{H}\rangle$. These errors have as binary representation for $a_i,b_i \in \{0,1\}$,
    \begin{equation}
        X_1^{a_{39}}Z_1^{b_{39}}\otimes X_2^{a_{31}}Z_2^{b_{31}}\otimes X_3^{a_{26}}Z_3^{b_{26}}\otimes X_4^{a_{21}}Z_4^{b_{21}}\otimes X_5^{a_{16}}Z_5^{b_{16}}\otimes X_6^{a_{11}}Z_6^{b_{11}}\otimes 
        X_7^{a_{2}}Z_7^{b_{2}}\:.
        \label{eq:pauli_init_frame_H}
    \end{equation}
    \item \textbf{Orange boxes} correspond to ancilla qubits initialization errors with $a_i,b_i \in \{0,1\}$,
    \begin{equation}
        X_8^{a_{1}}Z_8^{b_{1}}\otimes X_9^{a_{6}}Z_9^{b_{6}}\:.
    \end{equation}
    For the error model being under consideration $a_1=b_6=0$.
  
    \item \textbf{Blue boxes} correspond to two-qubit depolarizing channel errors applied after each two-qubit $\mathrm{CNOT}$ or $CH$ gate. The set of errors is given by
    \begin{equation}
        X_7^{a_4}Z_7^{b_4} \otimes X_8^{a_3}Z_8^{b_3}\quad ,\quad X_8^{a_8}Z_8^{b_8} \otimes X_9^{a_9}Z_9^{b_9} \quad ,\quad X_8^{a_{13}}Z_8^{b_{13}} \otimes X_6^{a_{14}}Z_6^{b_{14}} \quad , \quad  \cdots \quad,
    \end{equation}
    for $a_i,b_i \in \{0,1\}$.
    \item \textbf{White boxes} correspond to idling errors given implemented via single-qubit depolarizing channels. Note that for idling errors starting at step $t_i$ and ending at step $t_j$, they can effectively be sampled from a depolarizing channel $\mathcal{D}_1^{\otimes n}(\rho)$ with $n=j-i+1$, see Eq.~(\ref{eq:dep_idling}). The set of errors is given by
    \begin{equation}
        X_1^{a_{40}}Z_1^{b_{40}} \quad , \quad  X_2^{a_{32}}Z_2^{b_{32}} \quad , \quad X_3^{a_{35}}Z_3^{b_{35}}  \quad , \quad X_3^{a_{27}}Z_3^{b_{27}} \quad , \quad \cdots \quad,
    \end{equation}  
    for $a_i,b_i \in \{0,1\}$.
\end{itemize}
The final propagated Clifford error $C_\mathrm{prop}$ has an analytic form that can be written as the product
\begin{equation}
C_\mathrm{prop}=E_{9}E_{8}\cdots E_{1}\:.
\label{eq:Cprop_app}
\end{equation} 
Each of the errors $E_{i}$ for $i=1,\dots,7$ is given by
\begin{equation}
    E_7=(X_7^{a_4 + a_5}Z_7^{b_4 + b_5}H_7^{a_1}X_7^{a_2}Z_7^{b_2})(C_8Y_7)^{a_2+b_2} \:,
\end{equation}
\begin{equation}
    E_6=(X_6^{a_{14} + a_{15}}Z_6^{b_{14}+b_{15}}H_6^{a_1 + a_3 + a_8}X_6^{a_{11} + a_{12}}Z_6^{b_{11} + b_{12}}\big)\big(C_8Y_6\big)^{a_{11} + a_{12} + b_{11} + b_{12}} \:,
\end{equation}
\begin{equation}
    E_5=\big(X_5^{a_{19} + a_{20}}Z_5^{b_{19} + b_{20}}H_5^{a_1 + \sum_{i = 1}^3 a_{5i - 2}}X_5^{a_{16} + a_{17}}Z_5^{b_{16} + b_{17}}\big)(C_8Y_5)^{a_{16} + a_{17} + b_{16} + b_{17}} \:,
\end{equation}
\begin{equation}
    E_4= (X_4^{a_{24} + a_{25}}Z_4^{b_{24} + b_{25}}H_4^{a_1 + \sum_{i = 1}^4 a_{5i - 2}}X_4^{a_{21} + a_{22}}Z_4^{b_{21} + b_{22}}\big)(C_8Y_4)^{a_{21} + a_{22} + b_{21} + b_{22}} \:,
\end{equation}
\begin{equation}
    E_3=(X_3^{a_{29} + a_{30}}Z_3^{b_{29} + b_{30}}H_3^{a_1 + \sum_{i = 1}^5 a_{5i - 2}}X_3^{a_{26} + a_{27}}Z_3^{b_{26} + b_{27}}\big)(C_8Y_3)^{a_{26} + a_{27} + b_{26} + b_{27}} \:,
\end{equation}
\begin{equation}
    E_2=(X_2^{a_{34} + a_{35}}Z_2^{b_{34} + b_{35}}H_2^{a_1 + \sum_{i = 1}^6 a_{5i - 2}}X_2^{a_{31} + a_{32}}Z_2^{b_{31} + b_{32}}\big)(C_8Y_2)^{a_{31} + a_{32} + b_{31} + b_{32}} \:,
\end{equation}
\begin{equation}
    E_1=(X_1^{a_{42}}Z_1^{b_{42}}H^{a_1 + \sum_{i = 1}^6a_{5i-2} + a_{36}}X_1^{a_{39} + a_{40}}Z_1^{b_{39} + b_{40}})(C_8Y_1)^{a_{39} + a_{40} + b_{39} + b_{40}} \:.
\end{equation}
The contribution for the ancillas $E_8$ and $E_9$ is
\begin{equation}
    E_9=X_9^{a_6 + a_7 + a_9 + a_{10} + \sum_{i = 2}^6a_{5i-2} + a_{37} + a_{38}}Z_9^{b_6 + b_7+b_9+b_{10} + b_{37} + b_{38}} \:,
\end{equation}
\begin{equation}
\begin{split}
   E_8 &= X_8^{a_1 + \sum_{i = 1}^6a_{5i-2} + a_{36} + a_{41}}Z_8^{b_1 + a_2 + a_3 + b_8 + b_9 + b_{10} \sum_{i = 1}^5 a_{5i + 6} + a_{5i + 7} + b_{5i + 8} + b_{36} + a_{39} + a_{40} + b_{41}} \times\\
   &\quad \quad \times S_8^{a_2 + b_2 + \sum_{i = 1}^5 a_{5i + 6} + a_{5i + 7} + b_{5i + 6} + b_{5i + 7} + a_{39} + a_{40} + b_{39} + b_{40}}\:.
\end{split}
\end{equation}
In a numerical simulation, given a sampled initialization Pauli error frame (\ref{eq:pauli_init_frame_H}), the remaining of the coefficients $a_i,b_i$ can be sampled from the respective probability distributions dictated by their type (color and position following Fig.~\ref{fig:error_prop_sim}). Therefore, since the measurement $\bar{H}$ gadget acts trivially on the noiseless encoded state, all the non-Clifford gates can be effectively removed from the circuits at the expense of introducing stochastic Clifford errors $C_\mathrm{prop}$ acting on a noiseless logical magic state. The numerical results using this technique in a stabilizer-based emulator matches the numerical results obtained via state-vector simulations; see Figs.~\ref{fig:stim-vs-cirq-numerical-results1} and~\ref{fig:stim-vs-cirq-numerical-results2}.


\section{Error propagation algorithm}
\label{app:error_prop_alg}

We describe an algorithm to propagate sampled circuit-level errors to an end-of-circuit Clifford error in Monte Carlo simulations for MSP protocols. Implementation of such an algorithm becomes necessary when explicitly computing an end-of-circuit Clifford error distribution (as in Appendix~\ref{app:prep_H} or Sec.~\ref{subsec:transv-non-Cliff}) is inconvenient.

We consider a standard circuit-level noise model as described in Appendix~\ref{app:error_model}. Intuitively, our propagation algorithm works by pushing circuit-level errors through each round of (PSC or stabilizer) measurement according to predefined propagation rules (see Fig.~\ref{fig:propagation_rules_visual} as an example) to iteratively build a propagated Clifford error circuit  [Fig.~\ref{fig:propagation_alg_visual}]. For instance, propagating errors through $t$ measurement rounds of the protocol circuit will result in some intermediate propagated error circuit $C_\mathrm{prop}(t)$ with gate count $g_t$. Then, to compute $C_\mathrm{prop}(t+1)$, each of the $g_t$ gates of $C_\mathrm{prop}(t)$ are propagated past the $(t+1)$st measurement round. Subsequently, any sampled circuit-level errors in each measurement round are appended to the propagated error circuit. This process is repeated until the last round of measurements is performed. For protocols that measure transversal logical PSCs, this is accomplished  by applying propagation rules as per Theorem~\ref{thm:propagation_psc}, Proposition~\ref{prop:Cliff_prop}, and Eq.~\eqref{eq:same-target-diff-control-commutation}. We can proceed similarly for protocols based on code-switching by using the appropriate propagation rules for an available transversal logical non-Clifford gate.

We now give a formal presentation of our error propagation algorithm followed by an analysis of its time and space complexity; see Algorithm~\ref{alg:error_prop_alg} for the pseudocode. 

\begin{algorithm}[H]
    \DontPrintSemicolon
    \SetKwInOut{Input}{Input} 
    \SetKwInOut{Output}{Output}
    \SetKwFunction{PropError}{PropagateCliffordError}

    \Input{
        $M_\mathrm{sched} = \{R_i\}_{i \in [r]}$ - measurement schedule \\
        $P_I$ - logical state initialization error \\
        $\{P_j^{(i)}\}$ - circuit-level errors \\
        $P_M$ - measurement error
    }
    \Output{
        An end-of-circuit propagated error $C_\mathrm{prop}$, stored as an ordered list of Clifford gates.
    }
    \caption{Error Propagation Algorithm}
    \label{alg:error_prop_alg}

    \BlankLine
    \tcc{Initialize $C_\mathrm{prop}$ to $P_I$, noise due to logical state initialization [Appendix~\ref{sub:initialization}]}
    $C_\mathrm{prop} \gets P_I$\;
    \tcc{Iterate through logical PSC/stabilizer measurement rounds of measurement schedule}
    \For{$R_i$ in $M_\mathrm{sched}$} { \label{line:outer_loop}
        \tcc{Each $R_i = (U_1^{(i)},P_1^{(i)},U_2^{(i)},P_2^{(i)},\dots)$}
        \For{$G$ in $R_i$} { \label{line:inner_loop}
            \tcc{Propagate and append errors, alternatingly}
            \If{$G$ noiseless gate} {
                \tcc{Propagate current Clifford error $C_\mathrm{prop}(t)$ circuit through noiseless gate $U_j^{(i)}$}
                $C_\mathrm{prop} \gets $ \FuncSty{PropagateCliffordError}($C_\mathrm{prop}$, $G$)\;
            } 
            \ElseIf {$G$ noisy layer} {
                \tcc{Append sampled circuit-level noise $G$ to the end of $C_\mathrm{prop}$}
                $C_\mathrm{prop} \gets C_\mathrm{prop}$ + $G$\;
            }
        }
    }
    \Return {$C_\mathrm{prop}$}
    \BlankLine
    \BlankLine
    \SetKwProg{Fn}{Function}{:}{}
    \Fn{\PropError{$C_\mathrm{prop}$,$G$}}{
        $C_\mathrm{new} \gets \{\}$\;
        \For{$C_i$ in $C_\mathrm{prop}$} {
            \tcc{Propagate Clifford error gate $C_i$ past a protocol circuit gate $G$}
            $C_\mathrm{new} \gets $ \FuncSty{PropagateError}($C_i$, $G$) $ + \ C_\mathrm{new}$
        }
    }
    \Return {$C_\mathrm{new}$}\;
\end{algorithm}

\vspace{2mm}

\begin{figure}[!h]
    \centering
    \includegraphics[width=0.9\textwidth]{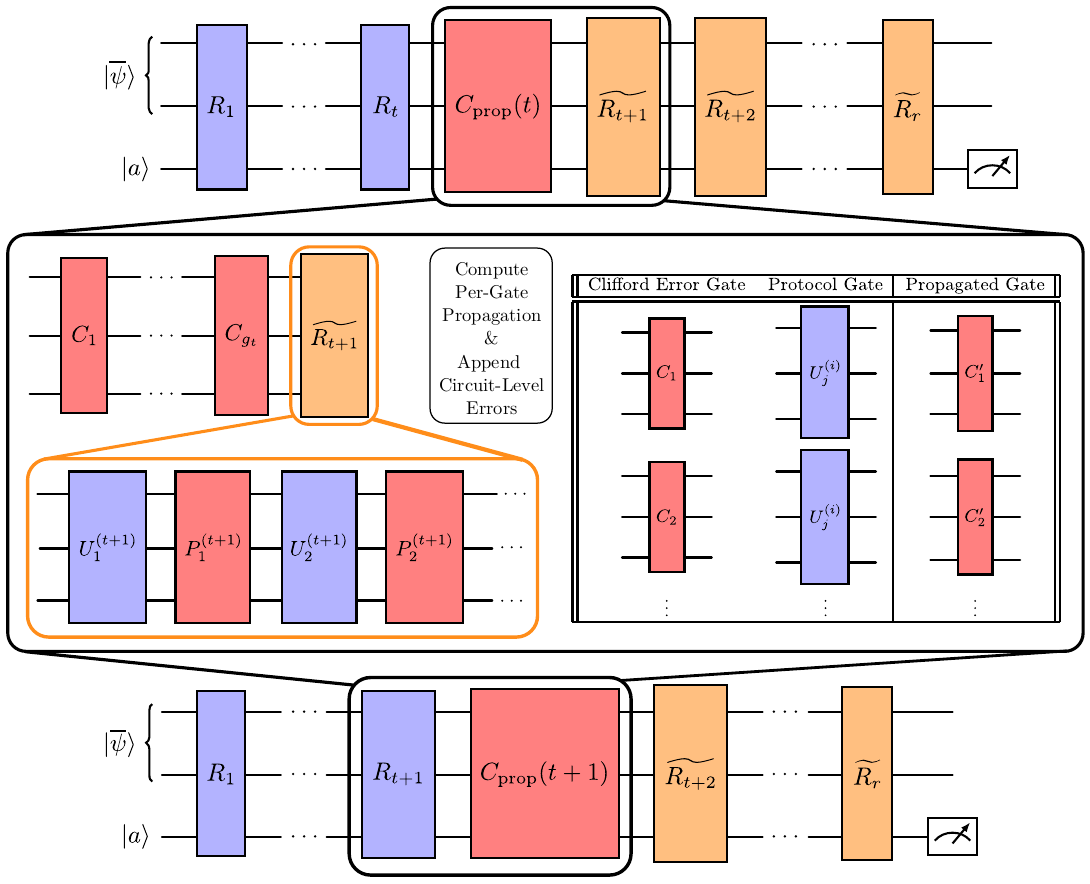}
    \caption{\textbf{Error propagation algorithm}. Circuit-level errors are pushed from left to right through the protocol circuit. For a given round of measurement $t$, an interim Clifford error circuit $C_\mathrm{prop}(t)$ is propagated past a subsequent noisy measurement round $\widetilde{R_{t+1}}$, where each such measurement round consists of noiseless one- or two-qubit gates $U_j^{(t+1)}$ and circuit-level noise $P_j^{(t+1)}$. For each of the gates $C_i$ in $C_\mathrm{prop}(t)$, a propagated error is retrieved from a precomputed look-up table that takes in as input $C_i$ and $U_j^{(t+1)}$ and returns the propagated gate $C_i'=U_j{^{(t+1)}}^\dag C_i U_j^{(t+1)}$. Then, sampled circuit-level errors $P_j^{(t+1)}$ are appended. Repeating this process for all $U_j^{(t+1)}$ and $P_j^{(t+1)}$ comprising a measurement round $\widetilde{R_{t+1}}$ gives the propagated Clifford error $C_\mathrm{prop}(t+1)$ and leaves a noiseless measurement round $R_{t+1}$.}
    \label{fig:propagation_alg_visual}
\end{figure}

We assume the protocol circuit acts on $n_\mathrm{tot}$ qubits with a measurement schedule $M_\mathrm{sched}$ consisting of $r = r_L + r_S$ rounds of logical PSC ($r_L$) and stabilizer ($r_S$) measurement and has $M$ error locations. We denote each of these $r$ measurement rounds by $R_i$ when \textit{noiseless} and by $\widetilde{R_i}$ when \textit{noisy} [Fig.~\ref{fig:propagation_alg_visual}]. Additionally, we assume the parity check matrix $H$ for the QEC code on which the logical magic state is being prepared has maximum row weight $w_c$ and maximum column weight $w_q$. We iteratively propagate past measurement rounds $R_i$ in the measurement schedule $M_\mathrm{sched}$, where each round $R_i$ consists of noiseless gate layers $U_j^{(i)}$ and noise layers $P_j^{(i)}$.

If a given measurement round corresponds to stabilizer measurements, the number of gates used is proportional to the number of nonzero elements of the parity check matrix, which is at most $n_{\mathrm{tot}}w_q$. If the measurement round corresponds to a PSC measurement, by the assumption that each $V_i$ has weight of $O(1)$, the number of gates we need is at most $n_{\mathrm{tot}}$. Either way, the number of gates for each round is bounded by $n_{\mathrm{tot}}w_q$.

At any given time, we are propagating a Clifford error $C_\mathrm{prop}(t)$ past a noisy measurement round $\widetilde{R_{t+1}}$. First, given there are $M$ error locations in the protocol circuit, we assume it takes $O(M)$ time to sample these errors and adds at most $M$ single-qubit Pauli gates to the propagated Clifford error. Then, from Corollary~\ref{corr:standard-protocol-full-error-prop} we know that the gate complexity of $C_\mathrm{prop}(t)$ is $g_t = O\left(Mt^2w_q(w_c + \ell)\right)$ (for $\ell=O(n_\mathrm{tot})$ the number of tensor factors in a logical PSC $\bar{U}$ [Definition~\ref{def:transversal-psc}]). Hence, to compute $C_\mathrm{prop}(t+1)$, we must propagate each of these $g_t$ gates past the measurement round $\widetilde{R_{t+1}}$ [Alg.~\ref{alg:error_prop_alg}, ln.~\ref{line:outer_loop}]; this is accomplished by propagating $C_\mathrm{prop}(t)$ past noiseless gates $U_{t+1}^{(i)}$ and appending (potentially trivial) sampled errors $P_{t+1}^{(i)}$ that comprise $\widetilde{R_{t+1}}$ [Alg.~\ref{alg:error_prop_alg}, ln.~\ref{line:inner_loop}].

Propagating each error gate $C_i$ of the $g_t$ gates of $C_\mathrm{prop}(t)$ past a noiseless gate $U_{t+1}^{(i)}$ is performed by accessing a precomputed look-up table [Fig.~\ref{fig:propagation_alg_visual}]. This look-up table takes in as input $C_i$ and $U_{t+1}^{(i)}$ and retrieves a propagated error $C_i' = U_{t+1}{^{(i)}}^{\dag} C_i U_{t+1}^{(i)}$ in $O(1)$ time. The propagated error $C_i'$ contributes at most $O(1)$ additional gates to the propagated error. We then append the sampled error $P_{t+1}^{(i)}$ which also takes $O(1)$ time. Across all $O(n_{\mathrm{tot}}w_q)$ gates in $\widetilde{R_{t+1}}$, propagation takes $O\left(Mt^2n_{\mathrm{tot}}w_q^2(w_c + \ell)\right)$ time. Extrapolating to all $r$ measurement rounds, propagation takes time 
\begin{equation}
    T_\mathrm{prop}(n_\mathrm{tot},r,w_q,w_c,\ell) = \sum_{t=1}^{r}O\left(Mt^2w_q^2n_\mathrm{tot}(w_c+\ell)\right) = O\left(Mr^3w_q^2n_\mathrm{tot}(w_c + \ell)\right)\:,
\end{equation}
and results in an end-of-circuit Clifford error taking space 
\begin{equation}
    S_\mathrm{prop}(n_\mathrm{tot},r,w_q,w_c,\ell) = O\left(Mr^2w_q(w_c + \ell)\right)\:,
\end{equation}
if stored explicitly as a list of one- and two-qubit gates and the qubits they act on. 

Alternatively, if a tableau representation for the propagated Clifford error is maintained, the space complexity may be reduced to $O(n_\mathrm{tot}^2)$. However, we found that storing the propagated Clifford error as a tableau representation makes it cumbersome to propagate past the non-Clifford gates in a systematic way. This is because our propagation rules are defined with respect to a set of Clifford gates. In order to apply these rules, the tableau representation would need to be compiled into these specific gates, apply the propagation rule, and then combined back into a tableau representation. A more direct way of updating the tableau representation past the non-Clifford gate would be thus desirable.


\section{Simulation via stabilizer rank decomposition}
\label{app:stab_rank_sim}

In this Appendix, we revisit the \textit{phase-sensitive} Clifford simulation method~\cite{Bravyi:2018ugg}, providing an alternative to the phase-insensitive simulation method introduced in Sec.~\ref{sec:fid_method} (which is based on the Pauli rank decomposition of the magic state).

From the discussion in Sec.~\ref{sec:canonical_family}, it follows that any circuit-level Pauli noise propagates to a Clifford error at the end of the circuit. More precisely, the state obtained after the error propagation is of the following form:
\begin{equation}
    C_{\textrm{prop}}(|\bar{\phi}\rangle|+\ldots +\rangle),
\end{equation}
where $|\bar{\phi}\rangle$ is the magic state encoded in the code subspace and $|+\ldots +\rangle$ represent the syndrome register, whose measurement is deferred to the end. We envision measuring the syndrome register, applying a correction (or postselection) to the noisy code state, and then computing the overlap of the resulting state with $|\bar{\phi}\rangle$. (We note that the syndrome register may contain additional qubits that are part of the flag gadgets.)

Without loss of generality, let the measurement basis be $\{|\tilde{x}\rangle: x\in \mathbb{F}_2^m \}$, where $|\tilde{x}\rangle = (\otimes_{i=1}^m H_i)|x\rangle$ and $U_x$ be the corresponding Pauli correction operation.\footnote{In principle, $U_x$ can be chosen as a Clifford correction. However, it is unclear if that leads to an advantage over Pauli correction.} (Here we are assuming that the syndrome register has $m$ qubits.) The state obtained after the correction is a density matrix of the following form
\begin{equation}
    \rho = \sum_x (U_x\otimes \langle\tilde{x}|) C_{\textrm{prop}}(|\bar{\phi}{\rangle\langle} \bar{\phi}| \otimes |+\ldots +{\rangle \langle} +\ldots +|) C_{\textrm{prop}}^{\dagger} (U_x^{\dagger}\otimes |\tilde{x}\rangle).
\end{equation}
Thus, the overlap between $\rho$ and $|\phi\rangle$ becomes 
\begin{equation}
    \langle \bar{\phi}|\rho |\bar{\phi}\rangle = \sum_x p_x |\langle \bar{\phi}|\tilde{\phi}_x\rangle|^2, \label{eq:overlap_stab_rank}
\end{equation}
where
\begin{equation}
    \begin{aligned}
        p_x &= \|(U_x\otimes \langle \tilde{x}|) C_{\textrm{prop}}(|\bar{\phi}\rangle\otimes |+\ldots +\rangle)\|^2, \\
        |\tilde{\phi}_x\rangle &= (U_x\otimes \langle \tilde{x}|) C_{\textrm{prop}}(|\bar{\phi}\rangle\otimes |+\ldots +\rangle) / \sqrt{p_x}.
    \end{aligned}
\end{equation}

Therefore, estimation of the overlap is reduced the problem of (i) sampling over the probability distribution $\{p_x: x\in \mathbb{F}_2^m \}$ and (ii) estimating the overlap $|\langle \bar{\phi}| \tilde{\phi}_x\rangle|^2$. Both problems can be solved using the stabilizer rank decomposition of $|\phi\rangle$. Without loss of generality, let $|\bar{\phi}\rangle = \sum_{i=1}^q \alpha_i |s_i\rangle$, where $\{|s_i\rangle \}$ is a set of normalized stabilizer states, $\{\alpha_i\}$ is a set of complex amplitudes, and $q$ is the stabilizer rank. Importantly, the stabilizer rank remains unchanged under stabilizer operations, e.g., Clifford unitary and measurement of Paulis. Moreover, the state can be updated efficiently under these operations. As we show below, these facts let us solve the aforementioned problems.

Let us first briefly note that the stabilizer rank decomposition lets us efficiently compute overlap of two stabilzer states, even including the phase. This is possible because any stabilizer state can be written in the so called CH-form~\cite{Bravyi:2018ugg}. Let $n_{\mathrm{tot}}$ be the total number of qubits. The CH form of the stabilizer state is
\begin{equation}
    |\Omega \rangle =\omega U_CU_H|s\rangle\:,
\end{equation}
with a complex global phase $\omega$, a `control'-type Clifford $U_C$ generated by $\{S,CX,CZ\}$, a `Hadamard'-type Clifford $U_H$ generated by $\{I,H\}$, and a computational basis vector $|s\rangle \in \{0,1\}^{\otimes n_\mathrm{tot}}$. Upon  application of a `control'-type Clifford gate, the CH-form of any state $|\Omega\rangle$ can be updated in $O(n_\mathrm{tot})$ time. Moreover, upon the application of a `Hadamard'-type gate or a projective Pauli measurement, the CH-form of any state $|\Omega\rangle$ can be updated in $O(n_\mathrm{tot}^2)$ time; see \cite[Proposition 4]{Bravyi:2018ugg}. Using this update rule, we can compute the overlap as follows. Let $|\Omega'\rangle = \omega' U_C' U_H'|s\rangle$ be another stabilizer state in a CH-form. Using the update rule, $\langle \Omega'| \Omega\rangle = \tilde{\omega} \langle s' | \tilde{U}_C \tilde{U}_H|s\rangle$ for some complex number $\tilde{\omega}$ and Cliffords $\tilde{U}_C \tilde{U}_H$. $\tilde{U}_C$ can be absorbed into $\langle s'|$, updating it into a new basis vector and a phase. The matrix element of the Hadamard-type Clifford can be computed straightforwardly at this point. Thus $\langle \Omega'|\Omega\rangle$ can be computed in time polynomial in $n_{\mathrm{tot}}$.

Using this overlap estimation as a subroutine, we now describe how to (i) sample over $\{p_x\}$ and (ii) compute $|\langle \bar{\phi}|\tilde{\phi}_x\rangle|^2$. The sampling can be done by recursively sampling each bit in the following way. Without loss of generality, consider measuring one of the qubits in the syndrome register. The postmeasurement state for both outcomes can be written as a $q$-fold linear combination of stabilizer states. In particular, the norm of each postmeasurement state can be estimated from $O(q^2)$ overlap estimations. Once these norms are estimated, one can sample over this binary probability distribution and pick the corresponding postmeasurement state. Since the stabilizer rank of the postmeasurement state is still at most $q$, this process can be repeated with the same complexity until the entire syndrome register is measured. Computing $|\langle \bar{\phi}|\tilde{\phi}_x\rangle|^2$ can be done in a similar way. Note that the stabilizer rank of the postmeasurement state is still at most $q$. Therefore, the overlap can be again computed using $O(q^2)$ overlap estimations.

To summarize, using the stabilizer rank decomposition, we can estimate the logical error rate in the following way. First, sample over the circuit-level noise and propagate it to an end-of-circuit Clifford error using the method in Sec.~\ref{sec:controlled_clifford}. Next, estimate Eq.~\eqref{eq:overlap_stab_rank} by averaging $|\langle\bar{\phi}|\tilde{\phi}_x\rangle|^2$ over the probability distribution $\{p_x\}$. Since all the sampling, error propagation, and overlap estimation can be done efficiently (insofar as $q$ is small), the logical error rate can be estimated efficiently.

One advantage of the stabilizer rank decomposition method, compared to the Pauli rank method [Section~\ref{sec:fid_method}], is its smaller variance. The Pauli rank-based method has $\Theta(1)$ variance. However, for the stabilizer rank based method, the logical error rate is an average of 1-$|\langle\bar{\phi}|\tilde{\phi}_x\rangle|^2 \in [0, 1]$. If the mean value is $p_L$, the variance is bounded from above by $p_L(1-p_L)$, which can be substantially smaller than that of the Pauli rank-based method when $p_L \ll 1$. Indeed, if $X$ is a random variable in $[0,1]$, the variance is $\mathbb{E}[X^2] - \mathbb{E}[X]^2 \leq \mathbb{E}[X] - \mathbb{E}[X]^2=p_L(1-p_L)$ (Here $\mathbb{E}$ is the expected value). However, from an implementation perspective, the stabilizer rank method is arguably more complicated. For future work, it will be desirable to compare the practical efficiencies of the two methods.

\subsection{Pauli rank and stabilizer rank are not related}
Although Pauli rank and stabilizer rank are both suitable nonstabilizerness metrics, it was shown in Ref.~\cite{Bu:2019qed} that one can always find an $n$-qubit state for which these nonstabilizerness metrics are exponentially-separated and, therefore, not related. In particular, consider the state $\ket{\psi_n} \propto \ket{0}^{\otimes n} + \ket{+}^{\otimes n}$. Clearly this state has stabilizer rank $q=2$ for all such $n$-qubit states. Additionally, we know that the Pauli rank of a stabilizer state is exactly $2^n$ (this can be shown by writing the density matrix of a stabilizer state $\ket{\phi}$ as a product of $n$ projectors onto +1-eigenspaces of stabilizer operators for $\ket{\phi}$). Since $\ket{\psi_n}$ is not a stabilizer state, its Pauli rank must be strictly greater than $2^n$. In fact, one can show that the Pauli rank of $\ket{\psi_n}$ is at least $p\geq 3^n-1$, hence providing a nontrivial example of a state with an exponential gap between its Pauli rank and stabilizer rank.


\end{document}